\let\oldtocsection=\tocsection
\let\oldtocsubsection=\tocsubsection
\let\oldtocsubsubsection=\tocsubsubsection
\renewcommand{\tocsection}[2]{\hspace{0em}\oldtocsection{#1}{#2}}
\renewcommand{\tocsubsection}[2]{\hspace{1em}\oldtocsubsection{#1}{#2}}
\renewcommand{\tocsubsubsection}[2]{\hspace{2em}\oldtocsubsubsection{#1}{#2}}
\numberwithin{equation}{section}
\theoremstyle{definition}
\newtheorem{definition}[equation]{Definition}
\newtheorem{example}[equation]{Example}
\newtheorem{proposition}[equation]{Proposition}
\newtheorem{theorem}[equation]{Theorem}
\newtheorem{remark}[equation]{Remark}
\newtheorem{corollary}[equation]{Corollary}
\newtheorem{lemma}[equation]{Lemma}
\numberwithin{equation}{section}
\newcommand{\midwedge}{\text{\Large$\wedge$}}
\newcommand{\midodot}{\text{\Large$\odot$}}
\newcommand{\dorf}{{\tt D}}
\newcommand{\cour}{{\tt C}}
\newcommand{\dual}{{\textrm{\tiny$\vee$}}}
\newcommand{\LC}{{\tt LC}}
\newcommand{\DFTLie}{{\boldsymbol\pounds}}
\newcommand{\be}{\begin{equation}}
\newcommand{\ee}{\end{equation}}
\def\beqa{\begin{eqnarray}}
\def\eeqa{\end{eqnarray}}
\def\bean{\begin{eqnarray*}}
\def\eean{\end{eqnarray*}}
\newcommand{\R}{\mathbb{R}}
\newcommand{\de}{\mathrm{d}}
\newcommand{\cdo}{\mathrm{\mathsf{CDO}}}
\newcommand{\IZ}{\mathbb{Z}}
\newcommand{\IN}{\mathbb{N}}
\newcommand{\IR}{\mathbb{R}}
\newcommand{\IT}{\mathbb{T}}
\newcommand{\IA}{\mathsf{At}}
\newcommand{\frX}{\mathfrak{X}}
\newcommand{\frso}{\mathfrak{so}}
\renewcommand{\Im}{\ensuremath{\mathsf{Im}}}
\renewcommand{\ker}{\ensuremath{\mathsf{Ker}}}
\newcommand{\cI}{{\mathcal I}}
\newcommand{\cN}{{\mathcal N}}
\newcommand{\cM}{{\mathcal M}}
\newcommand{\cS}{{\mathcal S}}
\newcommand{\cH}{{\mathcal H}}
\newcommand{\cA}{{\mathcal A}}
\newcommand{\cQ}{{\mathcal Q}}
\newcommand{\cF}{{\mathcal F}}
\newcommand{\cD}{{\mathcal D}}
\newcommand{\ccT}{{\mathscr T}}
\newcommand{\sfa}{{\mathtt{a}}}
\newcommand{\sfi}{{\mathtt{i}}}
\newcommand{\sfp}{{\mathtt{p}}}
\newcommand{\sfq}{{\mathtt{q}}}
\newcommand{\unit}{\mathds{1}}   			
\begin{document}

\title[Algebroids, AKSZ Constructions and Doubled Geometry]{Algebroids, AKSZ Constructions and Doubled Geometry}

\author[V.~E.~ Marotta]{Vincenzo Emilio Marotta}
\address[Vincenzo Emilio Marotta]
{Department of Mathematics and Maxwell Institute for Mathematical
  Sciences\\ Heriot-Watt
  University\\ Edinburgh EH14 4AS\\ United Kingdom}
\email{vm34@hw.ac.uk}

\author[R.~J. Szabo]{Richard J.~Szabo}
  \address[Richard J.~Szabo]
  {Department of Mathematics, Maxwell Institute for Mathematical Sciences and Higgs Centre for Theoretical Physics\\
  Heriot-Watt University\\
  Edinburgh EH14 4AS \\
  United Kingdom}
  \email{R.J.Szabo@hw.ac.uk}

\vfill

\begin{flushright}
\footnotesize
{\sf EMPG--21--05}
\normalsize
\end{flushright}

\vspace{1cm}

\begin{abstract}
We give a self-contained survey of some approaches aimed at a global description of the geometry underlying double field theory. After reviewing the geometry of Courant algebroids and their incarnations in the AKSZ construction, we develop the theory of metric algebroids including their graded geometry. We use metric algebroids to give a global description of doubled geometry, incorporating the section constraint, as well as an AKSZ-type construction of topological doubled sigma-models. When these notions are combined with ingredients of para-Hermitian geometry, we demonstrate how they reproduce kinematical features of double field theory from a global perspective, including solutions of the section constraint for Riemannian foliated doubled manifolds, as well as a natural notion of generalized T-duality for polarized doubled manifolds. We describe the $L_\infty$-algebras of symmetries of a doubled geometry, and briefly discuss other proposals for global doubled geometry in the literature.
\end{abstract}

\maketitle

\begin{center}
{\sl\small Contribution to the Special Issue of Complex Manifolds on `Generalized Geometry'}
\end{center}

\medskip

{
\tableofcontents
}


\section{Introduction}
This contribution is a relatively self-contained survey of some mathematical approaches to a rigorous global formulation of the geometry underlying double field theory, that we will colloquially call `doubled geometry', following standard terminology from string theory (more precise definitions will be given in Section~\ref{sec:doubledgeom}). Double field theory is an extension of supergravity in which stringy T-duality becomes a manifest symmetry. The basic example of a doubled geometry in this context comes from considering toroidal compactifications of string theory, which we shall now briefly review.

\medskip

\subsection{T-Duality and Doubled Geometry} ~\\[5pt]
\label{subsec:introTduality}
Let $V$ be a $d$-dimensional real vector space, and let $\Lambda$ be a lattice of $V$. The symmetry group of string theory with target space the $d$-dimensional affine torus $\cQ=V/\Lambda$ is isomorphic to the integer split orthogonal group ${\sf O}(d,d;\IZ)$; it preserves a flat split signature metric $\eta$ induced by the canonical pairing between the lattice $\Lambda\subset V$ and its dual lattice $\Lambda^*\subset V^*$. This contains the geometric subgroup ${\sf GL}(d,\IZ)\subset{\sf O}(d,d;\IZ)$ generated by large diffeomorphisms of the torus $\cQ$, while the rest of the group is generated by T-dualities combined with integer shifts of the Kalb-Ramond $B$-field which are not geometric symmetries of $\cQ$. However, T-duality does act geometrically on the doubled torus $M:=(V\oplus V^*)/(\Lambda\oplus\Lambda^*)\simeq\cQ\times\tilde\cQ$, where $\tilde\cQ=V^*/\Lambda^*$ is the dual torus: ${\sf O}(d,d;\IZ)$ is a subgroup of the group of large diffeomorphisms ${\sf GL}(2d,\IZ)$ of $M$. In this sense string theory ``sees'' a doubled geometry.

Let $\sfq:M\to\cQ$ and $\tilde\sfq:M\to\tilde\cQ$ be the canonical projections. The doubled torus $M$ has a canonical symplectic form $\omega$ when viewed as the dual torus bundle $\sfq:M\to\cQ$, and a pair of involutive Lagrangian distributions $L_+=\ker(\de\tilde\sfq)$ and $L_-=\ker(\de\sfq)$ (i.e. real polarizations of $(M,\omega)$) such that $TM\simeq L_+\oplus  L_-$. As we will discuss in Section~\ref{sec:parahermgeom}, this is a simple example of a `para-K\"ahler manifold'. Then there is a pair of Lagrangian fibrations
\begin{equation}\label{eq:correspondence}
\begin{tikzcd}
 & M \arrow[dl,"\sfq",swap] \arrow[dr,"\tilde\sfq"] & \\
 \cQ &  & \tilde\cQ
\end{tikzcd}
\end{equation}
which yields a Lagrangian correspondence between the torus $\cQ$ and its dual torus $\tilde\cQ$; this defines a T-duality which swaps $\cQ$ with $\tilde\cQ$. Clearly there are different polarizations, corresponding to different choices of splitting $V\oplus V^*$, and in general factorized T-dualities swap only some of the fibre directions.

More generally, if $\pi:\cQ\to W$ is a principal torus bundle whose typical fiber is a $d$-dimensional torus, endowed with a torus-equivariant gerbe with connection on $\cQ$ of curvature $H\in\Omega^3(\cQ)$ (which models the NS--NS $3$-form flux in string theory), then the fibrewise T-duality group acts geometrically on a doubled torus bundle $M\to W$ with fibres of dimension~$2d$~\cite{Hull2005,Belov:2007qj}. If the T-dual is another principal torus bundle $\tilde\pi:\tilde\cQ\to W$, with an equivariant gerbe with connection on $\tilde\cQ$ of curvature $\tilde H\in\Omega^3(\tilde\cQ)$, then the correspondence space of \eqref{eq:correspondence} is homeomorphic to the fibred product $M\simeq\cQ\times_W\tilde\cQ$ with the principal doubled torus fibration $\pi\circ\sfq = \tilde\pi\circ\tilde\sfq:M\to W$. It has a fibrewise non-degenerate $2$-form $\omega\in\Omega^2(M)$ which is invariant under both torus actions on $\cQ$ and $\tilde\cQ$, and which obeys~\cite{Svoboda:2020msh}
\begin{align*}
\de \omega=\sfq^*H-\tilde\sfq^*\tilde H \ . 
\end{align*}
This is an example of a `para-Hermitian fibration' (see Section~\ref{sec:parahermgeom}), and it defines a topological T-duality between the principal torus bundles $\pi:\cQ\to W$ and $\tilde\pi:\tilde\cQ\to W$~\cite{Bouwknegt:2003vb,Cavalcanti2010}. These correspondence spaces were extended to doubled twisted tori in~\cite{Hull:2007jy}, which further double the base $W$, giving examples of `almost para-Hermitian manifolds' (see Section~\ref{sec:parahermgeom}), and provide a geometrization of the non-geometric T-duals that may arise (such as the `T-folds' of~\cite{Hull2005}); see~\cite{Aschieri:2020uqp} for an alternative viewpoint on these constructions in the language of $C^*$-algebra bundles and noncommutative correspondences.

\medskip

\subsection{Supergravity and Courant Algebroids} ~\\[5pt]
Supergravity is the low-energy approximation to string theory. It has long been appreciated that the geometry underlying type~II supergravity is generalized geometry on Courant algebroids~\cite{Hitchin2011,gualtieri:tesi}: the complete bosonic field content (in the NS--NS sector) can be encoded in a generalized metric on an exact Courant algebroid~\cite{Grana2008,Coimbra:2011nw}. Exact Courant algebroids over a manifold $\cQ$ have underlying vector bundle $E\simeq T\cQ\oplus T^*\cQ$ and are classified by the class of the $3$-form $H$-flux in ${\sf H}^3(\cQ,\IR)$~\cite[Letter~1]{Severa-letters} (see Sections~\ref{sec:AKSZ} and~\ref{sec:metricalg}). In this sense generalized geometry ``doubles'' the tangent bundle $T\cQ$, which captures diffeomorphisms and $B$-field gauge transformations as transition functions, and hence are manifest symmetries of supergravity. 

However, factorized T-dualities relate supergravity in different duality frames. This is reflected mathematically in the feature that topological T-duality between principal torus bundles can be implemented, using the correspondence \eqref{eq:correspondence}, as an isomorphism between exact Courant algebroids~\cite{Cavalcanti2010}, but not generally as a symmetry of a single exact Courant algebroid. Hence supergravity is not manifestly T-duality invariant.

\medskip

\subsection{Double Field Theory and Para-Hermitian Geometry} ~\\[5pt]
In double field theory, one instead ``doubles'' the underlying $d$-dimensional manifold $\cQ$ to a manifold $M$ of dimension $2d$, and considers geometry on the tangent bundle $TM$ (see Section~\ref{sec:applications}). What this doubling means exactly will be defined precisely in this paper, but the rough idea is as follows. Double field theory is a constrained theory, whose constraint follows from the level matching condition in string theory. At present this constrained theory is not very well understood, but its reduction under a stronger constraint, called the `section constraint', has been extensively studied. Solving the section constraint amounts to selecting a `polarization' which reduces the geometry on $TM$ to generalized geometry on an exact Courant algebroid. What the doubled geometry of $M$ accomplishes is that its group of (large) diffeomorphisms contains the T-duality group in $d$-dimensions, and in this way T-duality becomes a manifest symmetry of the unconstrained double field theory. In the example of the doubled tori or doubled torus bundles $M\to W$ from Section~\ref{subsec:introTduality}, double field theory on $M$ can be reduced in this way to string theory on a torus or a T-fold~\cite{Hull2005}.

Such a duality covariantization of supergravity, with manifest ${\sf O}(d,d)$ symmetry, was suggested some time ago by Siegel~\cite{Siegel1993a,Siegel1993b}. A theory with manifest ${\sf O}(d,d;\IZ)$ symmetry was later shown to arise naturally as a consequence of string field theory on a $d$-dimensional torus by Hull and Zwiebach~\cite{HullZw2009}. One of the goals of the programme that we outline in this contribution is to write double field theory on more general doubled manifolds $M$, and to understand the meaning of the doubling for general string target spaces $\cQ$. This can be achieved by using the symmetries and geometry of double field theory to define a particular type of metric algebroid~\cite{Vaisman2012}, which we describe in Section~\ref{sec:doubledgeom} and call a `DFT algebroid' following the terminology of~\cite{Jonke2018}, and encoding the bosonic fields in a generalized metric on a DFT algebroid and their dynamics by the vanishing of a suitable Ricci tensor~\cite{hohmhz,hullzw}. 

In this contribution we aim to describe the geometric origin of the ingredients of double field theory and its section constraint, as well as its precise geometric relation with generalized geometry, in the language of algebroids, which allows us to import techniques and ideas known from the more thoroughly studied Courant algebroids. We will discuss other approaches to global double field theory, and compare them to our perspectives, at appropriate places throughout the paper, together with many more references to the pertinent literature. We focus only on the kinematical aspects of the theory in the present paper.

As alluded to in Section~\ref{subsec:introTduality}, a prominent ingredient in our treatment of doubled geometry is the notion of a para-Hermitian structure, which we discuss in Section~\ref{sec:parahermgeom}, and in particular the formulation of double field theory on almost para-Hermitian manifolds, which we discuss in Section~\ref{sec:applications}. Para-Hermitian geometry can be roughly thought of as a ``real version'' of complex Hermitian geometry. It has proven to be a suitable framework for addressing global issues of doubled geometry, while providing a simple and elegant description of generalized flux compactifications and non-geometric backgrounds in string theory. The relevance of para-Hermitian structures in doubled geometry was originally noticed by Hull~\cite{Hull2005} (who called them `pseudo-Hermitian structures'), and was later put forward in a rigorous framework by Vaisman~\cite{Vaisman2012}. Interest in the formalism was rekindled by Freidel, Leigh and Svoboda~\cite{Freidel2014} which led to some flurry of activity in the literature, see e.g.~\cite{Svoboda2018,SzMar,Mori2019,Hassler:2019wvn}. 

From this modern perspective, para-Hermitian geometry involves developing the interplay between the well-studied geometry on exact Courant algebroids and the less understood geometry on the tangent bundle of an almost para-Hermitian manifold, equipped with the structure of a DFT algebroid. The most prominent examples of almost para-Hermitian manifolds in the literature are total spaces of fibre bundles, such as the cotangent bundle $T^*\cQ$ and the tangent bundle $T\cQ$ of a manifold $\cQ$, group manifolds of doubled Lie groups and Drinfel'd doubles, and the quotients of all these by discrete group actions, which includes the basic doubled torus and doubled twisted torus examples discussed in Section~\ref{subsec:introTduality}.

We mention that para-Hermitian geometry also has a brief history of other applications to physics. Para-K\"ahler structures appear in the special geometry of $\mathcal{N}=2$ vector multiplets in Euclidean spacetimes~\cite{Cortes2004,Cortes:2009cs}. In~\cite{SzMar} it was shown that para-Hermitian geometry offers an alternative geometrical formulation of both Lagrangian and non-Lagrangian dynamical systems which is more natural than the commonly employed Finsler geometry. Generalized para-K\"ahler structures and Born structures also appear respectively in target space geometries for doubled sigma-models with $\cN=(2,2)$ twisted supersymmetry and $\cN=(1,1)$ supersymmetry in~\cite{AbouZeid:1999em,Stojevic:2009ub,Hu:2019zro}.

\medskip

\subsection{Graded Geometry and AKSZ Theory} ~\\[5pt]
In our development of geometry on certain classes of algebroids, we shall consider their incarnations in graded geometry which leads to generalizations of the AKSZ construction of topological field theories. AKSZ sigma-models capture the topological sectors of physical string theory sigma-models for target spaces with background NS--NS fields, such as the $B$-field or the $H$-flux. They are based on the structure maps of algebroids and allow for a quantization of the underlying algebroid through the BV formalism; this is explained in Section~\ref{sec:AKSZ}. They also allow for a better systematic description of the symmetries of algebroids, through their reformulations in terms of dg-manifolds and $L_\infty$-algebras. 

In Section~\ref{sec:AKSZ} we discuss this in some detail for the case of Courant algebroids; in the case of exact Courant algebroids, the corresponding AKSZ sigma-models describe the coupling of closed strings to (geometric and non-geometric) tri-fluxes. In Section~\ref{sec:doubledgeom} we discuss an extension of the AKSZ theory that writes down a topological doubled sigma-model, which unifies geometric and non-geometric fluxes with manifest T-duality invariance~\cite{Jonke2018}. 

Along the way, we present a new version of the correspondence between metric algebroids and graded geometry in Section~\ref{sec:metricalg} (see Theorem~\ref{thm:1-1metric}), which is entirely geometric and avoids any explicit coordinate description. It uses more recent developments on the geometrization of degree~$2$ manifolds based on double vector bundles and VB-algebroids. This lends a more detailed understanding of the gauge symmetries underlying metric algeboids, and their counterparts in double field theory, as well as a clearer connection with other approaches to double field theory based on graded geometry~\cite{Samann2018,Heller2016}. In particular, it provides a more concise picture of the various weakenings of the axioms of a Courant algebroid described in~\cite{Jonke2018} and their role in the geometry of double field theory.

\medskip

\subsection{Outline of the Paper} ~\\[5pt]
The organization of the remainder of this paper is as follows. In Section~\ref{sec:AKSZ} we introduce general notions of algebroids, culminating in Lie algebroids and Courant algebroids. We also develop their formulations as symplectic Lie $n$-algebroids in graded geometry and the corresponding AKSZ sigma-models (for $n=0,1,2$), together with their gauge symmetries which can be formulated in terms of flat $L_\infty$-algebras. In Section~\ref{sec:metricalg} we discuss the weakening of the notion of Courant algebroid to that of a metric algebroid, and give a new geometric formulation of a metric algebroid as a symplectic $2$-algebroid in graded geometry. In Section~\ref{sec:parahermgeom} we discuss basic aspects of para-Hermitian geometry, and in particular we introduce the canonical metric algebroid which plays a central role in the applications to double field theory. In Section~\ref{sec:doubledgeom} we give a rigorous account of doubled geometry, introducing the notion of DFT algebroid. This has a broader notion of gauge symmetry that can be formulated in terms of curved $L_\infty$-algebras, and we demonstrate how the AKSZ construction can be extended to define a topological sigma-model for a doubled geometry. Finally, in Section~\ref{sec:applications} we describe how everything fits together to give a rigorous formulation of some of the main ideas of double field theory, and in particular how to solve the section constraint in a completely geometric and coordinate-independent manner. We give a detailed account of how DFT algebroids reduce to Courant algebroids in different polarizations of a foliated doubled manifold, how a conventional string background, including the NS--NS fields, is recovered in the language of Riemannian foliations, and how T-duality is manifested in this framework.

\medskip

\subsection{Glossary of Notation and Conventions} ~\\[5pt]
$\cQ:$  manifold --- all manifolds are smooth second countable para-compact Hausdorff manifolds of finite and non-zero dimension; 

\noindent $M:$ even-dimensional  manifold;

\noindent $\cM=(\cQ, \cA):$ graded manifold $\cM$ with body $\cQ$ and sheaf of functions $\cA\,;$

\noindent $|\,\cdot\,|:$ degree of a homogeneous element;

\noindent $\cA^k:$ sheaf of homogeneous functions of degree $k\,;$

\noindent $C^{\infty}(\,\cdot\,):$ space of smooth functions on a (graded) manifold;

\noindent $\Omega^\bullet (\,\cdot\,):$ space of differential forms on a (graded) manifold;

\noindent $\frX(\,\cdot\,):$ sheaf of vector fields;

\noindent $\frX_{k}(\,\cdot\,):$ sheaf of homogeneous vector fields of degree $k;$

\noindent $E \rightarrow \cQ:$ vector bundle $E$ over $\cQ $ --- all vector spaces and vector bundles are considered over the ground field $\IR\,;$

\noindent $\mathsf{\Gamma}(E):$ $C^{\infty}(\cQ)$-module of sections of $E\to\cQ\,;$

\noindent ${\sf Aut}(E):$ group of automorphisms of a vector bundle $E\to\cQ$ which cover the identity map $\unit_\cQ:\cQ\to\cQ\,;$

\noindent $(\,\cdot\,)^{\rm t}:$ transpose of a vector bundle morphism;

\noindent $(\,\cdot\,)^\sharp:$ vector bundle morphism $E^*\to E$ induced by a $(2,0)$-tensor in $\mathsf{\Gamma}(E\otimes E)\,;$

\noindent $(\,\cdot\,)^\flat:$ vector bundle morphism $E\to E^*$ induced by a $(0,2)$-tensor in $\mathsf{\Gamma}(E^*\otimes E^*)\,;$

\noindent $\Im(\,\cdot\,):$ range of a vector bundle morphism;

\noindent $\ker(\,\cdot\,):$ kernel of a vector bundle morphism;

\noindent $\langle\,\cdot\,,\,\cdot\,\rangle:$ duality pairing for a vector bundle and its dual;

\noindent $E[k]:$ vector bundle $E$ whose fibres are shifted in degree by $k \in \IZ \, ;$

\noindent ${\sf Span}_\IR(\,\cdot\,):$ $\IR$-linear span of a set of vectors;

\noindent $\odot:$ symmetric tensor product;

\noindent $\wedge:$ skew-symmetric tensor product;

\noindent $[\, \cdot \,  ,\, \cdot \, ]_\circ :$ commutator bracket with respect to the composition $\circ \, ;$ 

\noindent $\iota :$ interior multiplication;

\noindent $\pounds :$ Lie derivative;

\noindent ${\sf Der}(A):$ $A$-module of derivations of a commutative algebra $A\,$;

\noindent $X \cdot  f:$ action of a vector field $X \in \mathsf{\Gamma}(T\cQ)$ as a derivation on a function $f \in C^{\infty}(\cQ)\, .$

\medskip

\subsection{Acknowledgments} ~\\[5pt]
We thank Thomas~Strobl and Marco~Zambon for helpful discussions. R.J.S. thanks the organisors Vicente~Cort\'es, Liana~David and Carlos~Shahbazi for the invitation to deliver a talk in the Workshop ``Generalized Geometry and Applications'' at Universit\"at Hamburg in March 2020, and to contribute to this special issue. The work of V.E.M. is funded by the STFC Doctoral Training Partnership Award ST/R504774/1. The work of R.J.S. was supported by the STFC Consolidated Grant ST/P000363/1.

\section{Leibniz-Loday Algebroids and AKSZ Sigma-Models}
\label{sec:AKSZ}
In this section we will review some well-known material concerning algebroids, graded geometry and the AKSZ construction. The main intent is to develop a fairly self-contained bottoms-up approach to the notion of a Courant algebroid as well as its features and applications in some detail, because later on we will be interested in suitable weakenings of this notion, and we will attempt analogous constructions in those instances. We omit several noteworthy properties and examples of Courant algebroids in this section for brevity, as they will follow as special cases of our more general considerations in Sections~\ref{sec:metricalg} and~\ref{sec:parahermgeom}.

\medskip

\subsection{Algebroids and Leibniz-Loday Algebroids} ~\\[5pt]
\label{subsec:algebroids}
In this paper we use a very broad notion of an `algebroid' which is adapted to all applications that we shall consider. 

\begin{definition}\label{def:algebroid}
An \emph{algebroid} over a manifold $\cQ$ is a vector bundle $E\to \cQ$ equipped with an $\IR$-bilinear bracket $[\,\cdot\,,\,\cdot\,]_E:\mathsf{\Gamma}(E)\times\mathsf{\Gamma}(E)\to \mathsf{\Gamma}(E)$ on its sections, and a bundle morphism $\rho:E\to T\cQ$ covering the identity such that the anchored derivation property
\begin{align}\label{eq:anchorLeibniz}
[e,f\,e']_E = f\,[e,e']_E + \big( \rho(e)\cdot f\big)\,e'
\end{align}
holds for all $e,e'\in\mathsf{\Gamma}(E)$ and $f\in C^\infty(\cQ)$. The map $\rho$ to the tangent bundle of $\cQ$ is called the \emph{anchor map}.

An \emph{algebroid morphism} from an algebroid $(E,[\,\cdot\,,\,\cdot\,]_E,\rho)$ to an algebroid $(E',[\,\cdot\,,\,\cdot\,]_{E'},\rho')$ over the same manifold is a bundle morphism $\psi:E\to E'$ covering the identity such that $\rho'\circ\psi=\rho$ and $\psi\circ[\,\cdot\,,\,\cdot\,]_E = [\,\cdot\,,\,\cdot\,]_{E'} \circ(\psi\times\psi)$.
\end{definition}

Note that here the bracket $[\,\cdot\,,\,\cdot\,]_E$ need not be skew-symmetric and it need not obey the Jacobi identity. Moreover, at this primitive level the only role of the anchor map $\rho$ is to implement the anchored derivation property \eqref{eq:anchorLeibniz} whose meaning is that, for each section $e$ of $E$, $[e,\,\cdot\,]_E$ is a first-order differential operator on $\mathsf{\Gamma}(E)$ whose symbol is the vector field $\rho(e)$ on $\cQ$. Indeed, when $\cQ$ is a point, then an algebroid is simply a vector space with a binary operation. 

This level of generality is needed for our considerations of doubled geometry later on. As we shall discuss throughout this paper, they have natural descriptions via the language of graded geometry in terms of vector fields and local coordinates. For the different flavours of AKSZ constructions as we use them in this paper, we will need a further algebraic condition on the bracket operation in Definition~\ref{def:algebroid}.

\begin{definition}\label{def:Leibnizalgebroid}
A \emph{Leibniz-Loday algebroid} over a manifold $\cQ$ is an algebroid $(E,[\,\cdot\,,\,\cdot\,]_E,\rho)$ whose bracket satisfies the Leibniz identity
\begin{align}\label{eq:Leibnizid}
[e,[e_1,e_2]_E]_E = [[e,e_1]_E,e_2]_E + [e_1,[e,e_2]_E]_E \ ,
\end{align}
for all $e,e_1,e_2\in\mathsf{\Gamma}(E)$.
\end{definition}

When $\cQ$ is a point, then a Leibniz-Loday algebroid is a vector space endowed with the structure of a Leibniz-Loday algebra~\cite{Loday1993}. Generally, the anchored derivation property \eqref{eq:anchorLeibniz} and the Leibniz identity \eqref{eq:Leibnizid} together imply that the anchor map $\rho:E\to T\cQ$ of a Leibniz-Loday algebroid becomes a homomorphism of Leibniz-Loday algebras:
\begin{align}\label{eq:anchorbracket}
\rho([e_1,e_2]_E) = [\rho(e_1),\rho(e_2)]_{T\cQ} \ ,
\end{align}
where $[\,\cdot\,,\,\cdot\,]_{T\cQ}$ is the usual Lie bracket of vector fields on $T\cQ$. 

\medskip

\subsection{AKSZ Constructions} ~\\[5pt]
\label{subsec:AKSZ}
Fix an integer $n\geq0$. We use a somewhat simplified form of the AKSZ construction~\cite{Alexandrov:1995kv} as a geometric tool for building BV action functionals~\cite{Batalin:1981jr} for topological sigma-models of maps from an oriented compact $n{+}1$-dimensional manifold $\Sigma_{n+1}$ (the `source') to a symplectic Lie $n$-algebroid $E$ over a manifold $\cQ$ (the `target'). The AKSZ sigma-models that arise in this way are Chern-Simons theories. In this sense they uniquely encode (up to isomorphism) the algebroid $E$ and provide a means for quantization of $E$. A review is found in~\cite{Ikeda:2012pv}.

We recall some definitions. A symplectic Lie $n$-algebroid is most concisely and naturally described in the language of graded geometry as a differential graded symplectic manifold~\cite{Severa2005}. Recall that a \emph{graded manifold} $\cM=(\cQ,\cA)$ is a ringed space together with the structure sheaf $\cA$ of a graded commutative algebra over an ordinary manifold $\cQ$. It can be modelled locally using even and odd coordinates in fixed degrees, and treated concretely in the language of formal differential geometry by identifying smooth functions on $\cM$ with formal power series in globally defined coordinates $w^\alpha$. We write $\cA^k$ for the subsheaf of $\cA$ consisting of functions of degree~$k$.

\begin{definition}\label{def:dgmanifold}
A \emph{differential graded manifold} (\emph{dg-manifold} for short) is a $\IZ$-graded manifold $\cM=(\cQ,\cA)$ equipped with a degree~$1$ vector field $Q$ which is integrable, that is, $[Q,Q]=2\,Q^2=0$. The vector field $Q$ is a \emph{homological vector field}. 
\end{definition}

When $\cM$ is $\IN$-graded, so that $\cA^0= C^\infty(\cQ)$, the $\IN$-grading can be conveniently described by means of the Euler vector field $\boldsymbol{\varepsilon}$: in coordinates $w^\alpha$ with degrees $|w^\alpha|\geq0$, $\boldsymbol{\varepsilon} = \sum_\alpha\,|w^\alpha| \, w^\alpha\,\frac\partial{\partial w^\alpha}$. A tensor field $T$ on $\cM$ is said to be homogeneous of degree $n$ if $\pounds_{\boldsymbol{\varepsilon}}T=n\,T$, where $\pounds_{\boldsymbol{\varepsilon}}$ denotes the Lie derivative along $\boldsymbol{\varepsilon}$.
The following construction, due to Roytenberg~\cite{Roytenberg2002}, will be used extensively in this section, as well as in Section~\ref{sec:metricalg}.

\begin{theorem}\label{thm:sympln}
Let $\cM=(\cQ, \cA)$ be an $\IN$-graded manifold equipped with a symplectic structure $\omega$ of degree $|\omega|=n>0$, and associated graded Poisson bracket $\{\,\cdot\,,\,\cdot\,\}$ of degree $-n$.  Then there is a one-to-one correspondence between integrable functions on $\cM$ of degree~$n+1$ and homological symplectic vector fields.  
\end{theorem}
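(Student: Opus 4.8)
The plan is to realize the correspondence explicitly through the Hamiltonian map $S\mapsto X_S$ determined by $\iota_{X_S}\omega=\de S$, and then to translate the two defining conditions on each side into one another. First I would record the degree bookkeeping: the exterior derivative $\de$ preserves the internal ($\IN$-)grading, so a function $S$ with $|S|=n+1$ has $|\de S|=n+1$; since $|\omega|=n$, the Hamiltonian vector field $X_S$ (well defined by nondegeneracy of $\omega$) defined by $\iota_{X_S}\omega=\de S$ has degree $|X_S|=|S|-n=1$; conversely a degree-$1$ vector field $Q$ contracts $\omega$ to a $1$-form $\iota_Q\omega$ of internal degree $n+1$. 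Thus on both sides the relevant objects live in degrees $n+1$ and $1$ respectively, and it remains to show (i) that $S\mapsto X_S$ is a bijection from degree-$(n+1)$ functions onto degree-$1$ symplectic vector fields, and (ii) that under this bijection integrability of $S$, i.e. the classical master equation $\{S,S\}=0$, is equivalent to the homological condition $Q^2=0$.

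For (i) the essential input is a graded Poincar\'e lemma in positive internal degree, supplied by the Euler vector field $\boldsymbol{\varepsilon}$. For any $\boldsymbol{\varepsilon}$-homogeneous closed form $\alpha$ of internal degree $k>0$, Cartan's formula gives $k\,\alpha=\pounds_{\boldsymbol{\varepsilon}}\alpha=\de\,\iota_{\boldsymbol{\varepsilon}}\alpha$, so $\alpha=\frac1k\,\de\,\iota_{\boldsymbol{\varepsilon}}\alpha$ is exact. A Hamiltonian vector field is automatically symplectic, since $\pounds_{X_S}\omega=\de\,\iota_{X_S}\omega=\de\,\de S=0$. For the converse, if $Q$ is symplectic of degree $1$ then $\de\,\iota_Q\omega=\pounds_Q\omega=0$ (using $\de\omega=0$), so $\iota_Q\omega$ is closed of internal degree $n+1>0$ and hence equals $\de S$ with $S=\frac{1}{n+1}\,\iota_{\boldsymbol{\varepsilon}}\,\iota_Q\omega$, a function of degree $n+1$; thus $Q=X_S$. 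Injectivity is the same computation run backwards: if $|S|=n+1>0$ and $\de S=0$, then $0=\iota_{\boldsymbol{\varepsilon}}\,\de S=\pounds_{\boldsymbol{\varepsilon}}S=(n+1)\,S$, forcing $S=0$. This establishes the bijection, and it uses $n>0$ crucially.

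For (ii) I would use that the Hamiltonian map intertwines the Poisson bracket with the commutator of vector fields, $[X_S,X_S]=X_{\{S,S\}}$ (a restatement of the graded Jacobi identity for $\{\,\cdot\,,\,\cdot\,\}$, which holds because the bracket comes from the symplectic form). Since $Q=X_S$ has odd degree $1$, one has $[Q,Q]=2\,Q^2$, so $Q^2=0$ if and only if $X_{\{S,S\}}=0$, i.e. $\de\{S,S\}=0$. The bracket $\{S,S\}$ has internal degree $2(n+1)-n=n+2>0$, so the injectivity argument above applies verbatim and gives $\{S,S\}=0$. Conversely, $\{S,S\}=0$ immediately yields $Q^2=0$. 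Combined with the fact that $X_S$ is always symplectic, this shows that integrable degree-$(n+1)$ functions correspond precisely to homological symplectic vector fields.

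The main obstacle, and the step deserving the most care, is the graded Poincar\'e lemma and its repeated use: one must set up the graded Cartan calculus so that the interior product, Lie derivative and de Rham differential interact correctly with the two independent gradings (the de Rham form degree versus the internal $\IN$-degree measured by $\boldsymbol{\varepsilon}$), and verify that $\pounds_{\boldsymbol{\varepsilon}}$ acts on homogeneous forms with the expected eigenvalue so that dividing by $k=n+1$ and $k=n+2$ is legitimate. The positivity $n>0$ is exactly what makes these eigenvalues nonzero, and hence what makes both the surjectivity/injectivity of the Hamiltonian map and the passage from $\de\{S,S\}=0$ to $\{S,S\}=0$ work; the sign conventions in $[X_S,X_S]=X_{\{S,S\}}$ are a routine but necessary bookkeeping check.
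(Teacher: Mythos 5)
Your proposal is correct and follows essentially the same route as the paper: the Euler vector field together with the Cartan homotopy formula shows that every degree-$1$ symplectic vector field is Hamiltonian for a degree-$(n{+}1)$ function (the paper phrases this via the commutator $[\pounds_{\boldsymbol\varepsilon},\iota_Q]_\circ$ rather than as a graded Poincar\'e lemma, but the computation is identical), and the graded Jacobi identity then identifies $Q^2=0$ with $\{S,S\}=0$. Your explicit injectivity argument (a closed function of positive degree vanishes, applied again to $\{S,S\}$ of degree $n+2$) is a welcome sharpening of the paper's brief appeal to non-degeneracy, but it is not a different method.
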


\begin{proof}
Since
\be\nonumber
\pounds_{\boldsymbol\varepsilon} \omega= n\, \omega \ ,
\ee
the Cartan homotopy formula $\pounds_{\boldsymbol\varepsilon}=\iota_{\boldsymbol\varepsilon}\circ\de+\de\circ\iota_{\boldsymbol\varepsilon}$ implies 
$$
\omega= \tfrac{1}{n}\, \de \iota_{\boldsymbol\varepsilon} \omega \ ,
$$
where $\iota_{\boldsymbol\varepsilon}$ denotes contraction with the Euler vector field $\boldsymbol\varepsilon$.

Let $Q \in \frX_1(\cM)$ be a symplectic vector field of degree~$1$. Then $[\boldsymbol\varepsilon, Q]=Q$ and $\pounds_Q\omega=\de \iota_Q \omega=0,$
which gives
\be \nonumber
\iota_Q \omega=\iota_{[\boldsymbol\varepsilon, Q]} \omega= [\pounds_{\boldsymbol\varepsilon} , \iota_Q]_\circ (\omega) =\pounds_{\boldsymbol\varepsilon} \iota_Q \omega- \iota_Q (n \, \omega) \ ,
\ee
where we used the Cartan structure equations in the second equality. This yields
\be \nonumber
\iota_Q \omega= \tfrac{1}{n+1}\, \de \iota_{\boldsymbol\varepsilon} \iota_Q \omega \ .
\ee
Hence $Q$ is a Hamiltonian vector field with Hamiltonian $\gamma = \frac{1}{n+1}\, \iota_{\boldsymbol\varepsilon} \iota_Q \omega \in \cA^{n+1}.$
The graded Jacobi identity for the Poisson bracket implies
\be \nonumber
\{\gamma, \{\gamma, f\}\}= \{\{\gamma, \gamma\}, f\}- \{\gamma, \{\gamma, f\}\}
\ee
for all $f \in C^\infty(\cQ).$ Thus 
\be \nonumber
Q^2=X_{\frac12\, \{\gamma, \gamma\}} \ ,
\ee
where \smash{$X_{\frac12\, \{\gamma, \gamma\}}$} is the Hamiltonian vector field corresponding to $\frac12\, \{\gamma, \gamma\} \in \cA^{n+2}.$
Hence $Q^2 =0 $ if and only if $\{\gamma, \gamma\}=0$ because $\{\,\cdot\,,\, \cdot\,\}$ is non-degenerate.

Conversely, given any integrable function $\gamma \in \cA^{n+1},$ we use a derived bracket to set $Q \coloneqq \{ \gamma, \,\cdot\, \}.$ These two constructions are inverse to each other.
\end{proof}

\begin{remark}\label{rem:sympln}
The proof of Theorem~\ref{thm:sympln} shows that every symplectic vector field of degree~$1$ on an $\IN$-graded symplectic manifold $(\cM, \omega)$ with $|\omega|=n>0$ is given by a Hamiltonian of degree $n+1.$ 
\end{remark}

For our AKSZ constructions we assume that $\cM$ is $n$-graded, that is, its coordinates are concentrated in degrees $0,1,\dots,n$. In this case we also call $\cM$ a degree~$n$ manifold.

\begin{definition}\label{def:symplnalg}
A \emph{symplectic Lie $n$-algebroid} is a degree~$n$ dg-manifold $(\cM,Q)$ with a symplectic structure $\omega$ of degree~$n$ for which $Q$ is a symplectic vector field.
\end{definition}

\begin{remark}
Generally, dg-manifolds are sometimes also refered to as `$Q$-manifolds'. Other terminology for symplectic Lie $n$-algebroids appearing in the literature are `symplectic N$Q$-manifolds of degree $n$' or `QP$n$-manifolds'.
\end{remark}

Symplectic Lie $n$-algebroids $(\cM,Q,\omega)$ arise from $n$-graded vector bundles over the degree~$0$ body $\cQ$ of $\cM$, and are generally characterized by the following result, due to Kotov and Strobl~\cite{Kotov:2010wr}. 

\begin{theorem}\label{thm:QPLeibniz}
Let $(\cM,Q,\omega)$ be a symplectic Lie $n$-algebroid of degree $n>1$. Then functions of degree $n-1$ on $\cM=(\cQ,\cA)$ can be identified with sections of a vector bundle $E\to \cQ$ equipped with the structure of a Leibniz-Loday algebroid.
\end{theorem}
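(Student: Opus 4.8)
The plan is to read off both the bracket and the anchor from a single Hamiltonian function via derived brackets. Since $|\omega|=n>0$, Theorem~\ref{thm:sympln} applies: the homological symplectic vector field is Hamiltonian, $Q=\{\gamma,\,\cdot\,\}$ for a unique $\gamma\in\cA^{n+1}$, and integrability $Q^2=0$ is equivalent to the classical master equation $\{\gamma,\gamma\}=0$. I would then identify the degree $n-1$ functions with sections of a vector bundle: since $\cM$ is $n$-graded with finitely many generators in each positive degree, $\cA^{n-1}$ is a finitely generated projective (indeed locally free) module over $\cA^0=C^\infty(\cQ)$, so by the Serre--Swan theorem $\cA^{n-1}=\mathsf{\Gamma}(E)$ for a vector bundle $E\to\cQ$. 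The hypothesis $n>1$ ensures $n-1\geq1$, so that $\cA^{n-1}$ consists of genuine fibre functions defining a non-trivial $E$; the degenerate case $n=1$, where $\cA^{n-1}=\cA^0=C^\infty(\cQ)$, is excluded.

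Next I would define the structure maps. For $e,e_1,e_2\in\mathsf{\Gamma}(E)=\cA^{n-1}$ set $[e_1,e_2]_E:=\{Qe_1,e_2\}=\{\{\gamma,e_1\},e_2\}$, and define the anchor on $f\in C^\infty(\cQ)=\cA^0$ by $\rho(e)\cdot f:=\{Qe,f\}=\{\{\gamma,e\},f\}$. A degree count with $|\gamma|=n+1$ and $|\{\,\cdot\,,\,\cdot\,\}|=-n$ shows $\{\gamma,e\}$ has degree $n$, whence $[e_1,e_2]_E$ again has degree $n-1$ and $\rho(e)\cdot f$ has degree $0$; thus the bracket returns a section of $E$ and $\rho(e)$ a function on $\cQ$. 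I would verify that $\rho(e)$ is a derivation of $C^\infty(\cQ)$, hence a vector field, directly from the graded Leibniz rule applied to $\{Qe,fg\}$, and that $e\mapsto\rho(e)$ is $C^\infty(\cQ)$-linear; the latter uses that all brackets producing negative degree must vanish, so that $\rho(f\,e)=f\,\rho(e)$ and $\rho$ is a bundle morphism covering the identity.

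It remains to check the two axioms. The anchored derivation property \eqref{eq:anchorLeibniz} is immediate: expanding $[e,f\,e']_E=\{Qe,f\,e'\}$ by the graded Leibniz rule and using $|f|=0$ gives $\{Qe,f\}\,e'+f\,\{Qe,e'\}=(\rho(e)\cdot f)\,e'+f\,[e,e']_E$. The Leibniz identity \eqref{eq:Leibnizid} is the substantive step and the main obstacle. It asserts that $[e,\,\cdot\,]_E$ is a derivation of the derived bracket, which in the Kosmann-Schwarzbach formalism follows from two facts: that $Q=\{\gamma,\,\cdot\,\}$ is a derivation of the Poisson bracket (the graded Jacobi identity), and that $Q^2=0$, that is $\{\gamma,\gamma\}=0$. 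Concretely I would substitute $[e_1,e_2]_E=\{Qe_1,e_2\}$ into $[e,[e_1,e_2]_E]_E$, use graded Jacobi to commute $Q$ past the inner bracket, discard the term proportional to $Q^2$ by the master equation, and reassemble the remainder as $[[e,e_1]_E,e_2]_E+[e_1,[e,e_2]_E]_E$.

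The one delicate point is the sign bookkeeping for the Poisson bracket of degree $-n$. Its graded antisymmetry and Jacobi identity are governed by the shifted degree $|a|-n$, under which the derived bracket carries degree $+1$; reshifting so that the derived bracket has degree $0$ assigns every element of $\cA^{n-1}$ the effective degree $(n-1)-n+1=0$. Being thus concentrated in even effective degree, all the Koszul signs appearing in the graded Loday identity collapse to $+1$, which is exactly what produces the unsigned identity \eqref{eq:Leibnizid}. This is the precise place where the value $n-1$ (and not some other degree) is essential, and carrying these signs through correctly is where I expect the real work to lie.
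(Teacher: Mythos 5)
Your proof is correct and follows essentially the same route as the paper: identify $\cA^{n-1}$ with $\mathsf{\Gamma}(E)$, define the bracket and anchor as derived brackets with the Hamiltonian $\gamma$ of $Q$, obtain the anchored derivation property from the graded Leibniz rule, and the Leibniz identity from the graded Jacobi identity together with $\{\gamma,\gamma\}=0$. The only difference is an overall sign convention in the anchor (the paper writes $\rho(e)\cdot f=(-1)^n\{\{e,\gamma\},f\}$ with the arguments of the inner bracket in the opposite order), which does not affect the argument.
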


\begin{proof}
Let $f\in\cA^0$ be a function of degree~$0$ on $\cQ$, and let $e,e'\in\cA^{n-1}$ be functions of degree~$n-1$ on $\cM$, identified as functions on $E[n-1]$ for a vector bundle $E\to\cQ$. Define a bracket and anchor map on sections $\mathsf{\Gamma}(E)$ by the derived brackets
\begin{align*}
[e,e']_E := -\{\{e,\gamma\},e'\} \qquad \mbox{and} \qquad \rho(e)\cdot f := (-1)^n \, \{\{e,\gamma\},f\} \ .
\end{align*}
Then the anchored derivation property \eqref{eq:anchorLeibniz} follows from the derivation rule for the Poisson bracket $\{\,\cdot\,,\,\cdot\,\}$ induced by the symplectic structure $\omega$, while the Maurer-Cartan equation $\{\gamma,\gamma\}=0$ implies the Leibniz identity \eqref{eq:Leibnizid} for the bracket $[\,\cdot\,,\,\cdot\,]_E$, and hence that the anchor $\rho$ is a bracket homomorphism.
\end{proof}

With this data, the AKSZ construction proceeds as follows. Let $T[1]\Sigma_{n+1}$ be the tangent bundle of the source manifold $\Sigma_{n+1}$ with degree of its fibres shifted by~$1$, which is isomorphic to the exterior algebra of differential forms on $\Sigma_{n+1}$; under this identification it has a canonical homological vector field induced by the de~Rham differential $\de$. Let $\boldsymbol\cM={\sf Map}(T[1]\Sigma_{n+1},\cM)$ be the mapping space of degree~$0$ smooth maps $\hat X:T[1]\Sigma_{n+1}\to\cM$ which intertwine the homological vector fields, that is, $\hat X_*(\de) = Q$. Given an $n$-form $\alpha\in\Omega^n(\cM)$, we can lift it to an $n$-form $\boldsymbol\alpha\in\Omega^n(\boldsymbol\cM)$ by trangression to the mapping space as
\begin{align*}
\boldsymbol\alpha = \int_{T[1]\Sigma_{n+1}} \, \mu \ {\rm ev}^*(\alpha) \ ,
\end{align*}
where $\mu$ is the natural volume measure on $T[1]\Sigma_{n+1}$ and ${\rm ev}:T[1]\Sigma_{n+1}\times\boldsymbol\cM\to\cM$ is the evaluation map.
Choose a local $1$-form $\vartheta$ on $\cM$ such that $\omega=\de\vartheta$. In this paper we will only explicitly write the `classical' or `bosonic' part of the action functional underlying the AKSZ sigma-model. It is constructed by: (i) transgressing the form $-\iota_{\de}\vartheta+\gamma$ to the mapping space $\boldsymbol\cM$; (ii)~integrating over the odd coordinates of $T[1]\Sigma_{n+1}$; and (iii)~restricting to degree~$0$ fields. The $1$-form $\vartheta$ defines the `kinetic term' and the Hamiltonian $\gamma$ defines the `interaction term' of the AKSZ field theory.

\begin{remark}\label{rem:BVAKSZ}
The AKSZ construction is a geometric realization of the BV formalism~\cite{Batalin:1981jr} for topological field theories with generalized gauge symmetries. The full BV master action functional is obtained by allowing the fields $\hat X:T[1]\Sigma_{n+1}\to\cM$ to be maps of arbitrary $\IZ$-degree, which yields all auxiliary fields and anti-fields of the BV formalism. Then the graded Poisson bracket $\{\,\cdot\,,\,\cdot\,\}$ associated to the symplectic structure $\omega$ implements the BV antibracket, while the Maurer-Cartan equation $\{\gamma,\gamma\}=0$ implements the classical master equation which guarantees gauge invariance of the BV action functional, as well as closure of the gauge algebra.
\end{remark}

\begin{remark}\label{rem:AKSZfurther}
Beyond Theorem~\ref{thm:QPLeibniz}, the further algebraic conditions and structures on symplectic Lie $n$-algebroids, as classical geometric objects, are not known generally and must be unravelled on a case by case basis. We consider below the first three degrees $n=0,1,2$ in some detail. For $n=3$ the algebroids were characterized in~\cite{Ikeda:2010vz,Grutzmann2011} and their AKSZ sigma-models applied to ${\sf SL}(5,\IR)$ exceptional field theory in~\cite{Kokenyesi:2018ynq,Chatzistavrakidis:2019seu}. AKSZ constructions for higher-dimensional exceptional field theory are considered in~\cite{Arvanitakis:2018cyo}.
\end{remark}

\medskip

\subsection{Topological Quantum Mechanics} ~\\[5pt]
\label{subsec:topQM}
The simplest instance of the AKSZ construction is when the target is a symplectic dg-manifold of degree~$0$. In this case $Q=0$, and thus a symplectic Lie $0$-algebroid is just a symplectic manifold $(\cQ,\omega)$~\cite{Severa2005,Roytenberg2002}; the degree~$1$ Hamiltonian $\gamma$ is then locally constant on $\cQ$. The corresponding AKSZ sigma-model is the one-dimensional Chern-Simons theory whose Chern-Simons form is a local symplectic potential $\vartheta$ for the symplectic structure: $\omega=\de\vartheta$. In local Darboux coordinates, where $\omega=\de p_i\wedge\de q^i$, we take $\vartheta=p_i\,\de q^i$; here and in the following we use the Einstein summation convention over repeated upper and lower indices.

For a cotangent bundle $\cQ=T^*W$, with $\omega$ the canonical symplectic structure and $\vartheta$ the Liouville $1$-form, if $\Sigma_1$ is an oriented compact $1$-manifold, then the AKSZ construction produces a one-dimensional topological sigma-model
of smooth maps $X:\Sigma_1\to\cQ$ with action functional
\begin{align*}
S(X) = \int_{\Sigma_1} \, X^*\vartheta \ .
\end{align*}
BV quantization of this action functional defines a topological quantum mechanics, which quantizes the symplectic manifold $(\cQ,\omega)$. When $\Sigma_1$ is an interval this computes the $\widehat{A}$-genus of $W$~\cite{Grady:2011jc,Grady:2015ica}.

\medskip

\subsection{Lie Algebroids and Poisson Sigma-Models} ~\\[5pt]
\label{subsec:Poisson}
Let us turn to the lowest non-trivial rung $n=1$ on the AKSZ ladder, firstly by adding a further algebraic condition on the bracket operation in Definition~\ref{def:Leibnizalgebroid}~\cite{Mackenzie}.

\begin{definition}\label{def:Liealgebroid}
A \emph{Lie algebroid} $(E,[\,\cdot\,,\,\cdot\,]_E,\rho)$ over a manifold $\cQ$ is a Leibniz-Loday algebroid whose bracket is skew-symmetric:
\begin{align*}
[e_1,e_2]_E = -[e_2,e_1]_E \ ,
\end{align*}
for all $e_1,e_2\in\mathsf{\Gamma}(E)$.
\end{definition}

It follows that the bracket operation of a Lie algebroid defines a Lie bracket on the sections of the vector bundle $E\to\cQ$. When $\cQ$ is a point, a Lie algebroid is simply a Lie algebra. At the opposite extreme, the tangent bundle $T\cQ$ of any manifold $\cQ$ is always a Lie algebroid with the Lie bracket of vector fields $[\,\cdot\,,\,\cdot\,]_{T\cQ}$ and the identity anchor map $\unit_{T\cQ}$.

Lie algebroids are canonically associated to dg-manifolds of degree~$1$, by the following construction due originally to Vaintrob~\cite{Vaintrob1997}.

\begin{proposition}\label{prop:LiealgebroidQ}
There is a one-to-one correspondence between Lie algebroids and dg-manifolds of degree~$1$.
\end{proposition}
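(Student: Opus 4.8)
The plan is to establish the correspondence by realizing a degree~$1$ homological vector field as the Chevalley--Eilenberg differential of a Lie algebroid, following Vaintrob. In contrast to Theorems~\ref{thm:sympln} and~\ref{thm:QPLeibniz}, no symplectic structure is available here, so the derived brackets must be built from the canonical contraction operators on a shifted vector bundle rather than from a Poisson bracket induced by $\omega$.

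First I would pin down the geometry of the target. A degree~$1$ $\IN$-graded manifold $\cM=(\cQ,\cA)$ has coordinates concentrated in degrees~$0$ and~$1$, and since the degree~$1$ generators are odd they generate $\cA$ as an exterior algebra over $\cA^0=C^\infty(\cQ)$. Thus $\cA^1$ is the sheaf of sections of the dual $E^*$ of a vector bundle $E\to\cQ$, and $\cM\cong E[1]$ canonically with $C^\infty(\cM)=\mathsf{\Gamma}\big(\bigwedge^\bullet E^*\big)$; in particular every degree~$1$ manifold is automatically split, so no Batchelor-type obstruction arises. Under this identification functions of degree~$0$ are $C^\infty(\cQ)$ and functions of degree~$1$ are $\mathsf{\Gamma}(E^*)$.

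Next I would extract the algebroid data from $Q$ using Cartan calculus on $E[1]$. For each $e\in\mathsf{\Gamma}(E)$ there is a canonical degree~$-1$ contraction derivation $\iota_e$ of $\mathsf{\Gamma}\big(\bigwedge^\bullet E^*\big)$, satisfying $\iota_{f\,e}=f\,\iota_e$ and $[\iota_e,\iota_{e'}]_\circ=0$. Writing $\pounds_e:=[\iota_e,Q]_\circ$, I would define the anchor and bracket by the derived brackets
\begin{align*}
\rho(e)\cdot f:=\iota_e\,(Q f) \qquad\text{and}\qquad \iota_{[e,e']_E}:=\big[[\iota_e,Q]_\circ,\iota_{e'}\big]_\circ \ ,
\end{align*}
for $f\in C^\infty(\cQ)$. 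One checks that $\rho(e)\cdot f$ is $C^\infty(\cQ)$-linear in $e$ and a derivation in $f$, so that $\rho$ is a genuine bundle map $E\to T\cQ$, and that the right-hand side of the second equation is again a $C^\infty(\cQ)$-linear degree~$-1$ derivation, hence equals $\iota$ of a well-defined section $[e,e']_E$.

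Finally I would verify the axioms of Definition~\ref{def:Liealgebroid}. The anchored derivation property~\eqref{eq:anchorLeibniz} follows from the graded Leibniz rule for $Q$ and $\iota_e$, while skew-symmetry of the bracket follows from $[\iota_e,\iota_{e'}]_\circ=0$ together with the graded Jacobi identity for the commutator $[\,\cdot\,,\,\cdot\,]_\circ$. The crux, which I expect to be the main obstacle, is the Leibniz/Jacobi identity~\eqref{eq:Leibnizid}: expanding the nested graded commutators shows that it is controlled precisely by $\tfrac12\,[Q,Q]_\circ=Q^2$, so that integrability $Q^2=0$ is equivalent to the Jacobi identity, with the only delicate point being the sign bookkeeping in these commutator expansions. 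Conversely, given a Lie algebroid, the Chevalley--Eilenberg differential $\de_E$ on $\mathsf{\Gamma}\big(\bigwedge^\bullet E^*\big)$ is a degree~$1$ vector field whose square vanishes exactly by the Lie algebroid axioms; since the two constructions are manifestly inverse to one another on generators, this yields the claimed bijection.
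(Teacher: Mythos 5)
Your proposal is correct and follows essentially the same route as the paper: identify a degree~$1$ manifold with $E[1]$ for a vector bundle $E\to\cQ$, define the anchor and bracket as derived brackets of $Q$ with the degree~$-1$ contraction operators, and observe that the Lie algebroid axioms are equivalent to $Q^2=0$, with the converse given by the Chevalley--Eilenberg differential. You merely spell out the identifications and verifications (splitness, $C^\infty(\cQ)$-linearity, skew-symmetry) that the paper leaves implicit.
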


\begin{proof} 
Given a Lie algebroid $(E, [ \, \cdot \, , \, \cdot \, ]_E, \rho)$ over a manifold $\cQ$, its corresponding degree 1 manifold is $\cM=E[1].$ Since $C^\infty(E[1]) = \mathsf{\Gamma}(\midwedge^\bullet E^*),$ the homological vector field $Q=\de_E$ of degree 1 is the Lie algebroid differential defined by
\begin{align*}
\de_E \, \varepsilon (e_1, \dots , e_{k+1}) & \coloneqq  \sum_{i=1}^{k+1}\, \rho(e_i) \cdot \varepsilon(e_1 , \dots , \widehat{e_i}, \dots, e_{k+1}) \\
& \quad \ + \sum_{i < j}\, (-1)^{i+j} \, \varepsilon ([e_i,e_j]_E,e_1, \dots,\widehat{e_i},\dots,\widehat{e_j} ,  \dots,e_{k+1}) \ , 
\end{align*}
for all $\varepsilon \in \mathsf{\Gamma}(\midwedge^k E^*)$ and $e_1,\dots,e_{k+1}\in\mathsf{\Gamma}(E)$, where the hat denotes omission of the corresponding entry. This is a derivation of degree 1 that squares to zero.

Conversely, given a degree 1 manifold $\cM=(\cQ,\cA)$ endowed with a homological vector field $Q,$ its corresponding Lie algebroid is constructed as follows. Since the categories of vector bundles and degree 1 manifolds are equivalent, $\cM \simeq E[1]$ for some vector bundle $E \rightarrow \cQ.$  Then the Lie algebroid structure on $E$ is given by the $C^\infty(\cQ)$-linear anchor map
\be\nonumber
\rho(e) \cdot f \coloneqq \langle Q \cdot f , e \rangle \ ,
\ee
for all $f \in C^\infty(\cQ)$ and $e \in \mathsf{\Gamma}(E),$ where $\langle\,\cdot\,,\,\cdot\,\rangle$ is the canonical dual pairing between $\cA^1\simeq\mathsf{\Gamma}(E^*)$ and $\mathsf{\Gamma}(E)$. The Lie bracket is given by
\be\nonumber
\langle [e_1, e_2]_E, \varepsilon \rangle \coloneqq \rho(e_1) \cdot \langle e_2, \varepsilon \rangle - \rho(e_2) \cdot \langle e_1 , \varepsilon \rangle - \langle Q \cdot \varepsilon , e_1 \wedge e_2 \rangle \ ,
\ee
for all $e_1, e_2 \in \mathsf{\Gamma}(E)$ and $\varepsilon \in \mathsf{\Gamma}(E^*).$ This bracket is skew-symmetric and satisfies the anchored derivation property \eqref{eq:anchorLeibniz}. The Leibniz identity is equivalent to the condition that $Q$ is homological: $Q^2=0.$
\end{proof}

\begin{remark}\label{rem:ChevalleyEilenberg}
When endowed with the action of the homological vector field $Q$, the space of smooth functions $C^\infty(E[1])$ becomes a cochain complex which computes the cohomology of the Lie algebroid $(E,[\,\cdot\,,\,\cdot\,]_E,\rho)$, that is, its Chevalley-Eilenberg algebra $\big(\mathsf{\Gamma}(\midwedge^\bullet E^*),\de_E\big)$. 
\end{remark}

We can now extend Proposition~\ref{prop:LiealgebroidQ} to the case of symplectic dg-manifolds of degree~$1$ to infer that a symplectic Lie algebroid is the same thing as a Poisson manifold~\cite{Roytenberg2002}.

\begin{theorem}\label{thm:QP1Poisson}
There is a one-to-one correspondence between symplectic Lie algebroids and Poisson manifolds.
\end{theorem}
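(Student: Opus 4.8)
The plan is to combine Proposition~\ref{prop:LiealgebroidQ} with the Hamiltonian description of homological vector fields from Theorem~\ref{thm:sympln}, specialized to $n=1$, and to identify the resulting integrability condition with the Jacobi identity for a Poisson bivector.

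First I would pin down the underlying graded manifold. By Proposition~\ref{prop:LiealgebroidQ} any degree~$1$ dg-manifold is of the form $E[1]$ for a vector bundle $E\to\cQ$, so it remains to determine what a degree~$1$ symplectic structure imposes. Writing local coordinates $x^i$ of degree~$0$ on $\cQ$ and fibre coordinates $\xi^a$ of degree~$1$, a two-form of degree~$1$ can only pair degree-$0$ with degree-$1$ generators; non-degeneracy of $\omega$ then forces the number of fibre coordinates to equal $\dim\cQ$ and identifies them, up to a bundle isomorphism, with the conjugate momenta of the $x^i$. This exhibits $E\simeq T^*\cQ$ and $(\cM,\omega)\simeq\big(T^*[1]\cQ,\omega_{\rm can}\big)$ with $\omega_{\rm can}=\de x^i\wedge\de\xi_i$. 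This is the main structural step, and the only place where a little graded-geometric bookkeeping is needed; I expect it to be the principal obstacle, since everything afterwards is essentially dictated by Theorem~\ref{thm:sympln}.

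Next I would read off the graded Poisson bracket. The degree $-1$ bracket $\{\,\cdot\,,\,\cdot\,\}$ associated with $\omega_{\rm can}$ on $C^\infty(T^*[1]\cQ)\simeq\mathsf{\Gamma}(\midwedge^\bullet T\cQ)$ is the Schouten-Nijenhuis bracket of multivector fields. Theorem~\ref{thm:sympln} with $n=1$ then identifies the homological symplectic vector field $Q$ with an integrable Hamiltonian $\gamma\in\cA^2$, and a degree-$2$ function on $T^*[1]\cQ$ is precisely a bivector field $\pi\in\mathsf{\Gamma}(\midwedge^2 T\cQ)$.

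Finally I would translate the integrability condition. The Maurer-Cartan equation $\{\gamma,\gamma\}=0$ becomes $[\pi,\pi]_{\rm SN}=0$, which is exactly the statement that the bracket $\{f,g\}_\pi:=\pi(\de f,\de g)$ on $C^\infty(\cQ)$ satisfies the Jacobi identity, that is, $(\cQ,\pi)$ is a Poisson manifold. Conversely, any Poisson bivector $\pi$ furnishes such a $\gamma$ and hence $Q=\{\gamma,\,\cdot\,\}$, and these assignments are mutually inverse, which yields the claimed one-to-one correspondence; under the derived-bracket dictionary of Proposition~\ref{prop:LiealgebroidQ} the associated Lie algebroid is then the cotangent Lie algebroid $T^*\cQ$ with anchor $\pi^\sharp$ and Koszul bracket. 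With the $n=1$ specialization of Theorem~\ref{thm:sympln} doing the bulk of the work, the only genuine effort lies in the identification of $(\cM,\omega)$ with $T^*[1]\cQ$ carried out in the first step.
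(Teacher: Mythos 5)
Your proposal is correct and follows essentially the same route as the paper's proof: identify the symplectic degree-$1$ manifold with $(T^*[1]\cQ,\omega_{\rm can})$, identify the graded Poisson bracket with the Schouten--Nijenhuis bracket and degree-$2$ Hamiltonians with bivectors, and translate the Maurer--Cartan equation $\{\gamma,\gamma\}=0$ into $[\pi,\pi]_{\textrm{\tiny S}}=0$, with the converse recovering the cotangent Lie algebroid with anchor $\pi^\sharp$ and Koszul bracket. The only difference is cosmetic: you spell out the degree-counting argument that forces $\cM\simeq T^*[1]\cQ$, which the paper simply asserts as canonical.
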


\begin{proof}
Any symplectic $1$-graded manifold $(\cM,\omega)$ is canonically isomorphic to the shifted cotangent bundle $\cM=T^*[1]\cQ$ over some manifold $\cQ$, with the canonical symplectic $2$-form
$\omega = \de\xi_i\wedge\de x^i $,
where $x^i$ are local coordinates on $\cQ$ and $\xi_i$ are odd coordinates on $T^*[1]\cQ$ corresponding to the holonomic vector fields $\frac\partial{\partial x^i}$ on $\cQ$.
The corresponding graded Poisson bracket $\{\,\cdot\,,\,\cdot\,\}$ can be identified with the Schouten-Nijenhuis bracket $[\,\cdot\,,\,\cdot\,]_{\textrm{\tiny S}}$ of multivector fields on $\cQ$. The most general degree~$2$ function $\gamma$ on $T^*[1]\cQ$ is of the form
\begin{align}\label{eq:gammaPoisson}
\gamma = \tfrac12\, \pi^{ij}(x) \, \xi_i \, \xi_j \ ,
\end{align}
where $\pi=\pi^{ij} \, \frac\partial{\partial x^i}\otimes\frac\partial{\partial x^j}$ is any $(0,2)$-tensor on $\cQ$. Then invariance of $\omega$ under the associated vector field $Q=\{\gamma,\,\cdot\,\}$ on $T^*[1]\cQ$ 
 implies $\pi\in\mathsf{\Gamma}(\midwedge^2T\cQ)$, and the Maurer-Cartan equation $\{\gamma,\gamma\}=0$ implies that $\pi$ is a Poisson bivector on $\cQ$, that is, $[\pi,\pi]_{\textrm{\tiny S}}=0$.

Conversely, if $(\cQ,\pi)$ is a Poisson manifold, then the Poisson bracket $\{f,g\}_\pi=\pi(\de f,\de g)$ defines a Lie algebra structure on the space of smooth functions $C^\infty(\cQ)$ and is reproduced as a derived bracket with the Hamiltonian \eqref{eq:gammaPoisson} through
\begin{align*}
\{f,g\}_\pi = -\{\{f,\gamma\},g\} \ .
\end{align*}
The homological vector field $Q$ is the dg-structure of Proposition~\ref{prop:LiealgebroidQ} corresponding to the cotangent Lie algebroid $(T^*\cQ,[\,\cdot\,,\,\cdot\,]_{T^*\cQ},\pi^\sharp)$: its anchor map is the natural bundle morphism $\pi^\sharp:T^*\cQ\to T\cQ$ induced by the bivector $\pi$, $\pi^\sharp\alpha:=\iota_\alpha\pi$, and the Lie bracket is the Koszul bracket on $1$-forms which is defined  by
\begin{align*}
[\alpha,\beta]_{T^*\cQ} = \pounds_{\pi^\sharp\alpha}\beta - \pounds_{\pi^\sharp\beta}\alpha - \de\iota_\alpha\iota_\beta\pi
\end{align*}
for $\alpha,\beta\in\Omega^1(\cQ)$. The differential $Q=\{\gamma,\,\cdot\,\}$ sends a function $f\in C^\infty(\cQ)$ to $\pi^\sharp\de f$.
\end{proof}

We now choose the Liouville $1$-form $\vartheta=\xi_i\,\de x^i$ on $T^*[1]\cQ$ and apply the AKSZ construction with the Hamiltonian \eqref{eq:gammaPoisson}. The AKSZ action functional is defined on the space of vector bundle morphisms $\hat X:T\Sigma_2\to T^*\cQ$ from the tangent bundle of an oriented compact $2$-manifold $\Sigma_2$; such a map is given by its base map $X:\Sigma_2\to\cQ$ and a section $A\in\mathsf{\Gamma}(T^*\Sigma_2\otimes X^*T^*\cQ)$. Then the action functional reads
\begin{align*}
S(X,A) = \int_{\Sigma_2} \, \langle A,\de X\rangle + \frac12\,\langle A,(\pi^\sharp\circ X)A\rangle \ ,
\end{align*}
where we view the fields as $1$-forms $A\in\Omega^1(\Sigma_2,X^*T^*\cQ)$ and $\de X\in\Omega^1(\Sigma_2,X^*T\cQ)$, and $\langle\,\cdot\,,\,\cdot\,\rangle$ denotes the natural pairing defined by pairing the dual values in the pullback bundles $X^*T^*\cQ$ and $X^*T\cQ$ together with the exterior product of differential forms. This is the action functional of the \emph{Poisson sigma-model}~\cite{Ikeda:1993fh,Schaller:1994es}, which is the most general two-dimensional topological field theory that can be obtained through the AKSZ construction~\cite{Cattaneo:2001ys}. When $\Sigma_2$ is a disk, the BV quantization of this sigma-model gives a string theory derivation of Kontesevich's deformation quantization of the Poisson manifold $(\cQ,\pi)$~\cite{Kontsevich:1997vb,Cattaneo:1999fm}.

\medskip

\subsection{Courant Algebroids and Courant Sigma-Models} ~\\[5pt]
\label{subsec:Courant}
We move to the next rung at $n=2$. This time, instead of further constraining the bracket operation in Definition~\ref{def:Leibnizalgebroid}, we complement it with further algebraic structure generalizing the notion of a quadratic Lie alegbra~\cite{Courant1990,Weinstein1997,Uchino2002}.

\begin{definition}\label{def:CourantD}
A \emph{Courant algebroid} on a manifold $\cQ$ is a Leibniz-Loday algebroid $(E,[\,\cdot\,,\,\cdot\,]_{\dorf},\rho)$ with a symmetric non-degenerate bilinear form $\langle\,\cdot\,,\,\cdot\,\rangle_E$ on the fibres of $E$ which is preserved by the bracket operation:
\begin{align}
\rho(e)\cdot\langle e_1,e_2\rangle_E &= \langle[e,e_1]_{\dorf},e_2\rangle_E + \langle e_1,[e,e_2]_{\dorf}\rangle_E \ , \label{eq:CourantD1} \\[4pt]
\langle[e,e]_{\dorf},e_1\rangle_E &= \tfrac12\,\rho(e_1)\cdot\langle e,e\rangle_E \ , \label{eq:CourantD2}
\end{align} 
for all $e,e_1,e_2\in\mathsf{\Gamma}(E)$. The bracket of sections $[\,\cdot\,,\,\cdot\,]_{\dorf}$ is called a \emph{Dorfman bracket}.
\end{definition}

\begin{remark}\label{rem:Courantcond}
The anchor map $\rho$ and the pairing $\langle\,\cdot\,,\,\cdot\,\rangle_E$ from Definition~\ref{def:CourantD} induce a map $\rho^*:T^*\cQ\to E$ given by
\begin{align*}
\langle\rho^*(\alpha),e\rangle_E = \langle\rho^{\rm t}(\alpha),e\rangle \ ,
\end{align*}
for all $\alpha\in\Omega^1(\cQ)$ and for all $e\in\mathsf{\Gamma}(E)$, where $\rho^{\rm t}:T^*\cQ\to E^*$ is the transpose of $\rho$; as before, the bilinear form $\langle\,\cdot\,,\,\cdot\,\rangle$ (without subscript) is the canonical dual pairing between the bundle $E$ and its dual $E^*$. The map $\rho^*$ induces a map $\cD:C^\infty(\cQ)\to\mathsf{\Gamma}(E)$ defined by $\cD f=\rho^*\de f$, for all $f\in C^\infty(\cQ)$, which obeys a derivation-like rule and is the natural generalization of the exterior derivative in the algebroid $E$. This allows us to recast the condition \eqref{eq:CourantD2} as
\begin{align}\label{eq:CourantD2d}
[e,e]_{\dorf} = \tfrac12\,\cD\langle e,e\rangle_E \ ,
\end{align}
for all $e\in\mathsf{\Gamma}(E)$. In particular, the symmetric part of the Dorfman bracket $[\,\cdot\,,\,\cdot\,]_{\dorf}$ is given by
\begin{align}\label{eq:Dorfsym}
[e_1,e_2]_{\dorf} + [e_2,e_1]_{\dorf} = \cD\langle e_1,e_2\rangle_E \ ,
\end{align}
for all $e_1,e_2\in\mathsf{\Gamma}(E)$.
\end{remark}

\begin{remark}\label{rem:Courantminaxioms}
The anchored derivation property \eqref{eq:anchorLeibniz} can be removed from the list of defining properties of a Courant algebroid, as it now follows from the condition \eqref{eq:CourantD1}. Then the Leibniz identity \eqref{eq:Leibnizid} together with \eqref{eq:CourantD1} and \eqref{eq:CourantD2} (or \eqref{eq:CourantD2d}) are a minimal set of three axioms needed to specify a Courant algebroid~\cite{Uchino2002}. For later reference, we also note that \eqref{eq:CourantD2d} together with the bracket homomorphism property \eqref{eq:anchorbracket} imply
\begin{align}\label{eq:rho0cD}
\langle \cD f,\cD g\rangle_E = 0
\end{align}
for all functions $f,g\in C^\infty(\cQ)$, or equivalently that the anchor map $\rho$ vanishes identically on the image of the generalized exterior derivative $\cD$.
\end{remark}

\begin{example}\label{ex:standardCourant}
The most common example is the {standard Courant algebroid}, which features prominently in generalized geometry. It is an extension of the tangent Lie algebroid $(T\cQ,[\,\cdot\,,\,\cdot\,]_{T\cQ},\unit_{T\cQ})$ by cotangent vectors and is based on the {generalized tangent bundle}
\begin{align*}
E = \IT \cQ = T\cQ\oplus T^*\cQ
\end{align*}
over a manifold $\cQ$, with the three natural operations
\begin{align*}
\langle X+\alpha,Y+\beta\rangle_{\IT\cQ} &= \iota_X\beta + \iota_Y\alpha \ , \\[4pt]
\rho(X+\alpha) &= X \ , \\[4pt]
[X+\alpha,Y+\beta]_\dorf &= [X,Y]_{T\cQ} + \pounds_X\beta-\iota_Y\,\de\alpha \ ,
\end{align*}
where the sections of $E=\IT\cQ$ are comprised of vector fields $X,Y\in\mathsf{\Gamma}(T\cQ)$ and $1$-forms $\alpha,\beta\in\mathsf{\Gamma}(T^*\cQ)$. In this example, $\cD=\de$.
\end{example}

It is a celebrated result, due to Roytenberg~\cite{Roytenberg,Roytenberg2002}, and independently \v{S}evera~\cite{Severa-letters}, that a symplectic Lie $2$-algebroid is the same thing as a Courant algebroid.

\begin{theorem}\label{thm:QP2Courant}
There is a one-to-one correspondence between symplectic Lie $2$-algebroids and Courant algebroids.
\end{theorem}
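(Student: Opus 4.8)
The plan is to read the Courant algebroid off the graded data by combining the two preceding theorems, and then to reverse the construction. Fix $n=2$, so that $\omega$ has degree $2$ and the associated Poisson bracket $\{\,\cdot\,,\,\cdot\,\}$ has degree $-2$. By Theorem~\ref{thm:sympln} the homological symplectic vector field is Hamiltonian, $Q=\{\gamma,\,\cdot\,\}$, for a unique integrable $\gamma\in\cA^3$ with $\{\gamma,\gamma\}=0$. By Theorem~\ref{thm:QPLeibniz} the functions of degree $n-1=1$ are the sections of a vector bundle $E\to\cQ$, i.e.\ $\cA^1\simeq\mathsf{\Gamma}(E)$, and the derived brackets $[e,e']_E=-\{\{e,\gamma\},e'\}$ and $\rho(e)\cdot f=\{\{e,\gamma\},f\}$ already equip $E$ with a Leibniz-Loday algebroid structure. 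It therefore remains to produce the non-degenerate symmetric pairing and to verify the two compatibility axioms \eqref{eq:CourantD1} and \eqref{eq:CourantD2}.

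The pairing is the only genuinely new piece of structure, and it is supplied directly by the symplectic form: for $e,e'\in\cA^1\simeq\mathsf{\Gamma}(E)$ the bracket $\{e,e'\}$ has degree $1+1-2=0$, so I would set $\langle e,e'\rangle_E:=\{e,e'\}\in\cA^0=C^\infty(\cQ)$. Since $e,e'$ are odd of degree $1$, the graded symmetry of $\{\,\cdot\,,\,\cdot\,\}$ makes this pairing symmetric, while non-degeneracy of $\omega$ in degree $1$ makes it fibrewise non-degenerate. I would then identify the generalized exterior derivative as $\cD f=\{\gamma,f\}\in\cA^1$, checking against Remark~\ref{rem:Courantcond} that $\langle\cD f,e\rangle_E=\rho(e)\cdot f$, so that indeed $\cD=\rho^*\circ\de$.

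Both Courant axioms then fall out of the graded Jacobi identity. For \eqref{eq:CourantD1} I would expand $\rho(e)\cdot\langle e_1,e_2\rangle_E=\{\{e,\gamma\},\{e_1,e_2\}\}$ and use the Jacobi identity to move the even degree-$2$ element $\{e,\gamma\}$ past the pairing; it distributes over the two slots and reproduces the right-hand side of \eqref{eq:CourantD1}. For the symmetric condition I would work with the equivalent form \eqref{eq:CourantD2d}: expanding $\cD\langle e,e\rangle_E=\{\gamma,\{e,e\}\}$ via Jacobi and using the graded symmetry of the bracket collapses the two resulting terms into $2\,\{\{\gamma,e\},e\}$, which is proportional to $[e,e]_{\dorf}$; matching the sign conventions that fix the derived brackets then gives $[e,e]_{\dorf}=\tfrac12\,\cD\langle e,e\rangle_E$, with the factor of $\tfrac12$ produced automatically by the symmetrization. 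The remaining Leibniz identity \eqref{eq:Leibnizid} is exactly the derived-bracket consequence of $\{\gamma,\gamma\}=0$ already furnished by Theorem~\ref{thm:QPLeibniz}.

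For the converse I would first put $\omega$ into a Darboux-type normal form, with degree-$0$ coordinates $x^i$ on $\cQ$, degree-$1$ coordinates $\xi^a$ and degree-$2$ coordinates $p_i$, so that $\omega=\de p_i\wedge\de x^i+\tfrac12\,\eta_{ab}\,\de\xi^a\wedge\de\xi^b$ with $\eta_{ab}$ a constant non-degenerate symmetric matrix; the $\xi^a$ trivialize $E$ and $\eta$ is its fibre metric. The most general $\gamma\in\cA^3$ is then $\gamma=\rho^i_a(x)\,\xi^a\,p_i+\tfrac16\,T_{abc}(x)\,\xi^a\,\xi^b\,\xi^c$, and given a Courant algebroid I would read $\rho^i_a$ off the anchor and $T_{abc}$ off the totally antisymmetric part of the Dorfman bracket, checking that $\{\gamma,\gamma\}=0$ is equivalent to the full set of Courant axioms; that the two passages are mutually inverse follows because the derived brackets of this $\gamma$ recover the original structure maps. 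I expect the main obstacle to be twofold: establishing the graded Darboux normal form, so that the degree-$2$ coordinates are genuinely auxiliary and $(E,\eta)$ is the complete invariant of the degree-$2$ symplectic manifold, and the careful bookkeeping of Koszul signs in the graded Poisson calculus, on which the precise form of \eqref{eq:CourantD2}, including its factor of $\tfrac12$, delicately depends.
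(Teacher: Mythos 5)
Your proposal is correct and follows essentially the same route as the paper: derived brackets with the degree-$3$ Hamiltonian give the Leibniz-Loday structure, the degree-$(-2)$ Poisson bracket of degree-$1$ functions supplies the pairing $\langle e_1,e_2\rangle_E=\{e_1,e_2\}$, and the two Courant axioms together with the Leibniz identity come from the graded Jacobi identity and $\{\gamma,\gamma\}=0$. The only presentational difference is in the converse, where the paper realizes the degree-$2$ symplectic manifold invariantly as a symplectic submanifold of $T^*[2]E[1]$ via the isometric embedding $E\hookrightarrow E\oplus E^*$ rather than appealing to a graded Darboux normal form, but the structure functions read off from $\gamma$ are the same.
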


\begin{proof}
Let $(\cM,Q,\omega)$ be a symplectic dg-manifold of degree~$2$. Choose local Darboux coordinates $(x^i,\zeta^a,\xi_i)$ with degrees $(0,1,2)$ in which the graded symplectic structure is given by 
\begin{align}\label{eq:omegaCourant}
\omega = \de \xi_i\wedge\de x^i + \tfrac12\,\eta_{ab}\,\de\zeta^a\wedge\de \zeta^b \ ,
\end{align}
where $\eta_{ab}$ is a constant metric on the degree~$1$ subspace of $\cM$. The most general degree~$3$ function $\gamma$ on $\cM$ has the form
\begin{align}\label{eq:gammaCourant}
\gamma = \rho_a^i(x) \,\xi_i\,\zeta^a - \tfrac1{3!}\,T_{abc}(x) \, \zeta^a\,\zeta^b\,\zeta^c \ ,
\end{align}
where $\rho_a^i$ and $T_{abc}$ are degree~$0$ functions on the body $\cQ$ of $\cM$.  We use Theorem~\ref{thm:QPLeibniz} to construct a Leibniz-Loday algebroid $(E,[\,\cdot\,,\,\cdot\,]_\dorf,\rho)$, with operations defined on degree~$1$ functions $e$ which are identified as local sections of a vector bundle $E$ over $\cQ$. Then the Poisson bracket
\begin{align*}
\langle e_1,e_2\rangle_E = \{e_1,e_2\}
\end{align*}
defines a fibrewise symmetric pairing $\langle\,\cdot\,,\,\cdot\,\rangle_E$ on $E$ with local coordinate expression $\eta_{ab}$. It satisfies the Courant algebroid axioms \eqref{eq:CourantD1} and \eqref{eq:CourantD2} as a consequence of the derivation rule and the Jacobi identity for the Poisson bracket $\{\,\cdot\,,\,\cdot\,\}$ induced by the symplectic structure $\omega$.

Conversely, given a Courant algebroid $(E,[\,\cdot\,,\,\cdot\,]_\dorf,\langle\,\cdot\,,\,\cdot\,\rangle_E,\rho)$ on $\cQ$, we define a symplectic dg-manifold $(\cM,Q,\omega)$ of degree~$2$ by the symplectic submanifold of $T^*[2]E[1]$ corresponding to the isometric embedding $E\hookrightarrow E\oplus E^*$ with respect to the Courant algebroid pairing and the canonical dual pairing. Then $x^i$ are local coordinates on $\cQ$, $\xi_i$ are local fibre coordinates of the shifted cotangent bundle $T^*[2]\cQ$ corresponding to the holonomic vector fields $\frac\partial{\partial x^i}$, and $\zeta^a$ are local fibre coordinates of the shifted vector bundle $E[1]$ corresponding to a choice of basis $e^a$ of sections of the dual bundle $E^*$. With $e_a$ the basis of sections of $E$ dual to $e^a$, the structure functions in \eqref{eq:omegaCourant} and \eqref{eq:gammaCourant} are given by
\begin{align*}
\langle e_a, e_b\rangle_E = \eta_{ab} \ , \quad \rho(e_a) = \rho_a^i\,\frac\partial{\partial x^i} \qquad \mbox{and} \qquad \langle[e_a,e_b]_\dorf,e_c\rangle_E = T_{abc} \ ,
\end{align*}
and the Courant algebroid axioms imply the Maurer-Cartan equation $\{\gamma,\gamma\}=0$. The differential $Q=\{\gamma,\,\cdot\,\}$ sends a function $f\in C^\infty(\cQ)$ to $\rho^{\rm t}\de f$.
\end{proof}

We now choose the Liouville $1$-form 
\begin{align}\label{eq:LiouvilleCourant}
\vartheta=\xi_i\,\de x^i + \tfrac12\,\zeta^a\,\eta_{ab}\,\de\zeta^b
\end{align}
on $\cM$ and apply the AKSZ construction with the Hamiltonian \eqref{eq:gammaCourant}. Let $\Sigma_3$ be an oriented compact $3$-manifold, and choose a grading-preserving connection to fix an isomorphism $\cM\simeq E[1]\oplus T^*[2]\cQ$. The AKSZ action functional is defined on the space of degree~$0$ maps $\hat X:T[1]\Sigma_3\to \cM$, which are given by a smooth map $X:\Sigma_3\to\cQ$, a section $A\in\mathsf{\Gamma}(T^*\Sigma_3\otimes X^*E)$, and a section $F\in\mathsf{\Gamma}(\midwedge^2T^*\Sigma_3\otimes X^*T^*\cQ)$. The action functional then reads
\begin{align}\label{eq:AKSZ3daction}
S(X,A,F) = \int_{\Sigma_3} \, \langle F,\de X\rangle + \frac12\,\langle A,\de A\rangle_E - \langle F,(\rho\circ X)A\rangle + \frac1{3!} \, \langle A ,[A,A]_\dorf\rangle_E \ ,
\end{align}
where we view the fields as differential forms $A\in\Omega^1(\Sigma_3,X^*E)$, $\de X\in\Omega^1(\Sigma_3,X^*T\cQ)$ and $F\in\Omega^2(\Sigma_3,X^*T^*\cQ)$, and the pairings are taken in the pullback bundles together with the exterior products of differential forms. This is the action functional of the \emph{Courant sigma-model}~\cite{Ikeda:2002wh,Hofman:2002rv,Hofman:2002jz,Roytenberg:2006qz}, which is a canonical three-dimensional topological field theory associated to any Courant algebroid; its BV quantization thus gives a quantization of Courant algebroids, though as yet this has not been fully achieved. This AKSZ sigma-model is a vast generalization of three-dimensional Chern-Simons gauge theory: For the special case when $\cQ$ is a point, a Courant algebroid is just a quadratic Lie algebra, and \eqref{eq:AKSZ3daction} is the classical Chern-Simons functional on the $3$-manifold $\Sigma_3$.

\begin{remark}
Gauge invariance of the action functional \eqref{eq:AKSZ3daction} under local BRST transformations is equivalent to the axioms and properties of a Courant algebroid structure on the vector bundle $E\to\cQ$. In particular, when $E=\IT\cQ=T\cQ\oplus T^*\cQ$ is the generalized tangent bundle (with general anchor map $\rho:\IT\cQ\to T\cQ$ and Dorfman bracket), the tensor
\begin{align*}
T(e_1,e_2,e_3) := \langle [e_1,e_2]_\dorf,e_3\rangle_{\IT \cQ} \ ,
\end{align*}
for $e_1,e_2,e_3\in\mathsf{\Gamma}(\IT\cQ)$, encodes the \emph{fluxes} of supergravity and the axioms give their Bianchi identities.
\end{remark}

\medskip

\subsection{Gauge Algebras and Courant Brackets} ~\\[5pt]
\label{subsec:Courantbracket}
The infinitesimal symmetries of symplectic Lie $n$-algebroids are linked to gauge symmetries of the corresponding AKSZ field theories. Here we focus on those which are given by inner derivations, in a suitable sense, which we shall generally call `gauge transformations'. 

For the first two rungs of the AKSZ ladder these symmetries are essentially the same and are easy to describe. For $n=0$, an infinitesimal symmetry of a symplectic manifold $(\cQ,\omega)$ corresponds to a symplectic vector field $X\in\mathsf{\Gamma}(T\cQ)$, that is, $\pounds_X\omega=0$; they close a Lie algebra under the Lie bracket of vector fields by virtue of the Cartan structure equation
\begin{align*}
[\pounds_X,\pounds_Y]_\circ = \pounds_{[X,Y]_{T\cQ}}
\end{align*}
for the Lie derivatives along $X,Y\in\mathsf{\Gamma}(T\cQ)$. The Hamiltonian vector fields $X_f$ by definition satisfy $\iota_{X_f}\omega=\de f$, for functions $f\in C^\infty(\cQ)$, and form a natural subalgebra which is isomorphic to the Lie algebra of smooth functions on $\cQ$ with the corresponding Poisson bracket; these are called (infinitesimal) canonical transformations and we think of them as infinitesimal `gauge symmetries', generated by the action of functions on $\cQ$ through the Poisson bracket. 
For $n=1$, an infinitesimal symmetry of a Poisson manifold $(\cQ,\pi)$ similarly corresponds to a Poisson vector field $X\in\mathsf{\Gamma}(T\cQ)$, that is, $\pounds_X\pi=0$. Again the Hamiltonian vector fields $X_f:=\pi^\sharp\de f$ form a Lie algebra isomorphic to the Poisson algebra: $[X_f,X_g]_{T\cQ} = X_{\{f,g\}_\pi}$ for $f,g\in C^\infty(\cQ)$.

For $n=2$, the situation is more involved. Let $(E,[\,\cdot\,,\,\cdot\,]_\dorf,\langle\,\cdot\,,\,\cdot\,\rangle_E,\rho)$ be a Courant algebroid on a manifold $\cQ$. By definition, the adjoint action of $\mathsf{\Gamma}(E)$ on itself by the Dorfman bracket
\begin{align*}
\pounds_e^\dorf := [e,\,\cdot\,]_\dorf \ ,
\end{align*}
for $e\in\mathsf{\Gamma}(E)$,
is a first order differential operator whose symbol is the vector field $\rho(e)$ on $\cQ$. It acts as an inner derivation of $(E,[\,\cdot\,,\,\cdot\,]_\dorf,\langle\,\cdot\,,\,\cdot\,\rangle_E)$, that is, it is an infinitesimal symmetry of the Courant algebroid. The operator $\pounds_e^\dorf:\mathsf{\Gamma}(E)\to\mathsf{\Gamma}(E)$ is called a \emph{generalized Lie derivative} on the Courant algebroid; the reason for the terminology is best motivated by recalling the standard Courant algebroid of Example~\ref{ex:standardCourant} where it coincides with the generalized Lie derivative of generalized geometry.

The collection of generalized Lie derivatives for all sections of $E$ is a vector space which we will think of as the (infinitesimal) `gauge transformations' of the Courant algebroid. They should close a Lie algebra with respect to the commutator bracket on endomorphisms of $\mathsf{\Gamma}(E)$; we refer to this Lie algebra as the \emph{gauge algebra} of the Courant algebroid. From the Leibniz identity \eqref{eq:Leibnizid} we directly obtain
\begin{align}\label{eq:Dorfmanclosure}
\pounds^\dorf_{e_1}\circ\pounds^\dorf_{e_2} = \pounds^\dorf_{[e_1,e_2]_\dorf} + \pounds^\dorf_{e_2}\circ\pounds^\dorf_{e_1} \ ,
\end{align}
for $e_1,e_2\in\mathsf{\Gamma}(E)$, which shows that the gauge closure can be expressed in terms of the Dorfman bracket on sections of $E$. However, the Dorfman bracket is not skew-symmetric so it is not the natural bracket operation to use for this algebra. To write a manifestly skew-symmetric closure relation, we subtract from \eqref{eq:Dorfmanclosure} the corresponding identity with $e_1$ and $e_2$ interchanged, and after rearrangement we obtain the gauge algebra
\begin{align}\label{eq:Courantclosure}
\big[\pounds_{e_1}^\dorf,\pounds_{e_2}^\dorf\big]_\circ = \pounds^\dorf_{[e_1,e_2]_\cour} \ ,
\end{align}
where
\begin{align*}
[e_1,e_2]_\cour := \tfrac12\,\big([e_1,e_2]_\dorf - [e_2,e_1]_\dorf \big)
\end{align*}
is the skew-symmetrization of the Dorfman bracket. The skew-symmetric bracket on sections $[\,\cdot\,,\,\cdot\,]_\cour$ is called a \emph{Courant bracket}.

In contrast to the Dorfman bracket, the Courant bracket is neither an algebroid bracket nor a Lie bracket, as it violates both the anchored derivation property \eqref{eq:anchorLeibniz} and the Jacobi identity (which is equivalent to the Leibniz identity \eqref{eq:Leibnizid} for a skew-symmetric bracket). Nevertheless, it can be used to completely characterize the Courant algebroid axioms~\cite{Roytenberg}.

\begin{proposition}\label{prop:CourantalgrboidC}
Let $(E,[\,\cdot\,,\,\cdot\,]_\dorf,\langle\,\cdot\,,\,\cdot\,\rangle_E,\rho)$ be a Courant algebroid over a manifold $\cQ$. Then the compatibility conditions on the Dorfman bracket can be equivalently expressed in terms of the Courant bracket as
\begin{align*}
[e_1,f\,e_2]_\cour &= f\,[e_1,e_2]_\cour + \big(\rho(e_1)\cdot f\big) \, e_2 - \tfrac12\,\langle e_1,e_2\rangle_E \, \cD f \ , \\[4pt]
{\sf Jac}_\cour(e_1,e_2,e_3) &= \cD\,{\sf Nij}_\cour(e_1,e_2,e_3) \ , \\[4pt]
\rho(e_1)\cdot\langle e_2,e_3\rangle_E &= \big\langle[e_1,e_2]_{\cour} + \tfrac12\,\cD\langle e_1,e_2\rangle_E ,e_3\big\rangle_E + \big\langle e_2,[e_1,e_3]_{\cour} + \tfrac12\,\cD\langle e_1,e_3\rangle_E\big\rangle_E \ ,
\end{align*}
for all $e_1,e_2,e_3\in\mathsf{\Gamma}(E)$ and $f\in C^\infty(\cQ)$, where 
\begin{align*}
{\sf Jac}_\cour(e_1,e_2,e_3) := [[e_1,e_2]_\cour,e_3]_\cour + [[e_3,e_1]_\cour,e_2]_\cour + [[e_2,e_3]_\cour,e_1]_\cour
\end{align*}
is the {Jacobiator} of the Courant bracket, and
\begin{align*}
{\sf Nij}_\cour(e_1,e_2,e_3) := \tfrac1{3!} \, \big( \langle[e_1,e_2]_\cour,e_3\rangle_E + \langle[e_3,e_1]_\cour,e_2\rangle_E + \langle[e_2,e_3]_\cour,e_1\rangle_E \big)
\end{align*}
is the {Nijenhuis tensor} of the Courant bracket.
\end{proposition}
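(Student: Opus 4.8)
The plan is to reduce all three identities to the defining axioms of the Dorfman bracket by systematically trading the Courant bracket for the Dorfman bracket via the symmetric-part relation \eqref{eq:Dorfsym}, which I rewrite as $[e_1,e_2]_\dorf = [e_1,e_2]_\cour + \tfrac12\,\cD\langle e_1,e_2\rangle_E$. Two elementary observations will be used repeatedly: first, pairing $\cD f$ with a section recovers the anchor, $\langle\cD f,e\rangle_E = \rho(e)\cdot f$, which is immediate from the definition $\cD = \rho^*\circ\de$ in Remark~\ref{rem:Courantcond} together with the fact that $\rho^{\rm t}$ is the transpose of $\rho$; and second, $\cD$ is a derivation, $\cD(f\,g) = f\,\cD g + g\,\cD f$, since $\rho^*$ is a bundle map and $\de$ is a derivation.

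The third identity is the easiest. Since $[e_1,e_2]_\cour + \tfrac12\,\cD\langle e_1,e_2\rangle_E = [e_1,e_2]_\dorf$ and likewise for the pair $(e_1,e_3)$, its right-hand side collapses to $\langle[e_1,e_2]_\dorf,e_3\rangle_E + \langle e_2,[e_1,e_3]_\dorf\rangle_E$, which is exactly the pairing-invariance axiom \eqref{eq:CourantD1}. For the first identity I would expand $[e_1,f\,e_2]_\cour = \tfrac12\big([e_1,f\,e_2]_\dorf - [f\,e_2,e_1]_\dorf\big)$, treat the first term with the anchored derivation property \eqref{eq:anchorLeibniz}, and obtain the first-slot behaviour of $[f\,e_2,e_1]_\dorf$ from \eqref{eq:Dorfsym} together with the derivation property of $\cD$. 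Collecting the $\cD f$-terms and using \eqref{eq:Dorfsym} once more to recombine $[e_1,e_2]_\dorf - \tfrac12\,\cD\langle e_1,e_2\rangle_E = [e_1,e_2]_\cour$ then yields the stated formula.

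The second identity is the substantive one. I would first establish two auxiliary consequences of the axioms: $[e,\cD f]_\dorf = \cD(\rho(e)\cdot f)$ and, crucially, $[\cD f,e]_\dorf = 0$. The former follows by pairing against an arbitrary section $e'$, applying \eqref{eq:CourantD1}, and using the anchor-homomorphism property \eqref{eq:anchorbracket}; the latter then drops out of \eqref{eq:Dorfsym}. Armed with these, I expand one cyclic summand $[[e_1,e_2]_\cour,e_3]_\cour$ in Dorfman brackets: the vanishing $[\cD f,e]_\dorf = 0$ kills the inner correction term, leaving $[[e_1,e_2]_\dorf,e_3]_\dorf$ plus two $\cD$-exact pieces. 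Summing cyclically, the pure-Dorfman part $\sum_{\rm cyc}[[e_1,e_2]_\dorf,e_3]_\dorf$ is reorganized through the Leibniz identity \eqref{eq:Leibnizid} into $-2\,{\sf Jac}_\cour$ plus a $\cD$-exact term proportional to $N := \sum_{\rm cyc}\langle[e_1,e_2]_\cour,e_3\rangle_E = 6\,{\sf Nij}_\cour$.

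The final step solves the resulting self-referential equation $3\,{\sf Jac}_\cour = \cD\big(N - \tfrac12\,S_2 + \tfrac14\,S_3\big)$, where $S_2$ and $S_3$ are the cyclic sums of $\langle[e_1,e_2]_\dorf,e_3\rangle_E$ and $\rho(e_3)\cdot\langle e_1,e_2\rangle_E$ respectively. The relation $\langle[e_1,e_2]_\dorf,e_3\rangle_E = \langle[e_1,e_2]_\cour,e_3\rangle_E + \tfrac12\,\rho(e_3)\cdot\langle e_1,e_2\rangle_E$ gives $S_2 = N + \tfrac12\,S_3$, and the anchor-derivative terms cancel to leave $N - \tfrac12\,S_2 + \tfrac14\,S_3 = \tfrac12\,N$, whence ${\sf Jac}_\cour = \tfrac16\,\cD N = \cD\,{\sf Nij}_\cour$. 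The hard part will be precisely this bookkeeping: one must track the several $\cD$-exact corrections through the skew-symmetrizations and cyclic sums, and the result hinges on the nontrivial cancellation of the anchor-derivative contributions, with the auxiliary identity $[\cD f,e]_\dorf = 0$ being what makes the computation close cleanly.
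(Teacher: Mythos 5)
Your proposal is correct and follows exactly the route the paper takes: the paper's proof is the one-line instruction to substitute $[e_1,e_2]_\dorf = [e_1,e_2]_\cour + \tfrac12\,\cD\langle e_1,e_2\rangle_E$ into the anchored derivation property, the Leibniz identity and the metric compatibility condition, which is precisely what you do, merely with the bookkeeping (and the auxiliary identities $[e,\cD f]_\dorf=\cD(\rho(e)\cdot f)$, $[\cD f,e]_\dorf=0$) written out. Your coefficients check out, so this is just a fully expanded version of the paper's argument.
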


\begin{proof}
The three conditions on the Courant bracket easily follow from using \eqref{eq:Dorfsym} to express its deviation from the Dorfman bracket as
\begin{align}\label{eq:DorfCourant}
[e_1,e_2]_\dorf = [e_1,e_2]_\cour + \tfrac12\,\cD\langle e_1,e_2\rangle_E \ ,
\end{align}
and substituting this into the anchored derivation property \eqref{eq:anchorLeibniz}, the Leibniz identity \eqref{eq:Leibnizid}, and the metric compatibility condition \eqref{eq:CourantD1}, respectively.
\end{proof}

\medskip

\subsection{Flat $L_\infty$-Algebras} ~\\[5pt]
\label{subsec:flatLinfty}
The formulation of the gauge algebra \eqref{eq:Courantclosure} in terms of the Courant bracket still leaves open one puzzle: the violation of the Jacobi identity by the Courant bracket from Proposition~\ref{prop:CourantalgrboidC} appears to be in contradiction with the vanishing Jacobiator ${\sf Jac}_\circ=0$ of the commutator bracket. This in fact poses no problem as one can explicitly check $[\cD\,{\sf Nij}_\cour(e_1,e_2,e_3),e]_\dorf=0$ for all sections $e,e_1,e_2,e_3\in\mathsf{\Gamma}(E)$. However, the violation of the Jacobi identity itself in Proposition~\ref{prop:CourantalgrboidC}, which is controlled by the generalized exterior derivative of the Nijenhuis tensor of the Courant bracket, suggests a more natural formulation of the gauge algebra and its closure in the language of strong homotopy Lie algebras, or $L_\infty$-algebras, which are homotopy coherent weakenings of the axioms of a Lie algebra. They generally underlie the gauge structure and dynamics of classical perturbative field theories. This formulation is dual to the BV formalism and is naturally tailored to control field theories with open gauge algebras and reducible symmetries, like the Courant sigma-model, as well as violations of the Jacobi identities. 

We begin by recalling the definitions~\cite{Lada:1994mn}.

\begin{definition}\label{def:Linfty}
Let $R$ be a commutative ring. An \emph{$L_\infty$-algebra} over $R$ is a free graded $R$-module $L=\bigoplus_{k\in\IZ}\, L_k$ with a degree~$1$ derivation
\begin{align*}
Q:\midodot^\bullet L^*[1] \longrightarrow \midodot^\bullet L^*[1] 
\end{align*}
which is a differential, $Q^2=0$, making the symmetric algebra $\midodot^\bullet L^*[1]$ into a commutative dg-algebra over $R$, called the \emph{Chevalley-Eilenberg algebra} of the $L_\infty$-algebra~$L$. 

An \emph{$L_\infty$-morphism} from an $L_\infty$-algebra $(L,Q)$ to an $L_\infty$-algebra $(L',Q')$ is an algebra homomorphism $\Psi:\midodot^\bullet L^*[1] \longrightarrow \midodot^\bullet L'^*[1] $ of degree~$0$ which intertwines the derivations: $\Psi\circ Q = Q'\circ\Psi$.
\end{definition}

By virtue of the Leibniz rule, the derivation is determined entirely by its action on $L^*[1]$, hence we may view it as an $R$-linear map $Q:L^*[1]\to \midodot^\bullet L^*[1]$. Let $Q_m:L^*[1]\to\midodot^mL^*[1]$ be the homogeneous components of $Q$, for $m=0,1,2,\dots$. Let $s:L^*\to L^*[1]$ be the suspension map of degree~$1$; this is the tautological isomorphism which identifies $v\in L_k^*$ with $v\in L^*[1]_{k-1}:=L^*_k$. Taking the dual gives a sequence of maps $Q_m^*:(\midodot^mL^*[1])^*\to (L^*[1])^*$ which, after composing with the suspension, we can consider as maps
\begin{align*}
\ell_m := s^{-1}\circ Q_m^* \circ s^{\otimes m} : \midwedge^mL \longrightarrow L
\end{align*}
for $m=0,1,2,\dots$, which are called the \emph{$m$-brackets} of the $L_\infty$-algebra $L$; they are multilinear graded skew-symmetric maps of degree~$2-m$ which satisfy a sequence of higher homotopy Jacobi identities among them, encoded in the condition $Q^2=0$. 

A \emph{flat} $L_\infty$-algebra is an $L_\infty$-algebra with $\ell_0=0$. In this case $\ell_1$ is a differential and $\ell_2$ is a cochain map obeying the Jacobi identity up to exact terms; hence the cohomology of the cochain complex $(L,\ell_1)$ of a flat $L_\infty$-algebra is a graded Lie algebra. A \emph{curved} $L_\infty$-algebra is an $L_\infty$-algebra with $\ell_0\neq0$. Differential graded Lie algebras can be regarded as $L_\infty$-algebras with differential $\ell_1$, bracket $\ell_2$ and $\ell_m=0$ for all~$m>2$.

On general grounds, any dg-manifold $(\cM,Q)$ is naturally described as a (local) $L_\infty$-algebra which completely captures its algebraic structure: in this case $L$ is the graded vector space of polynomial functions on $\cM$~\cite{Voronov2005}. In these applications, we take $R=\IR$ and work in an appropriate category of topological vector spaces with the natural morphisms, tensor products, and so on, though we do not indicate this explicitly in the notation. In particular, $L^*:={\sf Hom}_{\IR}(L,\IR)$ means the continuous dual to $L$, and $\midodot^\bullet L^*$ means the completed symmetric algebra defined using the continuous product and the completed projective tensor product. 

Here we are interested in symplectic Lie $n$-algebroids $(\cM,Q,\omega)$, which have associated flat $n$-term $L_\infty$-algebras, comprising functions on $\cM$ of degrees $0,1,\dots,n-1$, whose $m$-brackets $\ell_m$ can be computed from derived  brackets with the Hamiltonian $\gamma$ of the symplectic dg-structure~\cite{Ritter:2015ffa}. For $n=0$ all brackets are identically zero, giving the trivial $L_\infty$-algebra on a symplectic manifold $(\cQ,\omega)$, while for $n=1$ we obtain only one non-zero bracket $\ell_2=\{\,\cdot\,,\,\cdot\,\}_\pi$ on $L=L_0=C^\infty(\cQ)$, which recovers the Lie algebra of Poisson brackets on a Poisson manifold $(\cQ,\pi)$. For $n=2$, we recover the $L_\infty$-algebra of a Courant algebroid, originally due to Roytenberg and Weinstein~\cite{Roytenberg:1998vn}.

\begin{theorem}\label{thm:LinftyCourant}
Let $(E,[\,\cdot\,,\,\cdot]_\dorf,\langle\,\cdot\,,\,\cdot\,\rangle_E,\rho)$ be a Courant algebroid on a manifold $\cQ$. Then there is a flat $2$-term $L_\infty$-algebra on $L=L_{-1}\oplus L_0$ with
\begin{align*}
L_{-1} = C^\infty(\cQ) \qquad \mbox{and} \qquad L_0 = \mathsf{\Gamma}(E) \ ,
\end{align*}
whose non-zero brackets are given by
\begin{align*}
\ell_1(f) &= \cD f \ , \\[4pt]
\ell_2(e_1,e_2) = [e_1,e_2]_\cour \quad &  \ \, , \, \ \quad
\ell_2(e_1,f) = \tfrac12\,\langle e_1,\cD f \rangle_E \ , \\[4pt]
\ell_3(e_1,e_2,e_3) &= -{\sf Nij}_\cour(e_1,e_2,e_3) \ ,
\end{align*}
for all $f\in C^\infty(\cQ)$ and $e_1,e_2,e_3\in\mathsf{\Gamma}(E)$.
\end{theorem}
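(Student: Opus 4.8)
The plan is to exploit the graded-geometric incarnation of the Courant algebroid from Theorem~\ref{thm:QP2Courant} rather than building the $L_\infty$-structure by hand. By that correspondence the data $(E,[\,\cdot\,,\,\cdot\,]_\dorf,\langle\,\cdot\,,\,\cdot\,\rangle_E,\rho)$ is packaged into a symplectic Lie $2$-algebroid $(\cM,Q,\omega)$ with a degree~$3$ Hamiltonian $\gamma$ obeying the Maurer-Cartan equation $\{\gamma,\gamma\}=0$. The general higher-derived-bracket construction of Voronov~\cite{Voronov2005}, in the symplectic form used in~\cite{Ritter:2015ffa}, then produces on the truncation $L=\cA^0\oplus\cA^1$ of low-degree functions a flat $L_\infty$-algebra whose $m$-brackets are the graded-symmetrized derived brackets
\begin{align*}
\ell_1(x)=\{\gamma,x\} \quad,\quad \ell_m(x_1,\dots,x_m)=\pm\big\{\cdots\{\{\gamma,x_1\},x_2\}\cdots,x_m\big\}\big|_{\rm sym} \ ,
\end{align*}
the crucial point being that all homotopy Jacobi identities are automatic consequences of $Q^2=0$, equivalently of $\{\gamma,\gamma\}=0$. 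Setting $L_{-1}=\cA^0=C^\infty(\cQ)$ and $L_0=\cA^1=\mathsf{\Gamma}(E)$, flatness ($\ell_0=0$) holds trivially since the relevant target space $L_2$ of degree~$2$ vanishes, and for the same degree-counting reason $\ell_m=0$ for all $m\geq4$ (e.g.\ $\ell_4$ has degree $2-4=-2$ and would map $L_0^{\otimes4}$ into $L_{-2}=0$). Thus only $\ell_1,\ell_2,\ell_3$ survive, concentrated in exactly the components listed in the statement.

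It then remains to compute these three brackets explicitly and fix the normalizations. The identity $\ell_1(f)=\cD f$ is read off from the proof of Theorem~\ref{thm:QP2Courant}, where $Q=\{\gamma,\,\cdot\,\}$ was shown to send $f$ to $\rho^{\rm t}\de f=\cD f$. For $\ell_2$ one uses the derived-bracket formulas of Theorem~\ref{thm:QPLeibniz}: the skew part of $-\{\{e_1,\gamma\},e_2\}$ is the Courant bracket $[e_1,e_2]_\cour$, while the mixed bracket evaluates through the anchor as $\ell_2(e_1,f)=\tfrac12\,\rho(e_1)\cdot f=\tfrac12\,\langle e_1,\cD f\rangle_E$, using $\langle e_1,\cD f\rangle_E=\rho(e_1)\cdot f$ which follows from Remark~\ref{rem:Courantcond}. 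Finally the totally graded-antisymmetric triple derived bracket reproduces the Nijenhuis tensor, giving $\ell_3=-{\sf Nij}_\cour$; here the overall sign and the $\tfrac1{3!}$ factor are fixed by the symmetrization conventions.

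As a cross-check, and to make the content transparent, I would verify the lowest homotopy Jacobi identities directly against Proposition~\ref{prop:CourantalgrboidC}. The relation that $\ell_1$ is a derivation of $\ell_2$ reduces to $[e_1,\cD f]_\cour=\tfrac12\,\cD\langle e_1,\cD f\rangle_E$, which follows from \eqref{eq:DorfCourant} together with $[e_1,\cD f]_\dorf=\cD(\rho(e_1)\cdot f)$ and \eqref{eq:rho0cD}; the pure degree-$0$ Jacobiator relation is precisely the second line of Proposition~\ref{prop:CourantalgrboidC}, namely ${\sf Jac}_\cour=\cD\,{\sf Nij}_\cour=-\cD\ell_3$; and the mixed relation with one function input is the reformulated metric-compatibility condition in the third line of that proposition.

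The step I expect to be the genuine obstacle is the top coherence relation among $\ell_2$ and $\ell_3$ --- the $n=4$ homotopy Jacobi identity on four sections of $E$, schematically $\sum\pm\,\ell_2(\ell_3(\cdots),\cdot)=\sum\pm\,\ell_3(\ell_2(\cdots),\cdot,\cdot)$. In the direct approach this is a nontrivial quadratic identity for the Nijenhuis tensor that must be extracted from the full Leibniz identity \eqref{eq:Leibnizid}, and it is by far the most calculation-heavy check. The virtue of routing the proof through the graded-symplectic picture is precisely that this identity, like all the others, is delivered for free by $\{\gamma,\gamma\}=0$, so the only real work is the careful sign and normalization bookkeeping in identifying the derived brackets with the stated formulas.
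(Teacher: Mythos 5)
Your proposal is correct and follows essentially the route the paper itself indicates: the paper states this result without proof, attributing it to Roytenberg--Weinstein and noting that the brackets arise as derived brackets with the Hamiltonian $\gamma$ of the symplectic Lie $2$-algebroid of Theorem~\ref{thm:QP2Courant}, with all homotopy Jacobi identities following from $\{\gamma,\gamma\}=0$. Your degree-counting for the vanishing of $\ell_0$ and of $\ell_m$ for $m\geq 4$, the identification of the three surviving brackets, and the consistency checks against Proposition~\ref{prop:CourantalgrboidC} all match the intended argument.
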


\begin{remark}\label{rem:LinftyCourant}
In the context of gauge algebras, the brackets from $L_{-1}$ in Theorem~\ref{thm:LinftyCourant} represent non-trivial ``higher'' gauge symmetries, that is, gauge symmetries among the gauge transformations in $L_0$ themselves; in other words, the gauge symmetries of a Courant algebroid are \emph{reducible}. This happens as well in the BV formulation of the Courant sigma-model, which has an open gauge algebra of reducible symmetries. The corresponding dual $L_\infty$-algebra formalism involves infinitely many brackets. An explicit $L_\infty$-morphism from the Courant algebroid $L_\infty$-algebra of Theorem~\ref{thm:LinftyCourant} to the gauge $L_\infty$-algebra of the Courant sigma-model is constructed by Grewcoe and Jonke in~\cite{Grewcoe:2020ren}; this requires extending the $2$-term cochain complex of Theorem~\ref{thm:LinftyCourant} by the degree~$1$ subspace $L_1=\mathsf{\Gamma}(T\cQ)$ and the anchor map $\ell_1|_{L_0}=\rho$ in order to accomodate the field-dependent gauge algebra of the Courant sigma-model.
\end{remark}

\section{Metric Algebroids}
\label{sec:metricalg}
In this section we will introduce and study a weakening of the notion of Courant algebroid from Section~\ref{subsec:Courant}, which is the natural generalization for the algebroids underlying doubled geometry that we consider later on~\cite{Vaisman2012}.

\medskip

\subsection{Metric Algebroids and Pre-Courant Algebroids} 

\begin{definition} \label{malg}
A \emph{metric algebroid} over a manifold $\cQ$ is an algebroid $(E,\llbracket\,\cdot\,,\,\cdot\,\rrbracket_\dorf,\rho)$ with a fibrewise non-degenerate pairing $\langle\,\cdot\,,\,\cdot\,\rangle_E\in\mathsf{\Gamma}(\midodot^2E^*)$ which is preserved by the bracket operation:
\begin{align}
\rho(e)\cdot\langle e_1, e_2\rangle_E &= \langle\llbracket e,e_1\rrbracket_\dorf, e_2\rangle_E + \langle e_1, \llbracket e, e_2\rrbracket_\dorf\rangle_E \ , \label{eq:metric1} \\[4pt]
\langle\llbracket e, e\rrbracket_\dorf, e_1\rangle_E &= \tfrac{1}{2}\, \rho(e_1)\cdot\langle e, e\rangle_E \ , \label{eq:metric2}
\end{align}
for all $e,e_1, e_2 \in \mathsf{\Gamma}(E)$. The bracket of sections $\llbracket\,\cdot\,,\,\cdot\,\rrbracket_\dorf$ is called a \emph{D-bracket}.

A \emph{metric algebroid morphism} from a metric algebroid $(E,\llbracket\,\cdot\,,\,\cdot\,\rrbracket_\dorf,\langle\,\cdot\,,\,\cdot\,\rangle_E,\rho)$ to a metric algebroid $(E',\llbracket\,\cdot\,,\,\cdot\,\rrbracket'_\dorf,\langle\,\cdot\,,\,\cdot\,\rangle_{E'},\rho')$ over the same manifold is an algebroid morphism $\psi$ which is an isometry, that is, $\langle\,\cdot\,,\,\cdot\,\rangle_{E'}\circ(\psi\times\psi)=\langle\,\cdot\,,\,\cdot\,\rangle_E$.
\end{definition}

A metric algebroid $(E,\llbracket\,\cdot\,,\,\cdot\,\rrbracket_\dorf,\langle\,\cdot\,,\,\cdot\,\rangle_E,\rho)$ is called \emph{regular} if its anchor map $\rho:E\to T\cQ$ has constant rank, and \emph{transitive} if $\rho$ is surjective. A \emph{split metric algebroid} is a metric algebroid whose underlying vector bundle $E\to \cQ$ is the Whitney sum $E=A\oplus A^*$ of a vector bundle $A\to \cQ$ and its dual $A^*\to \cQ$.

\begin{example}\label{ex:Courantmetric}
A Courant algebroid is precisely a metric algebroid which is also a Leibniz-Loday algebroid. 
\end{example} 

\begin{remark}\label{rem:Courantmetric}
As can be anticipated from Example~\ref{ex:Courantmetric}, metric algebroids share some features in common with Courant algebroids. In particular, the anchored derivation property \eqref{eq:anchorLeibniz} again follows from the axiom \eqref{eq:metric1}, and the discussion of Remark~\ref{rem:Courantcond} applies verbatum to a metric algebroid to show that the symmetric part of the D-bracket $\llbracket\,\cdot\,,\,\cdot\,\rrbracket_\dorf$ can be written in terms of the generalized exterior derivative $\cD:C^\infty(\cQ)\to \mathsf{\Gamma}(E)$ and the metric $\langle\,\cdot\,,\,\cdot\,\rangle_E$ analogously to~\eqref{eq:Dorfsym}:
\begin{align}\label{eq:Courantmetric}
\llbracket e_1,e_2\rrbracket_\dorf + \llbracket e_2,e_1\rrbracket_\dorf = \cD\langle e_1,e_2\rangle_E \ ,
\end{align}
for all $e_1,e_2\in\mathsf{\Gamma}(E)$. For later use, we also note that \eqref{eq:Courantmetric} together with the anchored derivation property \eqref{eq:anchorLeibniz} imply the left derivation property
\begin{align}\label{eq:leftderivation}
\llbracket f\,e_1,e_2\rrbracket_\dorf = f\,\llbracket e_1,e_2\rrbracket_\dorf - \big(\rho(e_2)\cdot f\big)\,e_1 + \cD f \, \braket{e_1,e_2}_E \ ,
\end{align}
for all $f\in C^\infty(\cQ)$ and $e_1,e_2\in\mathsf{\Gamma}(E)$.
\end{remark}

Despite the similarities noted in Remark~\ref{rem:Courantmetric}, the failure of the Leibniz identity \eqref{eq:Leibnizid} for a generic metric algebroid means that its anchor map $\rho$ is not a bracket morphism in general. On the other hand, one can impose the homomorphism property \eqref{eq:anchorbracket} independently, and arrive at an important class of (non-Courant) metric algebroids which resemble Courant algebroids in the closest possible way~\cite{Vaisman2005,Hansen:2009zd,Liu:2012qub}.

\begin{definition}\label{def:preCourant}
A metric algebroid $(E,\llbracket\,\cdot\,,\,\cdot\,\rrbracket_\dorf,\langle\,\cdot\,,\,\cdot\,\rangle_E,\rho)$ over a manifold $\cQ$ is a \emph{pre-Courant algebroid} if its anchor map $\rho:E\to T\cQ$ is a bracket morphism:
$$
\rho(\llbracket e_1, e_2\rrbracket_\dorf)= [\rho(e_1), \rho(e_2)]_{T\cQ} \ ,
$$
for all $e_1 , e_2 \in \mathsf{\Gamma}(E)$. 
\end{definition}

\begin{example}\label{ex:exactmetric}
Let $(E,\llbracket\,\cdot\,,\,\cdot\,\rrbracket_\dorf,\langle\,\cdot\,,\,\cdot\,\rangle_E,\rho)$ be a regular pre-Courant algebroid over a manifold~$\cQ$. Let $\rho^*:T^*\cQ\to E$ be the map defined by $\langle\rho^*(\alpha),e\rangle_E = \langle \rho^{\rm t}(\alpha),e\rangle$ for $\alpha\in\Omega^1(\cQ)$ and $e\in\mathsf{\Gamma}(E)$. Then $\Im(\rho^*)$ is a coisotropic subbundle of $E$ and $\mathsf{\Gamma}(\Im(\rho^*))$ is an abelian ideal of $(\mathsf{\Gamma}(E), \llbracket\,\cdot \, , \, \cdot\,\rrbracket_\dorf ).$ It follows that $\rho \circ \rho^* =0,$ and there is a chain complex of bundle morphisms
\be\label{eq:metricexact}
T^*\cQ \xlongrightarrow{\rho^*} E \xlongrightarrow{\rho} T\cQ \ .
\ee
The pre-Courant algebroid is said to be \emph{exact} if \eqref{eq:metricexact} is a short exact sequence.  

An isotropic splitting $s: T\cQ \rightarrow E$ of an exact pre-Courant algebroid defines an isomorphism $E \simeq \IT \cQ=T\cQ \oplus T^*\cQ$ to the generalized tangent bundle of $\cQ$, viewed as a split metric algebroid, as well as a $3$-form $H\in\Omega^3(\cQ)$ by
\begin{align*}
H(X,Y,Z)= \big\langle\llbracket s(X),s(Y)\rrbracket_\dorf, s(Z) \big\rangle_E \ ,
\end{align*}
for $X, Y,Z \in \mathsf{\Gamma}(T\cQ)$. The D-bracket $\llbracket\,\cdot\,,\,\cdot\,\rrbracket_\dorf$ maps to the bracket on the splitting $E\simeq\IT \cQ$ given by
\be\nonumber
 \llbracket X+ \alpha, Y + \beta \rrbracket_{\dorf} = [X,Y]_{T\cQ} + \pounds_X \beta  -\iota_Y \, \de \alpha + \iota_X\iota_Y H  \ , 
\ee
for all $X, Y \in \mathsf{\Gamma}(T\cQ)$ and $\alpha,\beta \in \mathsf{\Gamma}(T^*\cQ)$. This is the Dorfman bracket of the standard Courant algebroid from Example~\ref{ex:standardCourant}, now `twisted' by the $3$-form $H$. However, since $\llbracket\,\cdot \, , \, \cdot\,\rrbracket_\dorf$ violates the Leibniz identity \eqref{eq:Leibnizid}, the $3$-form $H$ is not closed and so does not represent any class in ${\sf H}^3(\cQ,\IR).$ In other words, there is no extension of the \v{S}evera classification of exact Courant algebroids~\cite{Severa-letters,gualtieri:tesi} to exact pre-Courant algebroids. 
\end{example} 

\begin{example}\label{ex:metricalgeeta}
Let $(\cQ,\eta)$ be a pseudo-Riemannian manifold, and let $\nabla^\LC$ denote the Levi-Civita connection of $\eta$. Define a bracket operation $\llbracket\,\cdot\,,\,\cdot\,\rrbracket^\eta_\dorf:\mathsf{\Gamma}(T\cQ)\times\mathsf{\Gamma}(T\cQ)\to\mathsf{\Gamma}(T\cQ)$ by
\begin{align*}
\eta(\llbracket X,Y\rrbracket^\eta_\dorf,Z) = \eta(\nabla^\LC_XY-\nabla_Y^\LC X,Z)  + \eta(\nabla_Z^\LC X,Y)
\end{align*}
for vector fields $X,Y,Z\in\mathsf{\Gamma}(T\cQ)$. Then $(T\cQ,\llbracket\,\cdot\,,\,\cdot\,\rrbracket^\eta_\dorf,\eta,\unit_{T\cQ})$ is a metric algebroid~\cite{Svoboda:2020msh} which is not a pre-Courant algebroid. This D-bracket can be twisted by any $3$-form $H\in\Omega^3(\cQ)$, similarly to Example~\ref{ex:exactmetric}.
\end{example}

\begin{definition}\label{def:Dstructure}
An \emph{almost D-structure} on a metric algebroid $(E, \llbracket\,\cdot \, , \, \cdot\,\rrbracket_\dorf,\langle\,\cdot\,,\,\cdot\,\rangle_E,\rho)$ is an isotropic vector subbundle $L \subset E.$ It is a \emph{D-structure} if $L$ is also involutive with respect to the D-bracket $\llbracket \, \cdot \, , \, \cdot\,\rrbracket_\dorf,$ that is,
$\llbracket\mathsf{\Gamma}(L), \mathsf{\Gamma}(L)\rrbracket_\dorf \subseteq \mathsf{\Gamma}(L).$ 
\end{definition}

\begin{example}\label{ex:splitexactDstructure}
Let $(E,\llbracket\,\cdot\,,\,\cdot\,\rrbracket_\dorf,\langle\,\cdot\,,\,\cdot\,\rangle_E,\rho)$ be a split exact pre-Courant algebroid over a manifold~$\cQ$. Then $T^*\cQ\subset E$ is a D-structure.
\end{example}

Given that Courant algebroids correspond to symplectic Lie $2$-algebroids (cf. Theorem~\ref{thm:QP2Courant}), and in view of the possibility of developing an AKSZ-type sigma-model formulation for quantization of a metric algebroid, it is natural to wonder what objects metric algebroids correspond to in graded geometry. In the remainder of this section we develop this correspondence in some detail. Whereas our main result (Theorem~\ref{thm:1-1metric}) should not be surprising to experts, here we follow a more contemporary approach to the geometrization of degree~$2$ manifolds, see e.g.~\cite{delCarpio,Jotz2018}, and hence offer a new geometric perspective on the correspondence.

\medskip

\subsection{Involutive Double Vector Bundles}~\\[5pt]
\label{subsec:invvecbun}
In the proof of Proposition~\ref{prop:LiealgebroidQ} we saw that degree~$1$ manifolds correspond geometrically to vector bundles: if $\cM=(\cQ,\cA)$ is a $1$-graded manifold, then $\cM\simeq E[1]$ for a vector bundle $E\to\cQ$. Let us now recall some well-known facts about the geometrization of degree~$2$ manifolds. In particular, we discuss the implications of Batchelor\rq{}s Theorem for degree 2 manifolds, see~\cite{Jotz2018}. 

\begin{theorem}
Any {degree $n$ manifold} is (non-canonically) isomorphic to a {split degree~$n$ manifold}.
\end{theorem}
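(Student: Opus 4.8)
The plan is to prove this as the smooth graded analogue of Batchelor's theorem, by exhibiting an isomorphism between the structure sheaf $\cA$ and its associated graded with respect to a canonical filtration, the latter being manifestly the structure sheaf of a split degree~$n$ manifold. I would argue by induction on $n$; the cases $n=0$, where $\cM=\cQ$, and $n=1$, where $\cM\simeq E[1]$ as already used in the proof of Proposition~\ref{prop:LiealgebroidQ}, serve as the base.

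First I would set up the split model. Let $\cI\subset\cA$ be the sheaf of ideals generated by the functions of positive degree, so that $\cA/\cI=C^\infty(\cQ)$, and consider the decreasing filtration $\cA\supset\cI\supset\cI^2\supset\cdots$. Since the coordinates are concentrated in degrees $0,\dots,n$, this filtration is finite and compatible with the $\IN$-grading. For each $i$ the $C^\infty(\cQ)$-module of indecomposable degree-$i$ functions (the degree-$i$ part of $\cI$ modulo its decomposable elements) is locally free of finite rank and hence defines a vector bundle $E_i^*\to\cQ$; the associated graded $\mathrm{gr}_\cI\,\cA$ is then canonically the sheaf of sections of $\midodot^\bullet\big(\bigoplus_{i=1}^n E_i^*[-i]\big)$, which is precisely the structure sheaf of the split manifold $\bigoplus_{i=1}^n E_i[i]$. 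Thus the theorem reduces to constructing a filtration-preserving isomorphism of graded algebras $\Phi:\mathrm{gr}_\cI\,\cA\xrightarrow{\ \sim\ }\cA$ splitting the filtration.

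Next I would construct $\Phi$ locally and glue it. Choosing a good open cover $\{U_\alpha\}$ of $\cQ$ over which $\cM$ trivialises, with adapted graded coordinates, one obtains on each $U_\alpha$ a tautological splitting $\Phi_\alpha$ by lifting the chosen generators. On an overlap $U_\alpha\cap U_\beta$ the two splittings differ by a graded algebra automorphism $\Phi_\alpha^{-1}\circ\Phi_\beta$ that induces the identity on $\mathrm{gr}_\cI\,\cA$, that is, an element of the sheaf $\mathcal{U}$ of \emph{unipotent} graded automorphisms which raise the filtration degree. These assemble into a nonabelian \v{C}ech $1$-cocycle, and the substance of the theorem is that it is a coboundary.

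The main obstacle, and the heart of the argument, is the vanishing of this obstruction, which is exactly where the smooth (rather than holomorphic or algebraic) setting is used. The sheaf $\mathcal{U}$ is pronilpotent: it is filtered by the normal subsheaves $\mathcal{U}_{\ge k}$ of automorphisms that alter functions only in filtration weight $\ge k$, with successive quotients the abelian sheaves of filtration-raising derivations of weight $k$, which are modules over $C^\infty(\cQ)$ and hence fine. I would remove the cocycle weight by weight: assuming the splittings agree up to weight $k-1$, the weight-$k$ discrepancy is a \v{C}ech $1$-cocycle valued in a fine sheaf of $C^\infty(\cQ)$-modules, whose first cohomology vanishes by a partition of unity subordinate to $\{U_\alpha\}$; subtracting the resulting global weight-$k$ correction, which is legitimate because the adjustments at each fixed weight form an affine space over this module so that convex combinations are well defined, makes the splittings agree up to weight $k$. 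As the filtration terminates at weight $n$, finitely many steps produce a global filtration-splitting isomorphism $\Phi$, and the resulting identification $\cM\simeq\bigoplus_{i=1}^n E_i[i]$ depends on the chosen cover and partition of unity, whence it is non-canonical, as asserted.
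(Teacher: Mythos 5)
Your argument is correct and is the standard smooth Batchelor argument; the paper itself does not prove the general statement (it is imported from the literature, \cite{Jotz2018}) and only spells out the case $n=2$, where it identifies $\cA^1\simeq\mathsf{\Gamma}(E^*)$ and $\cA^2\simeq\mathsf{\Gamma}(\bar{F}^*)$, exhibits the single short exact sequence $0\to\midwedge^2E^*\to\bar{F}^*\to F^*\to0$ of vector bundles, and observes that any splitting of it yields $\cM\simeq E[1]\oplus F[2]$. Your construction specializes to exactly this in degree~$2$ --- that sequence is the degree-$2$ slice of your filtration $\cI\supset\cI^2$, and the fine-sheaf/partition-of-unity step is what splits it --- but you carry out the full weight-by-weight removal of the nonabelian \v{C}ech cocycle needed for general $n$, which the paper leaves implicit. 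What your route buys is the general-$n$ statement with the obstruction theory made explicit (and it makes transparent why the result fails in the holomorphic category); what the paper's low-degree packaging buys is the concrete pair of bundles $(E,\bar{F})$ and the involutive sequence on which the rest of its Section~3 is built.

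One correction: the filtration $\cA\supset\cI\supset\cI^2\supset\cdots$ is \emph{not} finite for $n\geq2$ --- a degree-$2$ coordinate $\xi$ gives nonzero $\xi^k\in\cI^k$ for all $k$ --- so ``the filtration terminates at weight $n$'' is false as stated. What is true, and what your induction actually needs, is that the filtration is finite in each fixed degree (namely $\cA^d\cap\cI^k=0$ for $k>d$) and that $\cA$ is generated in degrees $\leq n$, so a degree-preserving, filtration-raising derivation or unipotent automorphism is determined by its effect on generators, where it is nilpotent of order at most $n$; hence the weight-by-weight correction terminates after $n$ steps. This is a fixable slip, not a gap.
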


Let us spell this out explicitly for a degree 2 manifold $\cM=(\cQ, \cA)$ over a manifold $\cQ,$ where $\cA$ is its sheaf of functions. The subsheaves $\cA^1$ and $ \cA^2$ of $\cA$, consisting of functions of degree $1$ and $2$, respectively, are locally free finitely-generated $C^{\infty}(\cQ)$-modules. Hence there exist vector bundles $E\to \cQ$ and $ \bar{F}\to\cQ$ such that $\cA^1 \simeq \mathsf{\Gamma}(E^*)$ and $\cA^2 \simeq \mathsf{\Gamma}(\bar{F}^*).$ Abusing notation slightly, the subalgebra of $\cA$ generated by $C^{\infty}(\cQ) \oplus \mathsf{\Gamma}(E^*)$ is isomorphic to $\mathsf{\Gamma}(\midwedge^{\bullet} E^*).$ Thus 
$$
\mathsf{\Gamma}(\midwedge^{\bullet} E^*) \cap \mathsf{\Gamma}(\bar{F}^*) \simeq \mathsf{\Gamma}(\midwedge^2 E^*)
$$ 
is a proper $C^{\infty}(\cQ)$-submodule of $\mathsf{\Gamma}(\bar{F}^*).$ Since, as a sheaf of functions, $\mathsf{\Gamma}(\bar{F}^*) / \mathsf{\Gamma}(\midwedge^2 E^*)$ is again a locally free finitely-generated $C^{\infty}(\cQ)$-module, there is a vector bundle $F$ over $\cQ$ such that 
$$
\mathsf{\Gamma}(\bar{F}^*)\, \big/\, \mathsf{\Gamma}(\midwedge^2 E^*) \simeq \mathsf{\Gamma}(F^*) \ . 
$$
This gives a short exact sequence of $C^{\infty}(\cQ)$-modules 
\be\nonumber
0 \longrightarrow \mathsf{\Gamma}(\midwedge^2 E^*) \longrightarrow  \mathsf{\Gamma}(\bar{F}^*) \longrightarrow  \mathsf{\Gamma}(F^*) \longrightarrow 0 \ , 
\ee
which yields a short exact sequence of the underlying vector bundles over $\cQ$:
\be \label{sesdeg2}
0 \longrightarrow \midwedge^2 E^* \xlongrightarrow{i} \bar{F}^* \xlongrightarrow{p} F^* \longrightarrow 0 \  .
\ee
A choice of splitting of either of these sequences gives an isomorphism $\cM \simeq E[1]\oplus F[2].$

\begin{remark}
This construction also yields a one-to-one correspondence (up to isomorphisms) between degree~2 manifolds and pairs of vector bundles $(E,\bar{F})$ with a surjective vector bundle morphism $\bar{p} \colon \bar{F} \rightarrow \midwedge^2 E.$ 
In other words, degree~2 manifolds are in one-to-one correspondence with \textit{involutive sequences} (see \cite{delCarpio}), i.e. short exact sequences of the form
\be \label{involutivesequence}
0 \longrightarrow F  \xlongrightarrow{} \bar{F} \xlongrightarrow{\bar{p}} \midwedge^2 E \longrightarrow 0 \  ,
\ee
where $F=\ker(\bar p).$ This in turn aids in understanding the correspondence between degree~2 manifolds and involutive double vector bundles, and hence the correspondence with metric double vector bundles~\cite{Jotz2018, delCarpio}. For background and details on double vector bundles that we use in the following, see~\cite{Mackenzie}.
\end{remark}

\begin{remark}
Let $\cM=(\cQ, \cA)$ be a degree 2 manifold. Consider the corresponding involutive double vector bundle $(D; E, E; \cQ),$ with core bundle $F \rightarrow \cQ,$ given by the commutative diagram of vector bundles 
\be \label{involutiveDVB}
\begin{tikzcd}
D \arrow{r}{q_1} \arrow{d}[swap]{q_2} & E \arrow{d}{q^E}  \\
E \arrow{r}[swap]{q^E} & \cQ   
\end{tikzcd}
\ee
endowed with a double vector bundle morphism $\cI:D\to D$ such that $\cI^2 =\unit_D,$ $q_1 \circ \cI = q_2,$ $q_2 \circ \cI= q_1$ and core morphism $-\unit_F \colon F \rightarrow F,$ where $F=\ker(q_1)\cap\ker(q_2)$. The map $D\to E\times_\cQ E$ to the fibred product, with respect to the horizontal and vertical base projections, is a surjective submersion whose kernel is the core bundle $F\to\cQ$. Its linear approximation is given by the linear sequence 
\be \label{linearsequence}
0 \longrightarrow {\mathsf{Hom}}(E,F) \xlongrightarrow{} \hat{E} \xlongrightarrow{} E \longrightarrow 0 \ , 
\ee
where $\mathsf{\Gamma}(\hat{E}) \simeq \mathsf{\Gamma}_{\mathtt{lin}}(D)$, the \emph{linear} sections of $D$, that is, the sections of $q_1$ which are bundle morphisms from the vertical bundle $q^E$ to $q_2$ covering sections of the horizontal bundle $q^E$, and $\mathsf{\Gamma}(F) \simeq \mathsf{\Gamma}_{\mathtt{core}}(D),$ the \emph{core} sections of $D$. One shows that the degree $-1$ and $-2$ vector fields on $\cM$ are given by $\frX_{-1}(\cM) \simeq \mathsf{\Gamma}(\hat{E})$ and $\frX_{-2}(\cM) \simeq \mathsf{\Gamma}(F).$

Following~\cite{delCarpio}, we define a tensor $W \in \mathsf{\Gamma}(\midodot^2 \hat{E}^* \otimes \bar{F})$ as follows. Choose any splitting of \eqref{linearsequence}. Then any section $\hat e \in \mathsf{\Gamma}(\hat{E})$ can be written as
\be \nonumber
\hat e = \tau + e  \qquad \mbox{with} \quad \tau \in \mathsf{\Gamma}\big({\mathsf{Hom}}(E,F)\big) \qquad \mbox{and} \qquad e \in \mathsf{\Gamma}(E) \ .
\ee
We set
\be\nonumber
W(\hat e_1, \hat e_2) \coloneqq \tau_1(e_2) + \tau_2 (e_1) \ .
\ee
It is proven in \cite{delCarpio} that this definition does not depend on the choice of splitting. Notice that $W(\hat e_1, \hat e_2) \in \mathsf{\Gamma}(F)$ and we identify $F$ with its image in $\bar{F}$ from \eqref{involutivesequence}. One further shows
\be\nonumber
W(\hat e_1, \hat e_2)=[\hat e_1, \hat e_2] \ \in \ \frX_{-2}(\cM) \simeq  \mathsf{\Gamma}(F)
\ee
where we regard $\hat e_1$ and $ \hat e_2$ as degree $-1$ vector fields on $\cM.$
\end{remark}

\begin{remark}
Let $D_1$ be the horizontal vector bundle $q_1 \colon D \rightarrow E$ in \eqref{involutiveDVB}. Then the space of fibrewise linear functions $C^{\infty}_{\mathtt{lin}}(D_1)$ is endowed with a vector bundle structure such that its dual $\hat{F} \coloneqq C^{\infty}_{\mathtt{lin}}(D_1)^*$ fits into the short exact sequence
\be\nonumber
0 \longrightarrow F \xlongrightarrow{} \hat{F} \xlongrightarrow{} E \otimes E \longrightarrow 0 \ .
\ee
It straightforwardly follows that 
\be \nonumber
\hat{F} \simeq \midodot^2 E \oplus \bar{F} \ .
\ee
A similar construction holds for the vertical vector bundle of \eqref{involutiveDVB}.
\end{remark}

We can now discuss the higher analogue of Proposition~\ref{prop:LiealgebroidQ} in the degree~$2$ case, within the weakened setting appropriate for our later considerations of metric algebroids. For this, we note that the subsheaf $\cA^3$ of degree $3$ functions is similarly isomorphic to $\mathsf{\Gamma}(\bar{D}^*)$ for some vector bundle $\bar{D}\rightarrow \cQ.$ This induces a short exact sequence of vector bundles given by
\be \label{sesfun3}
0 \longrightarrow \midwedge^3 E^* \xlongrightarrow{i} \bar{D}^* \xlongrightarrow{} E^* \otimes F^* \longrightarrow 0 \  .
\ee
The following result, proven in \cite{delCarpio}, provides the characterization of degree $3$ functions on degree $2$ manifolds.

\begin{theorem} \label{classdegree3}
Let $\cM=(\cQ, \cA)$ be a degree $2$ manifold. Then there is a one-to-one correspondence between degree $3$ functions $\gamma \in \cA^3$ and pairs of vector bundle morphisms 
\be \nonumber 
\boldsymbol{\gamma}_1 \colon F \longrightarrow E^*  \qquad \mbox{and} \qquad \boldsymbol{\gamma}_2 \colon \hat{E} \longrightarrow \bar{F}^*
\ee
satisfying
\begin{enumerate}
\item  $  \braket{\boldsymbol{\gamma}_1(\phi), \hat e}=\braket{\boldsymbol{\gamma}_2(\hat e), \phi},$ for all $ \phi \in F $ and $\hat e \in \hat{E}\,;$
~\\[1pt]
\item For all $\tau \in {\mathsf{Hom}}(E,F)$,
\be \label{prop2}
\boldsymbol{\gamma}_2 \circ \tau = (\boldsymbol{\gamma}_1 \circ \tau)^* - \boldsymbol{\gamma}_1 \circ \tau \ \in \ \midwedge^2 E^* \ ;
\ee
and
\item The symmetric part of $\boldsymbol{\gamma}_2$ is controlled by $W$:
\be \label{prop3}
\braket{\boldsymbol{\gamma}_2(\hat e_1), \hat e_2}+ \braket{\boldsymbol{\gamma}_2(\hat e_2), \hat e_1}= \boldsymbol{\gamma}_1 \big(W(\hat e_1, \hat e_2)\big) \ ,
\ee
for all $\hat e_1, \hat e_2 \in \hat{E}.$
\end{enumerate}
\end{theorem}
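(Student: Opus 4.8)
The plan is to realize a degree~$3$ function $\gamma$ through its contractions with the negative-degree vector fields on $\cM$, using the geometrizations $\frX_{-1}(\cM)\simeq\mathsf{\Gamma}(\hat{E})$ and $\frX_{-2}(\cM)\simeq\mathsf{\Gamma}(F)$ recorded above. Since a vector field of degree~$k$ shifts the degree of a function by~$k$, contracting $\gamma\in\cA^3$ with a degree~$-1$ field produces an element of $\cA^2\simeq\mathsf{\Gamma}(\bar{F}^*)$ and with a degree~$-2$ field an element of $\cA^1\simeq\mathsf{\Gamma}(E^*)$. This dictates the derived prescriptions
\begin{align*}
\boldsymbol{\gamma}_2(\hat e) := \hat e\cdot\gamma \qquad\text{and}\qquad \boldsymbol{\gamma}_1(\phi) := \phi\cdot\gamma \ ,
\end{align*}
for $\hat e\in\mathsf{\Gamma}(\hat{E})$ and $\phi\in\mathsf{\Gamma}(F)$. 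First I would note that, because the $\cA^0=C^\infty(\cQ)$-module structures on $\frX_{-1}(\cM)$ and $\frX_{-2}(\cM)$ are by multiplication with degree~$0$ functions, the action on the fixed $\gamma$ is $C^\infty(\cQ)$-linear; hence $\boldsymbol{\gamma}_1$ and $\boldsymbol{\gamma}_2$ are genuine vector bundle morphisms. This gives the forward half of the correspondence, and the content of the theorem is that conditions~(1)--(3) are precisely the constraints these contractions satisfy.

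Next I would extract conditions~(1) and~(3) from graded (anti)commutation of vector fields acting on $\gamma$. For~(1), the graded commutator $[\hat e,\phi]$ of a degree~$-1$ and a degree~$-2$ field has degree~$-3$, and therefore vanishes on a degree~$2$ manifold; as the Koszul sign is $(-1)^{(-1)(-2)}=+1$, this yields $\hat e\cdot(\phi\cdot\gamma)=\phi\cdot(\hat e\cdot\gamma)$, which is exactly~(1) once the two sides are read as the evaluations $\braket{\boldsymbol{\gamma}_1(\phi),\hat e}$ and $\braket{\boldsymbol{\gamma}_2(\hat e),\phi}$ through the contraction of a function by a negative-degree field (reducing to the duality pairing of $E^*$ with $\hat{E}$ via $\hat{E}\to E$, and of $\bar{F}^*$ with $F$ via $F\hookrightarrow\bar{F}$). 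For~(3), the graded commutator of two degree~$-1$ fields is their anticommutator, and by the identification $W(\hat e_1,\hat e_2)=[\hat e_1,\hat e_2]\in\frX_{-2}(\cM)\simeq\mathsf{\Gamma}(F)$ noted above, applying it to $\gamma$ gives
\begin{align*}
\hat e_1\cdot(\hat e_2\cdot\gamma)+\hat e_2\cdot(\hat e_1\cdot\gamma) = [\hat e_1,\hat e_2]\cdot\gamma = \boldsymbol{\gamma}_1\big(W(\hat e_1,\hat e_2)\big) \ ,
\end{align*}
whose left-hand side is the symmetric combination of the pairings $\braket{\boldsymbol{\gamma}_2(\hat e_i),\hat e_j}$, so this is~(3).

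The hard part will be condition~(2), which controls the restriction of $\boldsymbol{\gamma}_2$ to the vertical subbundle $\mathsf{Hom}(E,F)\subset\hat{E}$, i.e. the kernel of the projection $\hat{E}\to E$ in \eqref{linearsequence}. For $\tau\in\mathsf{Hom}(E,F)$ regarded as a degree~$-1$ field, I would show that $\tau\cdot\gamma$ has no $F^*$-component, hence lands in $\midwedge^2E^*\subset\bar{F}^*$, and that it equals the skew-symmetrization of the $(0,2)$-tensor $\boldsymbol{\gamma}_1\circ\tau\colon E\to E^*$. The delicate point is to track the two embeddings simultaneously --- that of $\mathsf{Hom}(E,F)$ into $\hat{E}$ and that of $\midwedge^2E^*$ into $\bar{F}^*$ from \eqref{sesdeg2} --- so that the antisymmetric combination is identified precisely as $(\boldsymbol{\gamma}_1\circ\tau)^*-\boldsymbol{\gamma}_1\circ\tau$. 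This is most transparent in a local splitting, where one writes $\gamma=\tfrac1{3!}\,T_{abc}\,\zeta^a\,\zeta^b\,\zeta^c+R_{a\mu}\,\zeta^a\,p_\mu$ with odd degree~$1$ coordinates $\zeta^a$ and even degree~$2$ coordinates $p_\mu$, a vertical field reads $\zeta^d\,\partial/\partial p_\mu$, and $\boldsymbol{\gamma}_1$ is encoded by $R_{a\mu}$; the transpose-minus-identity structure then drops out upon antisymmetrizing $R_{a\mu}$ against the components of $\tau$, but I would still need to confirm that the coordinate-free statement is splitting-independent.

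Finally, for the converse I would reconstruct $\gamma$ from a pair $(\boldsymbol{\gamma}_1,\boldsymbol{\gamma}_2)$ satisfying~(1)--(3). After choosing splittings of \eqref{linearsequence} and \eqref{sesdeg2}, so that $E$ lifts horizontally into $\hat{E}$ and $\midwedge^2E^*$ sits as a complement in $\bar{F}^*$, the totally skew part of $\boldsymbol{\gamma}_2$ on the horizontal lift of $E$ supplies the $\midwedge^3E^*$-component of $\gamma$ while $\boldsymbol{\gamma}_1$ supplies the $E^*\otimes F^*$-component. Condition~(1) forces $\boldsymbol{\gamma}_1$ to agree with the $F^*$-part of $\boldsymbol{\gamma}_2$ on $E$, condition~(3) fixes the symmetric part of $\boldsymbol{\gamma}_2$ on $E$ so that only a genuine $3$-form survives, and condition~(2) governs the vertical data so that the local reconstruction is independent of the chosen splittings and glues to a global $\gamma\in\cA^3$. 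I would then check that this assignment is a two-sided inverse of $\gamma\mapsto(\boldsymbol{\gamma}_1,\boldsymbol{\gamma}_2)$. I expect the splitting-independence of this last step, which rests squarely on condition~(2), to demand the most care.
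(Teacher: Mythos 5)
Your route coincides with the one the paper itself records: realize $(\boldsymbol{\gamma}_1,\boldsymbol{\gamma}_2)$ as the contractions of $\gamma$ with negative-degree vector fields via $\frX_{-1}(\cM)\simeq\mathsf{\Gamma}(\hat{E})$ and $\frX_{-2}(\cM)\simeq\mathsf{\Gamma}(F)$. (The paper uses the convention $\boldsymbol{\gamma}_1(\phi)=-\phi\cdot\gamma$, $\boldsymbol{\gamma}_2(\hat e)=-\hat e\cdot\gamma$; since each of (1)--(3) is simultaneously linear in both maps, your omission of the overall sign is harmless, though it will propagate into later formulas such as Proposition~\ref{deg3Poisson}.) Where you go beyond the paper is in actually deriving condition~(1) from the vanishing of the degree~$-3$ bracket $[\hat e,\phi]$ and condition~(3) from $[\hat e_1,\hat e_2]=W(\hat e_1,\hat e_2)$ acting on $\gamma$; the paper merely asserts that the verification of (1)--(3) is straightforward. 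Both of your arguments are correct, and the identification of the pairings with the derivation actions is consistent with how the paper uses them later.

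What you leave open is exactly what the paper also does not prove but defers to \cite{delCarpio}. The paper's treatment of condition~(2) is to choose splittings of \eqref{sesfun3} and \eqref{linearsequence} and \emph{define} $\boldsymbol{\gamma}_2$ by the formula \eqref{morphgamma2}, in which the vertical contribution $(\boldsymbol{\gamma}_1\circ\tau)^*-\boldsymbol{\gamma}_1\circ\tau$ is built in by fiat, so that (2) holds tautologically and the whole burden shifts to independence of \eqref{morphgamma2} under changes of splitting --- precisely the point you flag as delicate. The reconstruction of $\gamma$ from a pair $(\boldsymbol{\gamma}_1,\boldsymbol{\gamma}_2)$ and the two-sided-inverse check are likewise not carried out in the paper. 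So your proposal is a faithful, and in places more explicit, rendering of the intended argument, but to be complete it still needs: (i) the local computation showing that for vertical $\tau$ one has $\tau\cdot\gamma\in\midwedge^2E^*$ equal to the skew-symmetrization of $\boldsymbol{\gamma}_1\circ\tau$, together with its invariance under a change of horizontal lift $\hat e=e+\tau\mapsto e+(\tau+\sigma(e))$ with $\sigma\in\mathsf{\Gamma}(\mathsf{Hom}(E,F))$; and (ii) the verification that your reconstruction is splitting-independent and inverse to the contraction maps. Neither is a conceptual obstacle --- both reduce to bookkeeping of how the components $\gamma_1\in\mathsf{\Gamma}(E^*\otimes F^*)$ and $\gamma_2\in\mathsf{\Gamma}(\midwedge^3E^*)$ shift under changes of splitting --- but they are where the actual content of the theorem sits, so a finished write-up must include them.
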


\begin{remark}
The morphisms in Theorem~\ref{classdegree3} can be defined as follows. Choosing a splitting of the short exact sequence \eqref{sesfun3}, any degree $3$ function $\gamma \in \cA^3$ can be written as 
\be \nonumber
\gamma=\gamma_1 + \gamma_2
\ee
where $\gamma_1 \in \mathsf{\Gamma}(E^* \otimes F^*)$ and $\gamma_2 \in \mathsf{\Gamma}(\midwedge^3 E^*).$ Then we define
\be \label{morphgamma1}
\boldsymbol{\gamma}_1(\phi)\coloneqq \braket{\phi, \gamma_1} \ \in \ \mathsf{\Gamma}(E^*) \ ,
\ee
where $\phi \in \mathsf{\Gamma}(F)$ and here $\braket{\,\cdot\,,\, \cdot\,}$ is the duality pairing between $F$ and the $F^*$-component of $E^* \otimes F^*.$ Given a splitting of the linear sequence \eqref{linearsequence}, for any $\hat e= e + \tau \in \mathsf{\Gamma}(\hat{E})$ with $e \in \mathsf{\Gamma}(E)$ and $\tau \in \mathsf{\Gamma}(E^* \otimes F),$ we set
\be \label{morphgamma2}
\boldsymbol{\gamma}_2(\hat e) \coloneqq \bigl( (\boldsymbol{\gamma}_1 \circ \tau)^* - \boldsymbol{\gamma}_1 \circ \tau - \iota_e \gamma_2 \bigr) + \boldsymbol{\gamma}_1^*(e) \ \in \ \mathsf{\Gamma}(\midwedge^2 E^* \oplus F^*) \simeq \mathsf{\Gamma}(\bar{F}^*) \ .
\ee
Notice that only the definition of $\boldsymbol{\gamma}_2$ depends on the choice of splitting of \eqref{linearsequence}, but it  is shown in \cite{delCarpio} that it is well-defined under changes of splitting. It is straightforward to see that the pair $(\boldsymbol{\gamma}_1, \boldsymbol{\gamma}_2)$ defined in \eqref{morphgamma1} and \eqref{morphgamma2} satisfy all three properties (1)--(3) of Theorem~\ref{classdegree3}.
\end{remark}

\begin{remark}
The pair of vector bundle morphisms $(\boldsymbol{\gamma}_1, \boldsymbol{\gamma}_2)$ corresponding to $\gamma \in \cA^3$ can also be characterized in terms of the identifications $\frX_{-1}(\cM) \simeq \mathsf{\Gamma}(\hat{E})$ and $\frX_{-2}(\cM) \simeq \mathsf{\Gamma}(F)$ as
\be\nonumber
\boldsymbol{\gamma}_1(\phi)= -\phi \cdot\gamma \qquad \mbox{and} \qquad \boldsymbol{\gamma}_2(\hat e)= -\hat e\cdot\gamma \ ,
\ee
where on the right-hand sides $\phi \in \mathsf{\Gamma}(F)$ acts as a degree $-2$ derivation and $\hat e \in  \mathsf{\Gamma}(\hat{E})$ as a degree $-1$ derivation on $\gamma$.
\end{remark}

\medskip

\subsection{Poisson Structures and VB-Algebroids}~\\[5pt]
\label{subsec:invLiealg}
To give further structure to the involutive sequences corresponding to degree~$2$ manifolds, we shall now require that our graded manifolds are endowed with a Poisson structure.

\begin{definition}
A \textit{degree} $n$ \textit{Poisson manifold} is a degree $n$ manifold $\cM=(\cQ, \cA)$ together with a degree $-n$ Poisson structure,  i.e. an $\R$-bilinear map $\{\,\cdot\,,\, \cdot\,\} \colon \cA \times \cA \rightarrow \cA$ satisfying 
\begin{align}
|\{f, g\}|&=|f|+|g|-n \ , \\[4pt]
\{f,g\}&=(-1)^{(|f|+n)\,(|g|+n)} \, \{g,f\} \ , \label{gcomm} \\[4pt]
\{f, g\,h\}&=\{f, g\}\,h + (-1)^{(|f|+n)\,|g|} \, g\,\{f,h\} \ , \label{gder} \\[4pt]
\{f, \{g,h\}\}&=\{\{f,g\},h\}+(-1)^{(|f|+n)\,(|g|+n)}\,\{g,\{f,h\}\} \ , \label{glei}
\end{align}
for all homogeneous functions $f, \, g, \, h \in \cA$ of degree $|f|$, $|g|$ and $|h|$, respectively.
\end{definition} 

Let $\cdo(E) \rightarrow \cQ$ denote the Lie algebroid of covariant differential operators on a vector bundle $E\to\cQ,$ i.e. its sections are differential operators acting on $\mathsf{\Gamma}(E).$

\begin{theorem} \label{algebroid2Poisson}
There is a one-to-one correspondence between degree $-2$ Poisson brackets on a degree 2 manifold $\cM=(\cQ,\cA)$ and the following structures on its involutive sequence of vector bundles \eqref{sesdeg2}:
\begin{enumerate}
\item A symmetric bilinear pairing $\braket{\,\cdot\,,\, \cdot\,}_{E^*}$ on $E^*;$
\item A Lie algebroid $(\bar{F}^*, [\,\cdot\,,\, \cdot\,]_{\bar F^*}, \bar\sfa);$ and
\item A Lie algebroid action $\Psi \colon \bar{F}^* \rightarrow \cdo(E^*)$ preserving $\braket{\,\cdot\,,\, \cdot\,}_{E^*}$ such that
\be \label{fstar1}
[\zeta, \varepsilon_1 \wedge \varepsilon_2]_{\bar F^*}=\big(\Psi(\zeta)\cdot\varepsilon_1\big)\wedge \varepsilon_2 + \varepsilon_1 \wedge \big(\Psi(\zeta)\cdot\varepsilon_2\big) \ ,
\ee
and
\be \label{fstar2}
\Psi(\varepsilon_1 \wedge \varepsilon_2)\cdot\varepsilon=\braket{\varepsilon_2, \varepsilon}_{E^*}\,\varepsilon_1 - \braket{\varepsilon_1, \varepsilon}_{E^*}\,\varepsilon_2 \ ,
\ee
for all $\varepsilon, \varepsilon_1, \varepsilon_2 \in \mathsf{\Gamma}(E^*)$ and $\zeta \, \in \mathsf{\Gamma}(\bar{F}^*).$ 
\end{enumerate}
\end{theorem}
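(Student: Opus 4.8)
The plan is to exploit the fact that a degree~$2$ manifold $\cM=(\cQ,\cA)$ is generated as a graded commutative algebra by its functions of degrees $0,1,2$, so that a degree~$-2$ Poisson bracket is completely determined by its restriction to pairs of such generators, and conversely can be built by prescribing these restrictions and extending via the graded derivation rule \eqref{gder}. First I would decompose the bracket by degree: since $\{\cA^j,\cA^k\}\subseteq\cA^{j+k-2}$ and negatively-graded functions vanish, the only non-trivial restrictions are the derived brackets $\cA^1\times\cA^1\to\cA^0$, $\cA^2\times\cA^0\to\cA^0$, $\cA^2\times\cA^1\to\cA^1$ and $\cA^2\times\cA^2\to\cA^2$. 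Using the identifications $\cA^0=C^\infty(\cQ)$, $\cA^1\simeq\mathsf{\Gamma}(E^*)$ and $\cA^2\simeq\mathsf{\Gamma}(\bar F^*)$ from \eqref{sesdeg2}, these four brackets define respectively a fibrewise pairing $\braket{\varepsilon_1,\varepsilon_2}_{E^*}:=\{\varepsilon_1,\varepsilon_2\}$ on $E^*$, an anchor $\bar\sfa(\zeta)\cdot f:=\{\zeta,f\}$, an action $\Psi(\zeta)\cdot\varepsilon:=\{\zeta,\varepsilon\}$ of $\bar F^*$ on $\mathsf{\Gamma}(E^*)$, and a bracket $[\zeta_1,\zeta_2]_{\bar F^*}:=\{\zeta_1,\zeta_2\}$ on $\mathsf{\Gamma}(\bar F^*)$. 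Conceptually this is the statement that the Poisson structure is a VB-algebroid structure on the double vector bundle \eqref{involutiveDVB}, but I would proceed concretely on generators.

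For the forward direction I would translate the graded Poisson axioms into the three listed structures. The graded symmetry \eqref{gcomm} forces the pairing to be symmetric and $[\,\cdot\,,\,\cdot\,]_{\bar F^*}$ to be skew; the derivation rule \eqref{gder}, applied to $\{\zeta,f\,g\}$, $\{\zeta,f\,\varepsilon\}$ and $\{\zeta_1,f\,\zeta_2\}$, shows that $\bar\sfa$ is a genuine anchor, that $\Psi(\zeta)$ is a covariant differential operator on $E^*$ with symbol $\bar\sfa(\zeta)$ (so that $\Psi\colon\bar F^*\to\cdo(E^*)$ with $\bar\sfa=\sigma\circ\Psi$ for the symbol map $\sigma$), and that $[\,\cdot\,,\,\cdot\,]_{\bar F^*}$ obeys the Leibniz rule of a Lie algebroid bracket. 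The graded Jacobi identity \eqref{glei}, evaluated on the generator triples $(\cA^2,\cA^2,\cA^2)$, $(\cA^2,\cA^2,\cA^1)$ and $(\cA^2,\cA^1,\cA^1)$, then yields in turn the Jacobi identity for $[\,\cdot\,,\,\cdot\,]_{\bar F^*}$, the statement that $\Psi$ is a morphism of Lie algebroids, and the invariance of $\braket{\,\cdot\,,\,\cdot\,}_{E^*}$ under $\Psi$. Finally, the compatibility relations \eqref{fstar1} and \eqref{fstar2} are automatic: since the inclusion $\midwedge^2E^*\hookrightarrow\bar F^*$ realises a decomposable $\varepsilon_1\wedge\varepsilon_2$ as the product of the degree-$1$ functions $\varepsilon_1,\varepsilon_2$, both relations follow by expanding $\{\zeta,\varepsilon_1\,\varepsilon_2\}$ and $\{\varepsilon_1\,\varepsilon_2,\varepsilon\}$ with \eqref{gder} and the definition of the pairing.

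For the converse I would prescribe the five derived brackets above on generators out of the data $(\braket{\,\cdot\,,\,\cdot\,}_{E^*},[\,\cdot\,,\,\cdot\,]_{\bar F^*},\bar\sfa,\Psi)$ --- setting $\{\cA^0,\cA^0\}=0$ --- and extend to all of $\cA$ by imposing the graded symmetry \eqref{gcomm} and the derivation rule \eqref{gder}. Symmetry and the Leibniz property then hold by construction, while the graded Jacobi identity \eqref{glei}, being a graded derivation in each of its arguments, need only be verified on generator triples, where it reduces precisely to the Lie algebroid axioms of item~(2), the morphism property of $\Psi$ in item~(3), and the $\Psi$-invariance of the pairing. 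I would conclude by checking that the two constructions are mutually inverse, which is immediate once well-definedness is established.

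The main obstacle is exactly this well-definedness of the extension in the converse direction. The subtlety is that $\cA^2\simeq\mathsf{\Gamma}(\bar F^*)$ genuinely contains the image of $\midwedge^2E^*$ under \eqref{sesdeg2}, so a decomposable element $\varepsilon_1\wedge\varepsilon_2$ may be bracketed either directly, through the Lie algebroid structure and action on $\bar F^*$, or indirectly, as the product $\varepsilon_1\,\varepsilon_2$ of degree-$1$ generators via the derivation rule. Consistency of these two prescriptions is precisely the content of \eqref{fstar1} (for $\{\zeta,\varepsilon_1\wedge\varepsilon_2\}$) and \eqref{fstar2} (for $\{\varepsilon_1\wedge\varepsilon_2,\varepsilon\}$); thus \eqref{fstar1} and \eqref{fstar2} are not auxiliary conditions but are exactly the constraints that make the prescribed bracket a well-defined graded derivation on $\cA$. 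Beyond this the only genuine labour is careful Koszul sign bookkeeping in the shifted-degree conventions when reducing \eqref{glei} to the structural identities.
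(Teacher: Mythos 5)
Your proposal is correct and follows essentially the same route as the paper: both directions proceed by restricting the bracket to the generators $\cA^0,\cA^1,\cA^2$, reading off the pairing, anchor, action and bracket from the four non-trivial degree components, translating the graded Poisson axioms \eqref{gcomm}--\eqref{glei} into the three listed structures, and recovering the converse by extension via the graded derivation rule. Your observation that \eqref{fstar1} and \eqref{fstar2} are precisely the well-definedness conditions for that extension (because $\midwedge^2 E^*$ sits inside $\bar F^*$, so decomposables can be bracketed in two ways) is left implicit in the paper's converse and is worth making explicit.
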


\begin{proof}
Let $\{\,\cdot\,,\,\cdot\,\}$ be a degree $-2$ Poisson bracket on $\cM$, and define
\be\nonumber
\braket{\varepsilon_1, \varepsilon_2}_{E^*} \coloneqq \{\varepsilon_1, \varepsilon_2 \} \ , 
\ee
for all $\varepsilon_1, \varepsilon_2 \in  \mathsf{\Gamma}(E^*).$ Then $\braket{\,\cdot\,,\, \cdot\,}_{E^*}$ is symmetric and bilinear by the graded skew-symmetry \eqref{gcomm} and the graded derivation rule \eqref{gder}. This gives~(1).

For any degree 2 function $\zeta \in \mathsf{\Gamma}(\bar{F}^*)$, the graded derivation rule \eqref{gder} implies that $\{ \zeta, \,\cdot\, \} \in \mathsf{Der}(C^{\infty}(\cQ))$ is a derivation of the algebra of functions $C^\infty(\cQ)$, and we can define a vector bundle morphism $\bar{\sfa} \colon \bar{F}^* \rightarrow T\cQ$ in terms of the corresponding morphism of $C^{\infty}(\cQ)$-modules $\bar{\sfa}\colon \mathsf{\Gamma}(\bar{F}^*) \rightarrow \mathsf{\Gamma}(T\cQ)$ by
\be\nonumber
\bar{\sfa}(\zeta)\cdot f\coloneqq \{ \zeta, f\} \ ,
\ee
where $\bar{\sfa}(g\,\zeta)\cdot f= g \, \bar{\sfa}(\zeta)\cdot f$ for all $f,g\in C^\infty(\cQ)$ also because of \eqref{gder}. The vector space $\mathsf{\Gamma}(\bar{F}^*)$ can be endowed with the bracket 
\be\nonumber
[\zeta_1, \zeta_2]_{\bar F^*} \coloneqq \{ \zeta_1, \zeta_2 \} 
\ee
for any degree 2 functions $\zeta_1  , \zeta_2  \in \mathsf{\Gamma}(\bar{F}^*),$ because $\{ \cA^2 , \cA^2\} \subseteq \cA^2.$ This is a skew-symmetric bracket because of \eqref{gcomm} and it satisfies the (ungraded) Jacobi identity because the Poisson bracket does on degree 2 functions, i.e. $[\,\cdot\,,\, \cdot\,]_{\bar F^*}$ is a Lie bracket. Lastly, $\bar\sfa \colon \mathsf{\Gamma}(\bar{F}^*) \rightarrow \mathsf{\Gamma}(T\cQ) $ is a bracket homomorphism:
\begin{align}
\bar{\sfa}([\zeta_1, \zeta_2]_{\bar F^*})\cdot f =\{ \{\zeta_1, \zeta_2 \}, f\} = \{\zeta_1, \{ \zeta_2 , f\} \}- \{ \zeta_2, \{ \zeta_1 , f\}\} = 
[\bar{\sfa}(\zeta_1), \bar{\sfa}(\zeta_2)]_{T\cQ}\cdot f  \ , \nonumber
\end{align}
for all $\zeta_1,  \zeta_2\in \mathsf{\Gamma}(\bar{F}^*)$ and $f\in C^\infty(\cQ)$, where in the second equality we used the graded Jacobi identity \eqref{glei}. Thus $(\bar{F}^*, [\,\cdot\,,\, \cdot\,]_{\bar F^*},  \bar{\sfa})$ is a Lie algebroid over $\cQ$. This gives~(2).

Define the vector bundle morphism $\Psi \colon \bar{F}^* \rightarrow \cdo(E^*)$ by 
\be \nonumber
\Psi(\zeta)\cdot\varepsilon \coloneqq \{ \zeta, \varepsilon\}
\ee
for all $\zeta \in \mathsf{\Gamma}(\bar{F}^*)$ and $\varepsilon \in \mathsf{\Gamma}(E^*),$ where we used $\{\cA^2, \cA^1 \} \subseteq \cA^1.$ The graded derivation rule \eqref{gder} implies that $\Psi$ takes values in $\cdo(E^*)$ with symbols given by $\bar\sfa$:
\be \nonumber
\Psi(\zeta)\cdot(f\, \varepsilon) = \{ \zeta, f \,\varepsilon\}= f\,\{\zeta, \varepsilon \}+ \{ \zeta, f \}\, \varepsilon= f\,\big( \Psi(\zeta)\cdot\varepsilon\big) + \big(\bar{\sfa}(\zeta)\cdot f\big)\,\varepsilon \ .
\ee
Furthermore, $\Psi$ is a Lie algebroid morphism:
\begin{align}
\Psi([\zeta_1, \zeta_2]_{\bar F^*})\cdot\varepsilon = \Psi(\zeta_1)\cdot\big(\Psi(\zeta_2)\cdot\varepsilon\big)-\Psi(\zeta_2)\cdot\big(\Psi(\zeta_1)\cdot\varepsilon\big) = [\Psi(\zeta_1), \Psi(\zeta_2)]_\circ\cdot\varepsilon \ , \nonumber
\end{align}
as a result of the graded Jacobi identity \eqref{glei}. The graded Jacobi identity also
gives 
\be \nonumber
\bar\sfa(\zeta)\cdot\braket{\varepsilon_1, \varepsilon_2}_{E^*} = \braket{\Psi(\zeta)\cdot\varepsilon_1,\varepsilon_2}_{E^*} + \braket{\varepsilon_1, \Psi(\zeta)\cdot\varepsilon_2}_{E^*} \ ,
\ee
and so the Lie algebroid action $\Psi$ preserves the symmetric bilinear pairing $\braket{\,\cdot\,,\, \cdot\,}_{E^*}.$ Lastly, the equations \eqref{fstar1} and \eqref{fstar2} are immediate consequences of the graded derivation rule \eqref{gder}. This gives~(3).

Conversely, given (1)--(3), we define the Poisson brackets $\{\cA^2, \cA^2\}$ and $\{\cA^2, \cA^0\}$ from the Lie algebroid on $\bar{F}^*$, $\{\cA^2, \cA^1\}$ from the $\braket{\,\cdot\,,\, \cdot\,}_{E^*}$-preserving Lie algebroid action of $\bar{F}^*$ on $E^*,$ $\{\cA^1, \cA^1\}$ from the symmetric bilinear form $\braket{\,\cdot\,,\, \cdot\,}_{E^*}$ on $E^*$, and we set $\{\cA^1, \cA^0\}=0.$ These brackets are then extended to arbitrary functions by the graded derivation rule. 
\end{proof}

\begin{corollary} \label{Cor:algebroid2Poisson}
Under the correspondence of Theorem~\ref{algebroid2Poisson}, the involutive sequence
\be \nonumber
0 \longrightarrow \midwedge^2 E^* \xlongrightarrow{i} \bar{F}^* \xlongrightarrow{p} F^* \longrightarrow 0 
\ee
is a short exact sequence of Lie algebroids.
\end{corollary}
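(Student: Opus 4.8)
The plan is to observe first that the involutive sequence \eqref{sesdeg2} is \emph{already} short exact as a sequence of vector bundles over $\cQ$, and that its middle term $\bar{F}^*$ carries the Lie algebroid structure $\big([\,\cdot\,,\,\cdot\,]_{\bar F^*},\bar\sfa\big)$ furnished by Theorem~\ref{algebroid2Poisson}(2). It therefore suffices to equip the sub- and quotient bundles $\midwedge^2 E^*$ and $F^*$ with Lie algebroid structures for which the inclusion $i$ and the projection $p$ become Lie algebroid morphisms. Throughout I identify $\mathsf{\Gamma}(\midwedge^2 E^*)$ with the $C^\infty(\cQ)$-submodule of $\cA^2\simeq\mathsf{\Gamma}(\bar{F}^*)$ generated by products $\varepsilon_1\,\varepsilon_2$ of degree~$1$ functions $\varepsilon_i\in\cA^1\simeq\mathsf{\Gamma}(E^*)$, and compute all brackets as Poisson brackets through the correspondence of Theorem~\ref{algebroid2Poisson}.

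Next I would show that $\midwedge^2 E^*$ is a Lie subalgebroid with \emph{vanishing} anchor. The anchor vanishes because $\bar\sfa(\varepsilon_1\,\varepsilon_2)\cdot f=\{\varepsilon_1\,\varepsilon_2,f\}$ expands, via the graded skew-symmetry \eqref{gcomm} and the graded derivation rule \eqref{gder}, into a sum of terms each carrying a factor $\{\varepsilon_i,f\}\in\{\cA^1,\cA^0\}=0$. Closure under the bracket follows similarly: expanding $\{\varepsilon_1\,\varepsilon_2,\varepsilon_3\,\varepsilon_4\}$ with \eqref{gder} and \eqref{gcomm}, every term contains a factor $\{\varepsilon_i,\varepsilon_j\}=\braket{\varepsilon_i,\varepsilon_j}_{E^*}$ of degree~$0$ multiplying a product of two degree~$1$ functions, hence lands back in $\mathsf{\Gamma}(\midwedge^2 E^*)$. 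Endowing $\midwedge^2 E^*$ with this restricted bracket and the zero anchor then makes $i$ a Lie algebroid morphism trivially.

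I would then verify that $\midwedge^2 E^*$ is in fact an \emph{ideal} of $\bar{F}^*$, which is precisely the content of \eqref{fstar1}: it gives $[\zeta,\varepsilon_1\wedge\varepsilon_2]_{\bar F^*}=(\Psi(\zeta)\cdot\varepsilon_1)\wedge\varepsilon_2+\varepsilon_1\wedge(\Psi(\zeta)\cdot\varepsilon_2)\in\mathsf{\Gamma}(\midwedge^2 E^*)$ for all $\zeta\in\mathsf{\Gamma}(\bar{F}^*)$, and skew-symmetry handles the opposite ordering. Combined with the vanishing of $\bar\sfa$ on $\midwedge^2 E^*$, this lets me transport the structure to $F^*=\bar{F}^*/\midwedge^2 E^*$ by setting $[p(\zeta_1),p(\zeta_2)]_{F^*}:=p\big([\zeta_1,\zeta_2]_{\bar F^*}\big)$, which is well defined by the ideal property, and $\sfa_{F^*}(p(\zeta)):=\bar\sfa(\zeta)$, which is well defined since $\bar\sfa\circ i=0$. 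Skew-symmetry, the anchored derivation property \eqref{eq:anchorLeibniz}, and the Jacobi identity for $(F^*,[\,\cdot\,,\,\cdot\,]_{F^*},\sfa_{F^*})$ all descend from those of $\bar{F}^*$ by applying the fibrewise-surjective, $C^\infty(\cQ)$-linear map $p$, and $p$ is a Lie algebroid morphism by construction.

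The only place demanding genuine input rather than bookkeeping is establishing that $\midwedge^2 E^*$ is an ideal with zero anchor, since this is exactly what guarantees the quotient bracket and anchor on $F^*$ are well defined and compatible with the Leibniz rule; both facts, however, are immediate from \eqref{fstar1}, \eqref{gder}, and the normalization $\{\cA^1,\cA^0\}=0$ fixed in the proof of Theorem~\ref{algebroid2Poisson}. Assembling the three morphisms $i$, $p$ and the identity on $\bar{F}^*$, the vector bundle sequence \eqref{sesdeg2} upgrades to a short exact sequence of Lie algebroids, which is the assertion.
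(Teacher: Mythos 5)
Your proposal is correct and follows essentially the same route as the paper's proof: establish that $\midwedge^2 E^*$ is a Lie subalgebroid with vanishing anchor (by degree reasons and the graded derivation rule, equivalently \eqref{fstar1} and \eqref{fstar2}), upgrade it to an ideal via \eqref{fstar1}, and then descend the bracket and anchor to the quotient $F^*$. The only cosmetic difference is that you expand $\{\varepsilon_1\,\varepsilon_2,\varepsilon_3\,\varepsilon_4\}$ from the graded Leibniz rule where the paper quotes the resulting closed formula directly; the substance is identical.
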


\begin{proof}
The equations \eqref{fstar1} and \eqref{fstar2} give
\be \nonumber
[\varepsilon_1 \wedge \varepsilon_2 , \varepsilon_3\wedge \varepsilon_4]_{\bar F^*} = \braket{\varepsilon_2, \varepsilon_3}_{E^*}\,\varepsilon_1 \wedge \varepsilon_2 - \braket{\varepsilon_1, \varepsilon_3}_{E^*}\,\varepsilon_2 \wedge \varepsilon_4 + \braket{\varepsilon_2, \varepsilon_4}_{E^*}\,\varepsilon_3 \wedge \varepsilon_1 - \braket{\varepsilon_1, \varepsilon_4}_{E^*}\,\varepsilon_3 \wedge \varepsilon_2  \ ,
\ee
hence $\midwedge^2 E^*$ is a Lie subalgebroid of $\bar{F}^*.$  Thus $\midwedge^2 E^*$ is a Lie algebroid ideal of $\bar{F}^*$ because of the involutivity of $\midwedge^2 E^*$ and \eqref{fstar1}. The restriction of the anchor map $\bar\sfa$ to $\midwedge^2 E^*$ vanishes:
\be \nonumber
\bar\sfa(\varepsilon_1 \wedge \varepsilon_2)\cdot f = \{\varepsilon_1 \wedge \varepsilon_2, f \}=\{ f, \varepsilon_1\} \wedge \varepsilon_2 + \varepsilon_1 \wedge \{f, \varepsilon_2\}=0
\ee
because $\{ f, \varepsilon_1\}=\{f,\varepsilon_2\}=0$ for degree reasons. Hence $F^*$ can be endowed with a Lie algebroid given by the bracket
\be \nonumber
[p(\zeta_1), p(\zeta_2)]_{F^*} := p([\zeta_1, \zeta_2]_{\bar F^*})
\ee
and anchor $\sfa \colon F^* \rightarrow T\cQ $ through which $\bar\sfa:\bar F^*\to T\cQ$ factors:
\be \nonumber
\bar{\sfa}= \sfa \circ p \ .
\ee
Therefore \eqref{sesdeg2} is a short exact sequence of Lie algebroids.
\end{proof}

\begin{remark} \label{VBalgebroid}
Recall~\cite{Mackenzie} that a VB-algebroid is a Lie algebroid object in the category of vector bundles, and that a double vector bundle is precisely a VB-algebroid with trivial Lie algebroid structures. Theorem \ref{algebroid2Poisson} implies that the horizontal dual to \eqref{involutiveDVB}, that is, the double vector bundle 
\be \nonumber
\begin{tikzcd}
D_{E}^* \arrow{r}{q_E} \arrow{d}[swap]{q_{F^*}} & E \arrow{d}{q^E}  \\
F^* \arrow{r}[swap]{q^{F^*}} & \cQ   
\end{tikzcd}
\ee
with core bundle $E^*\to \cQ$, can be endowed with a VB-algebroid structure as follows:
The anchor map $\sfa_D \colon D_{E}^* \rightarrow TE $ is defined by
\begin{align}
\sfa_D (\varepsilon_1)\cdot f  \coloneqq 0 \ , \
\sfa_D (\varepsilon_1)\cdot\varepsilon_2 \coloneqq \braket{\varepsilon_1, \varepsilon_2}_{E^*} \ , \
\sfa_D (\zeta)\cdot f \coloneqq \bar\sfa(\zeta)\cdot f \ , \
\sfa_D (\zeta)\cdot\varepsilon_1 \coloneqq \Psi(\zeta)\cdot\varepsilon_1 \ , \nonumber
\end{align}
for all $f \in C^{\infty}(\cQ),$ $\varepsilon_1, \varepsilon_2  \in \mathsf{\Gamma}(E^*)$ and $\zeta  \in \mathsf{\Gamma}(\bar{F}^*);$ here we identify $\mathsf{\Gamma}(E^*) \simeq C^{\infty}_{\mathtt{lin}}(E).$
The VB-algebroid bracket is given by
\begin{align}
[\varepsilon_1, \varepsilon_2]_D  \coloneqq 0 \ , \quad
[\zeta_1, \varepsilon_1]_D \coloneqq \Psi(\zeta_1)\cdot\varepsilon_1 \qquad \mbox{and} \qquad
[\zeta_1, \zeta_2]_D \coloneqq [\zeta_1, \zeta_2]_{\bar F^*} \ , \nonumber
\end{align}
for all $\varepsilon_1,\varepsilon_2 \in \mathsf{\Gamma}(E^*)$ and $\zeta_1, \zeta_2 \in \mathsf{\Gamma}(\bar{F}^*);$ here we identify $\mathsf{\Gamma}(E^*) \simeq \mathsf{\Gamma}_{\mathtt{core}}(D_E^*)$ and extend the VB-algebroid bracket to any section of
$\hat{F}^* \simeq \midodot^2 E^* \otimes \bar{F}^*$ by the derivation rule.
For a general statement about the one-to-one correspondence between degree~2 Poisson manifolds and metric VB-algebroids, see \cite{delCarpio, Jotz2018}.
\end{remark}

\begin{remark}
From Remark \ref{VBalgebroid} it follows that the double vector bundle $(D; E, E; \cQ)$ given by \eqref{involutiveDVB} is endowed with a \emph{double linear Poisson structure}, i.e. a Poisson structure which is linear with respect to both vector bundle structures. The bundle $D$ is further endowed with a Lie algebroid differential $\de_D$ induced by the VB-algebroid structure on $D_E^*.$ There is an isomorphism $\Phi:D\to T^*E$ to the double vector bundle
\be 
\begin{tikzcd}
T^*E \arrow{r}{q_E} \arrow{d}[swap]{q_{E^*}} & E \arrow{d}{q^E}  \\
E^* \arrow{r}[swap]{q^{E^*}} & \cQ   
\end{tikzcd}
\ee
with core bundle $T^*\cQ.$ Its linear sequence is given by 
\be \label{lincotangent}
0 \longrightarrow E \otimes T^*\cQ \xlongrightarrow{} J^1(E) \xlongrightarrow{} E \longrightarrow 0 \ ,
\ee
where $J^1(E)$ is the first order jet bundle of $E$.
In particular, the double linear Poisson structure on $(D; E, E; \cQ)$ is the pullback by the isomorphism $\Phi \colon D \rightarrow T^*E$ of the canonical Poisson structure on $T^*E.$
Hence the vector space of linear sections $\mathsf{\Gamma}_{\mathtt{lin}}(D) \simeq \mathsf{\Gamma}(\hat{E})$ is spanned by sections of the form $\de_D \varepsilon,$ with $\varepsilon  \in \mathsf{\Gamma}(E^*)$, while the vector space of core sections $\mathsf{\Gamma}_{\mathtt{core}}(D) \simeq \mathsf{\Gamma}(F)$ is spanned by sections of the form $\de f,$ with $f  \in C^{\infty}(\cQ).$
\end{remark}

We are now ready to look at the interplay between degree~$3$ functions and degree~$-2$ Poisson structures.

\begin{proposition} \label{deg3Poisson}
Let $\gamma \in \cA^3$ be a degree 3 function on a degree 2 Poisson manifold $(\cM, \{\, \cdot\,,\, \cdot\,\})$, and let $(\boldsymbol{\gamma}_1, \boldsymbol{\gamma}_2)$ be its characteristic pair of vector bundle morphisms. Then
\be
\boldsymbol{\gamma}_1(\de f)= \{\gamma, f\}  \qquad \mbox{and} \qquad \boldsymbol{\gamma}_2(\de_D \varepsilon)= \{\gamma, \varepsilon\} \ , 
\ee 
for all $f\in C^{\infty}(\cQ)$ and $\varepsilon \in \mathsf{\Gamma}(E^*).$
\end{proposition}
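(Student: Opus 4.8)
The plan is to reduce both equalities to the intrinsic description of the characteristic pair recorded above, namely $\boldsymbol{\gamma}_1(\phi)=-\phi\cdot\gamma$ for $\phi\in\mathsf{\Gamma}(F)\simeq\frX_{-2}(\cM)$ and $\boldsymbol{\gamma}_2(\hat e)=-\hat e\cdot\gamma$ for $\hat e\in\mathsf{\Gamma}(\hat{E})\simeq\frX_{-1}(\cM)$, where $\phi$ and $\hat e$ act as homogeneous negative-degree derivations on $\gamma$. Since the Poisson bracket carries degree $-2$, the Hamiltonian vector fields $\{f,\,\cdot\,\}$ and $\{\varepsilon,\,\cdot\,\}$ attached to $f\in\cA^0=C^\infty(\cQ)$ and $\varepsilon\in\cA^1\simeq\mathsf{\Gamma}(E^*)$ are homogeneous of degrees $-2$ and $-1$, hence elements of $\frX_{-2}(\cM)$ and $\frX_{-1}(\cM)$ respectively. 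The whole statement then rests on identifying these Hamiltonian vector fields with the distinguished generators $\de f$ and $\de_D\varepsilon$ of the core and linear sections.

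First I would establish this identification using the observation recorded above that the double linear Poisson structure on $(D;E,E;\cQ)$ is the pullback along the isomorphism $\Phi\colon D\to T^*E$ of the canonical Poisson structure on $T^*E$, whose core is $T^*\cQ$ and whose linear sequence is \eqref{lincotangent}. Under this identification the core sections $\mathsf{\Gamma}_{\mathtt{core}}(D)\simeq\mathsf{\Gamma}(F)$ spanned by $\de f$ and the linear sections $\mathsf{\Gamma}_{\mathtt{lin}}(D)\simeq\mathsf{\Gamma}(\hat{E})$ spanned by $\de_D\varepsilon$ are precisely the Hamiltonian vector fields of $f$ and of $\varepsilon$; I would verify this on generators and extend by the derivation rule, reading off the anchor and bracket data from Theorem~\ref{algebroid2Poisson} to confirm that both sides agree.

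Granting $\de f\simeq\{f,\,\cdot\,\}$ and $\de_D\varepsilon\simeq\{\varepsilon,\,\cdot\,\}$ up to an overall sign in each identification, the intrinsic description yields $\boldsymbol{\gamma}_1(\de f)=\mp\{f,\gamma\}$ and $\boldsymbol{\gamma}_2(\de_D\varepsilon)=\mp\{\varepsilon,\gamma\}$. I then invoke the graded symmetry \eqref{gcomm} with $n=2$: because $(|f|+2)(|\gamma|+2)=10$ is even whereas $(|\varepsilon|+2)(|\gamma|+2)=15$ is odd, one has $\{f,\gamma\}=\{\gamma,f\}$ and $\{\varepsilon,\gamma\}=-\{\gamma,\varepsilon\}$. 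Fixing the sign of each identification so that the asserted formulas $\boldsymbol{\gamma}_1(\de f)=\{\gamma,f\}$ and $\boldsymbol{\gamma}_2(\de_D\varepsilon)=\{\gamma,\varepsilon\}$ emerge then completes the proof.

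The main obstacle is exactly this last point of the second step: matching the negative-degree Hamiltonian vector fields with the generators $\de f$ and $\de_D\varepsilon$ together with their correct relative sign. The parity delivered by \eqref{gcomm} differs between the degree-$0$ and degree-$1$ cases, so the normalizations of the isomorphisms $\frX_{-2}(\cM)\simeq\mathsf{\Gamma}(F)$ and $\frX_{-1}(\cM)\simeq\mathsf{\Gamma}(\hat E)$ induced by $\Phi$ must be pinned down carefully, and they need not carry the same sign. The bracket computations themselves are routine once Theorem~\ref{algebroid2Poisson} is in hand; it is this sign bookkeeping, hidden in the geometry of the isomorphism $\Phi\colon D\to T^*E$, that requires genuine care.
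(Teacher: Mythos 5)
Your overall strategy is legitimately different from the paper's and could in principle work: instead of computing in a splitting, you want to combine the intrinsic description $\boldsymbol{\gamma}_1(\phi)=-\phi\cdot\gamma$, $\boldsymbol{\gamma}_2(\hat e)=-\hat e\cdot\gamma$ with an identification of $\de f$ and $\de_D\varepsilon$ as the (negative-degree) Hamiltonian vector fields of $f$ and $\varepsilon$, and then close the argument with the graded symmetry \eqref{gcomm}. The degree counting and the parity computation $(|f|+2)(|\gamma|+2)=10$ versus $(|\varepsilon|+2)(|\gamma|+2)=15$ are correct.

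However, there is a genuine gap at exactly the point you flag as ``the main obstacle'': you never actually determine the signs in the identifications $\frX_{-2}(\cM)\simeq\mathsf{\Gamma}(F)$ and $\frX_{-1}(\cM)\simeq\mathsf{\Gamma}(\hat E)$, and your proposed resolution --- ``fixing the sign of each identification so that the asserted formulas emerge'' --- is circular. Those identifications are canonical, induced by the double vector bundle structure and the isomorphism $\Phi\colon D\to T^*E$; their normalizations are already committed to by the formulas \eqref{morphgamma1} and \eqref{morphgamma2} and by properties (1)--(3) of Theorem~\ref{classdegree3}, so you must \emph{compute} that $\de f$ corresponds to $-\{f,\,\cdot\,\}$ while $\de_D\varepsilon$ corresponds to $+\{\varepsilon,\,\cdot\,\}$ (note these come out with opposite signs), not choose it. Without that computation the statement is not proved; with it, your argument reduces to essentially the same verification the paper performs directly, namely writing $\gamma=\varepsilon_0\otimes\zeta+\varepsilon_1\wedge\varepsilon_2\wedge\varepsilon_3$ in a splitting, expanding $\{\gamma,f\}$ and $\{\gamma,\varepsilon\}$ by the graded Leibniz rule, and matching terms against \eqref{morphgamma1} and \eqref{morphgamma2} via $\{\zeta,f\}=\braket{\zeta,\de f}$ and $\{\zeta,\varepsilon\}=\braket{\zeta,\de_D\varepsilon}$. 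To repair your proof you should carry out this generator-level check explicitly rather than deferring it.
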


\begin{proof}
Choosing a splitting of the short exact sequence \eqref{sesfun3}, we can write $\gamma \in \cA^3$ as 
\be
\gamma= \varepsilon_0 \otimes \zeta + \varepsilon_1 \wedge \varepsilon_2 \wedge \varepsilon_3 \ ,
\ee  
where $\varepsilon_i\in\mathsf{\Gamma}(E^*)$ and $\zeta\in\mathsf{\Gamma}(F^*)$. 
Then
\be \nonumber
\{\gamma, f\}= \{ \varepsilon_0 \otimes \zeta, f\}= \varepsilon_0 \, \{\zeta, f\}= \varepsilon_0 \, \braket{\zeta, \de f} = \boldsymbol{\gamma}_1(\de f) \ ,
\ee 
where we identify $\de f$ with its corresponding section of $F$, and here $\braket{\,\cdot\,,\, \cdot\,}$ is the duality pairing between $F$ and $F^*.$ The last equality follows from \eqref{morphgamma1}.

The second equality is obtained from calculating
\begin{align}
\{\gamma, \varepsilon\} 
&= \braket{\varepsilon_0, \varepsilon}_{E^*}\, \zeta - \varepsilon_0 \, \wedge \{\zeta, \varepsilon\}+ \braket{\varepsilon, \varepsilon_1}_{E^*}\, \varepsilon_2 \wedge \varepsilon_3 - \braket{\varepsilon, \varepsilon_2}_{E^*}\, \varepsilon_1 \wedge \varepsilon_3 +\braket{\varepsilon, \varepsilon_3}_{E^*}\, \varepsilon_1 \wedge \varepsilon_2 \ . \nonumber  
\end{align}
On the other hand, by identifying $\de_D \varepsilon$ with its corresponding section in $\mathsf{\Gamma}(\hat{E})$ and using \eqref{morphgamma2}, we have
\be \nonumber
\boldsymbol{\gamma}_2(\de_D \varepsilon)= \braket{\varepsilon_0, \varepsilon}_{E^*}\, \zeta - \varepsilon_0 \wedge \braket{\zeta, \de_D \varepsilon} + \braket{\varepsilon, \varepsilon_1}_{E^*}\, \varepsilon_2 \wedge \varepsilon_3 - \braket{\varepsilon, \varepsilon_2}_{E^*}\, \varepsilon_1 \wedge \varepsilon_3 +\braket{\varepsilon, \varepsilon_3}_{E^*}\, \varepsilon_1 \wedge \varepsilon_2 \ .
\ee
The second equality now follows from
\be \nonumber
\{\zeta, \varepsilon\}= \braket{\zeta, \de_D \varepsilon} \ ,
\ee
where here $\braket{\,\cdot\,,\, \cdot\,}$ is the duality pairing between $F$ and $F^*$ for the component of $\de_D \varepsilon$ in $\mathsf{\Gamma}(E^* \otimes F^*)$.
\end{proof}

\medskip

\subsection{Symplectic 2-Algebroids and Symplectic Almost Lie 2-Algebroids}~\\[5pt]
\label{subsec:symplectic2alg}
In order to establish a correspondence with pseudo-Euclidean vector bundles, we require our graded manifold to be endowed with a symplectic structure.

\begin{definition} \label{gradedsymplectic}
A \emph{symplectic degree $n$ manifold} is a degree $n$ manifold $\cM=(\cQ, \cA)$ with degree $-n$ Poisson structure $\{\, \cdot\,,\, \cdot\, \}$ such that, for all $x \in \cQ,$ there is an open subset $U \subset \cQ$ containing $x$ and local coordinates $(x^i, w^\alpha)$, where $x^i$ are degree~$0$ coordinates and $w^\alpha$ are coordinates of degrees $1,\dots,n$, such that the matrix of the degree~$0$ components of the Poisson bracket at $x$, $\big\{ (x^i, w^{\alpha}), (x^j, w^{\beta})\big\}^0(x),$ is non-degenerate. 
\end{definition}

We are now ready to discuss the correspondence between symplectic degree 2 manifolds and pseudo-Euclidean vector bundles together with their gauge symmetries, through a result due to Roytenberg \cite{Roytenberg2002}.

\begin{proposition} \label{atiyahsymplectic}
There is a one-to-one correspondence between symplectic degree $2$ manifolds $(\cM, \{\,\cdot\,,\, \cdot\,\})$ and pseudo-Euclidean vector bundles $(E^*, \braket{\,\cdot\,,\, \cdot\,}_{E^*}).$ The associated Lie algebroid $(F^*, [\,\cdot\,,\, \cdot\,]_{F^*}, \sfa)$ is isomorphic to the tangent Lie algebroid $(T\cQ, [\,\cdot\,,\, \cdot\,]_{T\cQ},\unit_{T\cQ}),$ and its involutive sequence of Lie algebroids
\be \nonumber
0 \longrightarrow \midwedge^2 E^* \xlongrightarrow{i} \bar{F}^* \xlongrightarrow{p} F^* \longrightarrow 0 
\ee
is isomorphic (as Lie algebroids) to the Atiyah sequence of $(E^*, \braket{\,\cdot\,,\, \cdot\,}_{E^*})$:
\be \nonumber
0 \longrightarrow \frso(E^*) \xlongrightarrow{} \IA(E^*, \braket{\,\cdot\,,\, \cdot\,}_{E^*}) \longrightarrow T\cQ \longrightarrow 0 \ .
\ee
\end{proposition}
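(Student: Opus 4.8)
The plan is to upgrade Theorem~\ref{algebroid2Poisson} and Corollary~\ref{Cor:algebroid2Poisson} by feeding in the non-degeneracy condition of Definition~\ref{gradedsymplectic}. First I would analyse the degree~$0$ part of the Poisson matrix $\{(x^i,w^\alpha),(x^j,w^\beta)\}^0$ in a local splitting $\cM\simeq E[1]\oplus F[2]$ with coordinates $(x^i,\zeta^a,\xi_i)$ of degrees $(0,1,2)$. Since $\{u^A,u^B\}$ has degree $|u^A|+|u^B|-2$, its degree~$0$ component can be non-zero only when $|u^A|+|u^B|=2$; the only such coordinate pairs are $(\zeta^a,\zeta^b)$, producing the block $\braket{\zeta^a,\zeta^b}_{E^*}$, and $(x^i,\xi_j)$, producing the block $\bar\sfa(\xi_j)\cdot x^i$, while the $(\xi_i,\xi_j)$ block vanishes at degree~$0$ since $\{\xi_i,\xi_j\}$ is a degree~$2$ function. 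Hence this matrix is block-anti-diagonal, and its non-degeneracy is equivalent to the simultaneous non-degeneracy of the symmetric pairing $\braket{\,\cdot\,,\,\cdot\,}_{E^*}$ on $E^*$ and of the $x$--$\xi$ block.

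The first condition is precisely the statement that $(E^*,\braket{\,\cdot\,,\,\cdot\,}_{E^*})$ is a pseudo-Euclidean vector bundle. The second, since a block anti-diagonal skew matrix is invertible iff its off-diagonal block is square and invertible, forces $\mathrm{rank}(F^*)=\dim\cQ$ together with invertibility of the anchor restricted to the degree~$2$ coordinates; translating through $\bar\sfa=\sfa\circ p$ of Corollary~\ref{Cor:algebroid2Poisson}, and using that $\bar\sfa$ vanishes on $\midwedge^2E^*$, this says exactly that the anchor $\sfa\colon F^*\to T\cQ$ of the quotient Lie algebroid is a vector bundle isomorphism. As $\sfa$ is a bracket homomorphism by construction and an isomorphism of vector bundles, it is an isomorphism of Lie algebroids $(F^*,[\,\cdot\,,\,\cdot\,]_{F^*},\sfa)\simeq(T\cQ,[\,\cdot\,,\,\cdot\,]_{T\cQ},\unit_{T\cQ})$, giving the middle claim of the proposition.

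For the extension term, non-degeneracy of $\braket{\,\cdot\,,\,\cdot\,}_{E^*}$ makes the map $\Psi|_{\midwedge^2E^*}$ defined by \eqref{fstar2} injective, hence a vector bundle isomorphism $\midwedge^2E^*\xrightarrow{\ \sim\ }\frso(E^*,\braket{\,\cdot\,,\,\cdot\,}_{E^*})$ onto the skew-symmetric endomorphisms, since $\braket{\Psi(\varepsilon_1\wedge\varepsilon_2)\varepsilon,\varepsilon'}_{E^*}$ is manifestly antisymmetric in $\varepsilon,\varepsilon'$ and the ranks agree. By Theorem~\ref{algebroid2Poisson}(3) the metric-preserving action $\Psi$ has symbol $\bar\sfa$, so it takes values in the orthogonal Atiyah algebroid $\IA(E^*,\braket{\,\cdot\,,\,\cdot\,}_{E^*})\subset\cdo(E^*)$ and defines a morphism of short exact sequences from \eqref{sesdeg2} to the Atiyah sequence, with vertical maps $\Psi|_{\midwedge^2E^*}$, $\Psi$ and $\sfa$. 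The two outer maps are isomorphisms by the previous steps, so the five lemma yields the isomorphism $\Psi\colon\bar F^*\xrightarrow{\sim}\IA(E^*,\braket{\,\cdot\,,\,\cdot\,}_{E^*})$, which is a Lie algebroid isomorphism because $\Psi$ is a Lie algebroid morphism. The converse is routine: given a pseudo-Euclidean bundle $(E^*,\braket{\,\cdot\,,\,\cdot\,}_{E^*})$ one feeds its metric, its Atiyah algebroid, and the tautological inclusion $\IA(E^*)\hookrightarrow\cdo(E^*)$ as the data (1)--(3) of Theorem~\ref{algebroid2Poisson}; the resulting degree~$-2$ Poisson bracket is non-degenerate by construction, and the two assignments are mutually inverse. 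The main obstacle I anticipate is the very first step: pinning down precisely how the abstract non-degeneracy in Definition~\ref{gradedsymplectic} decomposes into the two geometric conditions, since everything afterwards is a comparison of short exact sequences via the five lemma.
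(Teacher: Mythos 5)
Your proposal is correct and follows essentially the same route as the paper's proof: extracting the two non-degeneracy conditions (on the $\zeta$--$\zeta$ block giving the pseudo-Euclidean pairing on $E^*$, and on the $x$--$\xi$ block giving the isomorphism $\sfa\colon F^*\to T\cQ$) and then comparing the involutive sequence to the Atiyah sequence via the morphism $(\Psi|_{\midwedge^2E^*},\Psi,\sfa)$, concluding that $\Psi$ is an isomorphism from the two outer isomorphisms. Your degree-counting justification of the block structure and the explicit converse are slightly more detailed than what the paper writes, but the argument is the same.
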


\begin{proof}
Let $(x^i, \varepsilon^a, \zeta^\mu)$ be local coordinates on $\cM=(\cQ, \cA)$ of degrees $(0,1,2)$. Since the Poisson bracket $\{\, \cdot\,,\, \cdot\,\}$ is symplectic, it follows that
\be \label{detsymplectic}
\det\big(\{ \zeta^\mu, x^i \}(x)\big) \neq 0 \qquad \mbox{and} \qquad \det\big(\{ \varepsilon^a, \varepsilon^b \}(x)\big) \neq 0 \ ,
\ee 
for all $x \in \cQ.$ The second condition in \eqref{detsymplectic} implies that the symmetric bilinear pairing $\braket{\,\cdot\,,\, \cdot\,}_{E^*}$ on $E^*$ constructed in Theorem~\ref{algebroid2Poisson} is non-degenerate, so it endows $E^*$ with the structure of a pseudo-Euclidean vector bundle over $\cQ$. The first condition in \eqref{detsymplectic} shows that the anchor map $\sfa \colon F^* \rightarrow T\cQ$ defined in Corollary~\ref{Cor:algebroid2Poisson} is an isomorphism. From Theorem~\ref{algebroid2Poisson} and Corollary~\ref{Cor:algebroid2Poisson} there is a morphism of Lie algebroid sequences given by
\be\nonumber
\begin{tikzcd}
\midwedge^2 E^* \arrow{r}{i} \arrow{d} & \bar{F}^* \arrow{r}{p} \arrow{d}{\Psi} & F^* \arrow{d}{\sfa} \\
\frso(E^*) \arrow{r} & \IA(E^*, \braket{\,\cdot\,,\, \cdot\,}_{E^*})  \arrow{r} & T\cQ 
\end{tikzcd}
\ee
From the non-degeneracy of $\braket{\,\cdot\,,\, \cdot\,}_{E^*}$ it follows that $\midwedge^2  E^* \simeq \frso(E^*).$ Since this is a commutative diagram and $\sfa$ is an isomorphism, it follows that $\Psi \colon \bar{F}^* \rightarrow \IA(E^*, \braket{\,\cdot\,,\, \cdot\,}_{E^*}) $ is an isomorphism as well. 
\end{proof}

\begin{remark}
Since a symplectic degree 2 manifold is associated with a pseudo-Euclidean vector bundle $(E^*, \braket{\,\cdot\,,\, \cdot\,}_{E^*}),$ the constructions here and in the following can be made directly on the vector bundle $E,$ as discussed in \cite{Roytenberg2002}, because of the isomorphism $E\simeq E^*$ induced by the pseudo-Euclidean metric.
\end{remark}

We are finally ready to discuss the correspondence with metric algebroids. For this, we introduce the appropriate weakening of the notion of symplectic Lie $n$-algebroid from Definition~\ref{def:symplnalg}.

\begin{definition}
A \emph{symplectic $n$-algebroid} is a symplectic degree~$n$ manifold $(\cM,\{\,\cdot\,,\,\cdot\,\})$ endowed with a degree~$n+1$ function $\gamma\in\cA^{n+1}$, or equivalently (by Remark~\ref{rem:sympln}) a degree~$1$ symplectic vector field $Q\in\frX_1(\cM)$.
\end{definition}

\begin{example}\label{ex:symplLien}
A symplectic Lie $n$-algebroid is precisely a symplectic $n$-algebroid which is also a dg-manifold.
\end{example}

\begin{remark}
Symplectic $n$-algebroids are called `symplectic nearly Lie $n$-algebroids' (for $n=2$) in~\cite{Bruce2019}. In~\cite{Samann2018} they are refered to as `symplectic pre-N$Q$-manifolds of degree~$n$', while in~\cite{Heller2016} they are called `pre-QP-manifolds'.
\end{remark}

The main result of this section, inspired by \cite{delCarpio}, is the following weakening of Theorem~\ref{thm:QP2Courant}.

\begin{theorem}\label{thm:1-1metric}
There is a one-to-one correspondence between symplectic $2$-algebroids and metric algebroids.
\end{theorem}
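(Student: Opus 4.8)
The plan is to run the proof of Theorem~\ref{thm:QP2Courant} in a weakened form, dropping the Maurer-Cartan constraint $\{\gamma,\gamma\}=0$, and to phrase the correspondence geometrically through the characterization of degree~$3$ functions in Theorem~\ref{classdegree3} rather than in Darboux coordinates. First I would use Proposition~\ref{atiyahsymplectic} to identify the symplectic degree~$2$ manifold $\cM$ with a pseudo-Euclidean vector bundle $(E^*,\langle\,\cdot\,,\,\cdot\,\rangle_{E^*})$; via the induced isomorphism $E\simeq E^*$ this furnishes the underlying bundle $E\to\cQ$ together with the fibrewise non-degenerate pairing $\langle\,\cdot\,,\,\cdot\,\rangle_E$ required by Definition~\ref{malg}, with the structural identification $\langle e_1,e_2\rangle_E=\{e_1,e_2\}$. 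This already matches the metric datum of a metric algebroid with the symplectic datum of $\cM$, independently of the choice of $\gamma$.

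Next I would produce the D-bracket and anchor from the degree~$3$ Hamiltonian by the derived bracket construction of Theorem~\ref{thm:QPLeibniz}, setting $\llbracket e,e'\rrbracket_\dorf:=-\{\{e,\gamma\},e'\}$ and $\rho(e)\cdot f:=\{\{e,\gamma\},f\}$ on degree~$1$ functions $e,e'$ identified with sections of $E$ through $\cA^1\simeq\mathsf{\Gamma}(E^*)\simeq\mathsf{\Gamma}(E)$. The anchored derivation property \eqref{eq:anchorLeibniz} then follows from the graded Leibniz rule \eqref{gder}, while the compatibility axioms \eqref{eq:metric1} and \eqref{eq:metric2} follow from the graded Jacobi identity \eqref{glei} together with $\langle e_1,e_2\rangle_E=\{e_1,e_2\}$. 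The conceptually essential point is that none of these three derivations invokes $\{\gamma,\gamma\}=0$: by Theorem~\ref{thm:sympln} one has $Q^2=X_{\frac12\{\gamma,\gamma\}}$, and as in the proof of Theorem~\ref{thm:QPLeibniz} the Maurer-Cartan equation enters \emph{only} to force the Leibniz identity \eqref{eq:Leibnizid}. Dropping $Q^2=0$ therefore removes exactly the Leibniz identity and nothing from the metric axioms; this is precisely Example~\ref{ex:Courantmetric} run in reverse.

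To make the statement intrinsic and coordinate-free I would translate this through the characteristic pair $(\boldsymbol{\gamma}_1,\boldsymbol{\gamma}_2)$ of Theorem~\ref{classdegree3}. Using Proposition~\ref{deg3Poisson}, the relation $\boldsymbol{\gamma}_1(\de f)=\{\gamma,f\}$ together with the isomorphism $F\simeq T^*\cQ$ of Proposition~\ref{atiyahsymplectic} identifies $\boldsymbol{\gamma}_1$ with the transpose of the anchor $\rho$, while $\boldsymbol{\gamma}_2(\de_D\varepsilon)=\{\gamma,\varepsilon\}$ encodes the D-bracket on the linear sections $\mathsf{\Gamma}(\hat{E})\simeq\mathsf{\Gamma}_{\mathtt{lin}}(D)$. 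I would then verify that the three defining properties of the pair are equivalent to the metric algebroid axioms: property~(3), that the symmetric part of $\boldsymbol{\gamma}_2$ is controlled by the core tensor $W$, reproduces the symmetrized D-bracket formula \eqref{eq:Courantmetric} (equivalently \eqref{eq:metric2}); property~(1) is the self-adjointness matching the preservation axiom \eqref{eq:metric1}; and property~(2) fixes the remaining skew-symmetric normalization of the bracket. For the converse I reverse each step: Proposition~\ref{atiyahsymplectic} rebuilds the symplectic degree~$2$ manifold from $(E^*,\langle\,\cdot\,,\,\cdot\,\rangle_{E^*})$, and the pair $(\rho^{\mathrm t},\llbracket\,\cdot\,,\,\cdot\,\rrbracket_\dorf)$ is shown to satisfy properties~(1)--(3), hence to assemble into a degree~$3$ function $\gamma$. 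Mutual inverseness then follows from the bijectivity already furnished by Theorem~\ref{classdegree3} and Proposition~\ref{atiyahsymplectic}.

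The hard part will not be any single identity but the bookkeeping across the identifications $E\simeq E^*$, $F\simeq T^*\cQ$ and $\mathsf{\Gamma}(\hat{E})\simeq\mathsf{\Gamma}_{\mathtt{lin}}(D)$, keeping track of the degree-shift signs while translating between the derived-bracket description and the characteristic pair. The subtle structural claim to confirm, and the one most easily overlooked, is that removing the homological condition $Q^2=0$ leaves no residual integrability constraint on $\gamma$ beyond properties~(1)--(3): this is what guarantees that symplectic $2$-algebroids land precisely on metric algebroids, rather than on the pre-Courant or Courant refinements obtained by imposing \eqref{eq:anchorbracket} or the full Leibniz identity \eqref{eq:Leibnizid}.
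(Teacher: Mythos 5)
Your proposal follows essentially the same route as the paper's proof: Proposition~\ref{atiyahsymplectic} supplies the pseudo-Euclidean bundle, derived brackets with $\gamma$ give the anchor and D-bracket (with the graded Jacobi identity yielding \eqref{eq:metric1} and \eqref{eq:metric2} and no use of $\{\gamma,\gamma\}=0$), and the converse is packaged through the characteristic pair $(\boldsymbol{\gamma}_1,\boldsymbol{\gamma}_2)$ of Theorem~\ref{classdegree3}. The only slight misalignment is your dictionary between properties (1)--(3) and the axioms: in the paper, property (1) is immediate because $\boldsymbol{\gamma}_1$ and $\boldsymbol{\gamma}_2$ are both defined to encode $\rho(\varepsilon)\cdot f$, property (2) is verified via the left derivation property \eqref{eq:leftderivation}, and property (3) via the symmetrized bracket formula \eqref{eq:Courantmetric}.
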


\begin{proof}
Let $(\cM, \{\,\cdot\,,\, \cdot\,\}, \gamma)$ be a symplectic $2$-algebroid, and consider its involutive sequence of Lie algebroids. By~Proposition \ref{atiyahsymplectic}, the vector bundle $E^*\rightarrow \cQ$ is endowed with a fibrewise pseudo-Euclidean metric $\braket{\,\cdot\,,\, \cdot\,}_{E^*}$. 
Define 
\be\nonumber
\rho(\varepsilon)\cdot f \coloneqq - \{\{\gamma, \varepsilon\}, f\} \ ,
\ee
for all $\varepsilon \in \mathsf{\Gamma}(E^*)$ and $f \in C^{\infty}(\cQ).$ By the derivation property of the Poisson bracket, this defines a map
\be \nonumber
\rho \colon   \mathsf{\Gamma}(E^*) \longrightarrow {\mathsf{Der}}\big(C^\infty(\cQ)\big) \ ,
\ee
which is a morphism of $C^{\infty}(\cQ)$-modules and thus induces a vector bundle morphism
\be \nonumber
\rho \colon E^* \longrightarrow T\cQ \ .
\ee
The D-bracket on $E^*$ is given by
\be \nonumber
\llbracket\varepsilon_1, \varepsilon_2\rrbracket_\dorf \coloneqq -\{\{\gamma, \varepsilon_1\},\varepsilon_2\}
\ee
for all $\varepsilon_1,  \varepsilon_2 \in \mathsf{\Gamma}(E^*).$
The compatibility conditions \eqref{eq:metric1} and \eqref{eq:metric2} of Definition~\ref{malg} follow straightforwardly from the graded Jacobi identity for the graded Poisson bracket.
As a further check, the anchored derivation property \eqref{eq:anchorLeibniz} for $\llbracket\,\cdot\,,\, \cdot\,\rrbracket_\dorf$ follows from the graded derivation property of the Poisson bracket. Thus $(E^*,\llbracket\,\cdot\,,\,\cdot\,\rrbracket_\dorf,\langle\,\cdot\,,\,\cdot\,\rangle_{E^*},\rho)$ is a metric algebroid on $\cQ$.

For the converse statement we have to work a bit harder. Let $(E^*, \llbracket\,\cdot\,,\,\cdot\,\rrbracket_\dorf,\braket{\,\cdot\,,\, \cdot\,}_{E^*}, \rho)$ be a metric algebroid over $\cQ$. By Proposition~\ref{atiyahsymplectic}, the underlying pseudo-Euclidean vector bundle corresponds to a symplectic degree~$2$ manifold $(\cM,\{\,\cdot\,,\,\cdot\,\})$. Define the pair of vector bundle morphisms $(\boldsymbol{\gamma}_1, \boldsymbol{\gamma}_2)$ by the dual pairings
\begin{align}
\braket{\boldsymbol{\gamma}_1(\de f), \de_D \varepsilon} & \coloneqq \rho(\varepsilon)\cdot f \ , \label{g1} \\[4pt]
\braket{\boldsymbol{\gamma}_2(\de_D \varepsilon), \de f} & \coloneqq \rho(\varepsilon)\cdot f \ , \label{g2} \\[4pt]
\braket{\boldsymbol{\gamma}_2(\de_D \varepsilon_1 ), \de_D \varepsilon_2} & \coloneqq \llbracket\varepsilon_1, \varepsilon_2\rrbracket_{\dorf} \ . \label{g3}
\end{align}
We shall prove that the pair $(\boldsymbol{\gamma}_1, \boldsymbol{\gamma}_2)$ satisfies the properties (1)--(3) of Theorem~\ref{classdegree3}, and hence characterizes a degree~3 function $\gamma$ on $\cM$. Property~(1) of Theorem~\ref{classdegree3} follows immediately from the definitions \eqref{g1} and \eqref{g2}.

For property~(2), we compute
\begin{align}
\braket{(\varepsilon_1 \otimes \de f)^* \circ \boldsymbol{\gamma}_1^* - \boldsymbol{\gamma}_1 \circ (\varepsilon_1 \otimes \de f), \de_D \varepsilon_2} & = - \braket{\boldsymbol{\gamma}_1 (\de f),\de_D \varepsilon_2} \, \varepsilon_1 + \boldsymbol{\gamma}_1(\de f)\,\braket{\varepsilon_1, \varepsilon_2}_{E^*}  \nonumber \\[4pt]
&= - \big(\rho(\varepsilon_2)\cdot f\big)\,\varepsilon_1 + \cD f\,\braket{\varepsilon_1, \varepsilon_2}_{E^*} \ , \nonumber
\end{align}
where we use the usual definition of the generalized exterior derivative $\cD$ in the metric algebroid given by
\be \nonumber
\braket{\cD f, \varepsilon}_{E^*} \coloneqq \rho(\varepsilon)\cdot f \ .
\ee
On the other hand, from the Leibniz rule for the Lie algebroid differential $\de_D$ and the left derivation property \eqref{eq:leftderivation} of the D-bracket it follows that
\begin{align}
\braket{\boldsymbol{\gamma}_2(\varepsilon_1 \otimes \de f ), \de_D \varepsilon_2} &= \braket{\boldsymbol{\gamma}_2(\de_D (f \,  \varepsilon_1)- f \, \de_D \varepsilon_1), \de_D \varepsilon_2} \nonumber \\[4pt]
&=\llbracket f \, \varepsilon_1, \varepsilon_2\rrbracket_\dorf -f\, \llbracket\varepsilon_1, \varepsilon_2 \rrbracket_{ \dorf}\nonumber \\[4pt]
&= - \big(\rho(\varepsilon_2)\cdot f\big)\,\varepsilon_1 + \cD f\,\braket{\varepsilon_1, \varepsilon_2}_{E^*} \ , \nonumber
\end{align}
and hence \eqref{prop2} follows.

Finally, for property~(3) of Theorem~\ref{classdegree3}, we use \eqref{eq:Courantmetric} to get
\be \nonumber
\braket{\boldsymbol{\gamma}_2(\de_D \varepsilon_1), \de_D \varepsilon_2}+\braket{\boldsymbol{\gamma}_2(\de_D \varepsilon_2), \de_D \varepsilon_1}= \llbracket\varepsilon_1, \varepsilon_2\rrbracket_\dorf + \llbracket\varepsilon_2, \varepsilon_1\rrbracket_\dorf = \cD\braket{\varepsilon_1, \varepsilon_2}_{E^*} \ .
\ee
On the other hand, by choosing a splitting of the linear sequence \eqref{lincotangent} we can decompose $\de_D \varepsilon$ as $\de_D \varepsilon= (\de_D\varepsilon- \nabla \varepsilon) + \nabla \varepsilon$ where $\nabla\varepsilon\in\mathsf{\Gamma}(E^*\otimes T^*\cQ)$ defines a metric connection on $(E^*,\langle\,\cdot\,,\,\cdot\,\rangle_{E^*})$. Then we compute
\be \nonumber
W(\de_D \varepsilon_1, \de_D \varepsilon_2)= \braket{\nabla \varepsilon_1, \varepsilon_2} + \braket{\nabla \varepsilon_2, \varepsilon_1}=  \de \braket{\varepsilon_1, \varepsilon_2}_{E^*} \ ,
\ee
which gives
\be \nonumber
\boldsymbol{\gamma}_1\big(W(\de_D \varepsilon_1, \de_D \varepsilon_2)\big)= \cD\braket{\varepsilon_1, \varepsilon_2}_{E^*} \ ,
\ee
and therefore \eqref{prop3} follows.
\end{proof}

\begin{remark}\label{rem:boldgamma2}
By the definition of the vector bundle morphism $\boldsymbol{\gamma}_2$ in \eqref{g2} and \eqref{g3}, we have
\be\nonumber
\braket{\boldsymbol{\gamma}_2(\de_D \varepsilon_1), \de_D (f \, \varepsilon_2)}= \llbracket\varepsilon_1 , f \, \varepsilon_2\rrbracket_\dorf = f\,\llbracket\varepsilon_1, \varepsilon_2\rrbracket_\dorf + \big(\rho(\varepsilon_1)\cdot f\big)\,\varepsilon_2 \ .
\ee
This is consistent with the Leibniz rule for the Lie algebroid differential $\de_D$:
\begin{align}
\braket{\boldsymbol{\gamma}_2(\de_D \varepsilon_1), \de_D( f \, \varepsilon_2)}=\braket{\boldsymbol{\gamma}_2(\de_D \varepsilon_1),f\, \de_D \varepsilon_2 + \varepsilon_2 \otimes \de f} = f\,\llbracket\varepsilon_1, \varepsilon_2\rrbracket_\dorf + \big(\rho(\varepsilon_1)\cdot f\big)\,\varepsilon_2 \ . \nonumber
\end{align}

From Proposition \ref{deg3Poisson} it also follows that the structure maps of the metric algebroid can be written in terms of $\boldsymbol{\gamma}_2$ using the Poisson bracket and the Lie algebroid differential as 
\be\nonumber
\rho(\varepsilon)\cdot f = -\{\boldsymbol{\gamma}_2 (\de_D \varepsilon), f\} \qquad \mbox{and} \qquad \llbracket\varepsilon_1, \varepsilon_2\rrbracket_\dorf = -\{\boldsymbol{\gamma}_2 (\de_D \varepsilon_1), \varepsilon_2 \} \ .
\ee
\end{remark}

\begin{remark}\label{rem:Leibhom}
Introduce the map
\be \nonumber
\mathsf{Leib}_\dorf \colon \mathsf{\Gamma}(E^*)\times \mathsf{\Gamma}(E^*)\times \mathsf{\Gamma}(E^*)\longrightarrow \mathsf{\Gamma}(E^*)
\ee
which measures the failure of the Leibniz identity \eqref{eq:Leibnizid} for the D-bracket $\llbracket\,\cdot\,,\,\cdot\,\rrbracket_\dorf$ (and coincides with the Jacobiator for a skew-symmetric bracket); it is defined by
\begin{align}\label{eq:LeibC}
{\sf Leib}_\dorf(\varepsilon_1,\varepsilon_2,\varepsilon_3) := \llbracket \varepsilon_1,\llbracket \varepsilon_2,\varepsilon_3\rrbracket_\dorf\rrbracket_\dorf - \llbracket \llbracket \varepsilon_1,\varepsilon_2\rrbracket_\dorf,\varepsilon_3\rrbracket_\dorf - \llbracket \varepsilon_2,\llbracket \varepsilon_1,\varepsilon_3\rrbracket_\dorf\rrbracket_\dorf \ ,
\end{align}
for $\varepsilon_1,\varepsilon_2,\varepsilon_3\in\mathsf{\Gamma}(E^*)$. Similarly, we introduce the map 
\begin{align*}
{\sf hom}_\rho:\mathsf{\Gamma}(E^*)\times\mathsf{\Gamma}(E^*)\longrightarrow\mathsf{\Gamma}(T\cQ)
\end{align*}
which measures the failure of the anchor map $\rho:E^*\to T\cQ$ from being a bracket morphism to the Lie bracket of vector fields; it is defined by
\begin{align}\label{eq:homrho}
{\sf hom}_\rho(\varepsilon_1,\varepsilon_2) := \rho\big(\llbracket \varepsilon_1,\varepsilon_2\rrbracket_\dorf\big) - \big[\rho(\varepsilon_1),\rho(\varepsilon_2)\big]_{T\cQ} \ .
\end{align}
The map \eqref{eq:LeibC} is given in terms of third order higher derived brackets generated by $\{\gamma,\gamma\}$ as
\be\nonumber
{\mathsf{Leib}_\dorf}(\varepsilon_1, \varepsilon_2, \varepsilon_3)=- \tfrac12\, \{\{\{\{ \gamma, \gamma\}, \varepsilon_1\}, \varepsilon_2 \}, \varepsilon_3\} 
\ee
on the corresponding symplectic $2$-algebroid $(\cM,\{\,\cdot\,,\,\cdot\,\},\gamma)$, while \eqref{eq:homrho} is given by 
\be\nonumber
{\sf hom}_\rho(\varepsilon_1,\varepsilon_2)\cdot f = \tfrac12\, \{\{\{\{ \gamma, \gamma\}, f \}, \varepsilon_1 \}, \varepsilon_2\} \ ,
 \ee
for all $f \in C^\infty(\cQ).$

As noted in \cite{Bruce2019} (see also~\cite{Carow-Watamura:2020xij}), the maps \eqref{eq:LeibC} and \eqref{eq:homrho} are related by
\be\nonumber
\mathsf{Leib}_\dorf(\varepsilon_1, \varepsilon_2, f\, \varepsilon_3) - f\, \mathsf{Leib}_\dorf(\varepsilon_1, \varepsilon_2, \varepsilon_3)= - \bigl( {\sf hom}_\rho(\varepsilon_1,\varepsilon_2)\cdot f \bigr)\, \varepsilon_3 \ ,
\ee
for all $f \in C^\infty(\cQ).$ In other words, the lack of tensoriality of $\mathsf{Leib}_\dorf$ in its third entry measures the failure of the anchor map from being a bracket homomorphism. Similarly
\begin{align*}
{\sf hom}_\rho(f\,\varepsilon_1,\varepsilon_2) - f \, {\sf hom}_\rho(\varepsilon_1,\varepsilon_2) = \langle\varepsilon_1,\varepsilon_2\rangle_{E^*} \, \rho(\cD f) \ ,
\end{align*}
so the lack of tensoriality of ${\sf hom}_\rho$ in its first entry measures the violation of the condition $\rho\circ\cD=0$, or equivalently of \eqref{eq:rho0cD}.
\end{remark}

Remark~\ref{rem:Leibhom} and Definition~\ref{def:preCourant} motivate the following notion~\cite{Bruce2019}.

\begin{definition}\label{def:almostLie2}
A symplectic $2$-algebroid $(\cM,\{\,\cdot\,,\,\cdot\,\},\gamma)$ is a \emph{symplectic almost Lie 2-algebroid} if
\be\nonumber
 \{\{ \gamma, \gamma\}, f \}=0 \ ,
 \ee
 for all $f \in C^\infty(\cQ),$ or in other words if the square $Q^2$ of the corresponding Hamiltonian vector field $Q= \{\gamma, \,\cdot\,\}$ preserves the sheaf of functions of degree 0.
\end{definition}

\begin{corollary}\label{cor:preCourant}
There is a one-to-one correspondence between symplectic almost Lie $2$-algebroids and pre-Courant algebroids.
\end{corollary}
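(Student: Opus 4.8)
The plan is not to reprove a bijection from scratch but to restrict the one already furnished by Theorem~\ref{thm:1-1metric}. That correspondence sends a symplectic $2$-algebroid $(\cM,\{\,\cdot\,,\,\cdot\,\},\gamma)$ to a metric algebroid $(E^*,\llbracket\,\cdot\,,\,\cdot\,\rrbracket_\dorf,\braket{\,\cdot\,,\,\cdot\,}_{E^*},\rho)$ and back, so it suffices to show that under it the almost Lie condition of Definition~\ref{def:almostLie2} matches the pre-Courant condition of Definition~\ref{def:preCourant}. By construction the latter is precisely the vanishing of the map ${\sf hom}_\rho$ from Remark~\ref{rem:Leibhom}, which measures the failure of $\rho$ to be a bracket morphism. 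Hence the whole corollary reduces to the equivalence
\[
\{\{\gamma,\gamma\},f\}=0 \ \text{ for all } f\in C^\infty(\cQ) \quad\Longleftrightarrow\quad {\sf hom}_\rho\equiv 0 \ .
\]

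The tool for this equivalence is the identity recorded in Remark~\ref{rem:Leibhom},
\[
{\sf hom}_\rho(\varepsilon_1,\varepsilon_2)\cdot f = \tfrac12\,\{\{\{\{\gamma,\gamma\},f\},\varepsilon_1\},\varepsilon_2\} \ ,
\]
for all $f\in C^\infty(\cQ)$ and $\varepsilon_1,\varepsilon_2\in\mathsf{\Gamma}(E^*)$. The forward implication is immediate: if $\{\{\gamma,\gamma\},f\}=0$ for every $f$, the right-hand side vanishes identically, so ${\sf hom}_\rho=0$ and the associated metric algebroid is a pre-Courant algebroid. For the converse I would peel off the nested brackets degree by degree. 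Assuming ${\sf hom}_\rho\equiv0$, fix $f$ and set $g_f:=\{\{\gamma,\gamma\},f\}$, which has degree~$2$ and so is a section of $\bar F^*$. Since $\{g_f,\varepsilon_1\}$ has degree~$1$ and the Poisson bracket of two degree~$1$ functions computes the fibre pairing $\braket{\,\cdot\,,\,\cdot\,}_{E^*}$, the identity reads $\braket{\{g_f,\varepsilon_1\},\varepsilon_2}_{E^*}=0$ for all $\varepsilon_2$; non-degeneracy of the pairing then gives $\{g_f,\varepsilon_1\}=0$ for all $\varepsilon_1$. Reading $\{g_f,\,\cdot\,\}$ as the Lie algebroid action $\Psi(g_f)$ of Theorem~\ref{algebroid2Poisson}, this means $\Psi(g_f)$ annihilates every section of $E^*$ and is therefore the zero element of $\IA(E^*,\braket{\,\cdot\,,\,\cdot\,}_{E^*})$, its symbol being forced to vanish by this action. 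Because $\Psi\colon\bar F^*\to\IA(E^*,\braket{\,\cdot\,,\,\cdot\,}_{E^*})$ is an isomorphism by Proposition~\ref{atiyahsymplectic}, we conclude $g_f=0$, and as $f$ is arbitrary the almost Lie condition follows.

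The only delicate point, and the step to treat with care, is this converse inversion: one must verify that stripping off the two outermost Poisson brackets against $\varepsilon_1$ and $\varepsilon_2$ loses no information. This is exactly what the non-degeneracy of the fibrewise metric $\braket{\,\cdot\,,\,\cdot\,}_{E^*}$ and the isomorphism $\Psi$ guarantee, the key being that $g_f$ lives in degree~$2$ and is thus faithfully detected by its pairings against degree~$1$ functions. Combining the two implications with the bijection of Theorem~\ref{thm:1-1metric} then yields the desired one-to-one correspondence between symplectic almost Lie $2$-algebroids and pre-Courant algebroids.
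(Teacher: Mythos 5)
Your proof is correct and follows exactly the route the paper intends: the corollary is stated as an immediate consequence of Theorem~\ref{thm:1-1metric} together with the derived-bracket identity for ${\sf hom}_\rho$ in Remark~\ref{rem:Leibhom}, which is precisely the equivalence you establish. The paper leaves the converse implication implicit, and your degree-by-degree peeling argument (using non-degeneracy of $\braket{\,\cdot\,,\,\cdot\,}_{E^*}$ and the injectivity of $\Psi$ from Proposition~\ref{atiyahsymplectic}, noting that the vanishing must hold for all sections $\varepsilon$ so that the symbol is also detected) correctly supplies the missing details.
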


\section{Para-Hermitian Geometry}
\label{sec:parahermgeom}
The natural home for metric algebroids, and in particular split exact pre-Courant algebroids, is provided by para-Hermitian geometry. This provides a precise mathematical framework for a global notion of `doubled geometry' in string theory, as originally suggested by Vaisman~\cite{Vaisman2012,Vaisman2013}, and further developed by~\cite{Freidel2017,Svoboda2018,Freidel2019,SzMar,Mori2019,Marotta:2019eqc,Mori:2020yih}. In this setting, double field theory is formulated on an almost para-Hermitian manifold, as we shall discuss in Section~\ref{sec:applications}. 

\medskip

\subsection{Para-Hermitian Vector Bundles} \label{sec:paravector} ~\\[5pt]
\label{subsec:PHvectorbundles}
We start with an overview of the main ideas and their relation to some of the concepts introduced in Sections~\ref{sec:AKSZ} and~\ref{sec:metricalg}.

\begin{definition} \label{parahermvector}
Let $E\rightarrow \cQ$ be a vector bundle of even rank $2d$ over a manifold $\cQ$. A
\emph{para-complex structure} on $E$ is a vector bundle automorphism
$K \in {\sf Aut}(E)$ covering the identity such that
$K^2=\unit_E$, $K\neq\pm\,\mathds{1}_E$, and the $\pm\,1$-eigenbundles of $K$ have equal rank~$d$. The pair $(E,K)$ is a \emph{para-complex vector bundle}. 

If in addition $E$ admits a fibrewise metric $\braket{\,\cdot\,,\,\cdot\,}_E \in \mathsf{\Gamma}(\midodot^2 E^*)$ of split signature $(d,d)$ such that 
$$
\braket{K(e_1),K(e_2)}_E=-\braket{e_1,e_2}_E \ , 
$$
for all $e_1, e_2 \in \mathsf{\Gamma}(E)$, then the pair $(K, \braket{\,\cdot\,,\,\cdot\,}_E)$ is
a \emph{para-Hermitian structure} on $E$ and the triple $(E,K,\braket{\,\cdot\,,\,\cdot\,}_E)$ is a \emph{para-Hermitian vector bundle}. 

A \emph{para-Hermitian bundle morphism} from a para-Hermitian vector bundle $(E,K,\braket{\,\cdot\,,\,\cdot\,}_E)$ to a para-Hermitian vector bundle $(E',K',\braket{\,\cdot\,,\,\cdot\,}_{E'})$ over the same manifold is an isometry $\psi:(E,\braket{\,\cdot\,,\,\cdot\,}_E)\to (E',\braket{\,\cdot\,,\,\cdot\,}_{E'}) $ covering the identity which intertwines the para-complex structures: $\psi\circ K = K' \circ\psi$.
\end{definition}

The $\pm\,1$-eigenbundles $L_\pm$ of $K$ split the vector bundle $E$ into a Whitney sum
$$
E= L_+ \oplus L_- \ ,
$$ 
such that $L_\pm$ are maximally
isotropic with respect to the fibrewise metric $\braket{\,\cdot\,,\,\cdot\,}_E.$

\begin{remark}\label{rem:maxisosplit}
Let $E\to \cQ$ be a vector bundle of rank $2d$ endowed with a split
signature metric $\braket{\,\cdot\,,\,\cdot\,}_E$, and $L$ a maximally isotropic subbundle
of $E.$ Then the short exact sequence
\be
0\longrightarrow L\longrightarrow E\longrightarrow
E/L\longrightarrow 0 
\label{eq:EL-}\ee
always admits a maximally isotropic splitting. This determines a para-Hermitian structure on $E.$ All maximally isotropic splittings of the short exact sequence \eqref{eq:EL-} give isomorphic para-Hermitian structures on $E$.
\end{remark}

The compatibility condition between $\braket{\,\cdot\,,\,\cdot\,}_E$ and $K$ in Definition~\ref{parahermvector} is equivalent to 
$$
\braket{K(e_1), e_2}_E = -\braket{e_1, K(e_2)}_E \ , 
$$ 
for all $e_1,e_2 \in \mathsf{\Gamma}(E)$. A para-Hermitian vector
bundle $E$ is therefore endowed with a non-degenerate \emph{fundamental $2$-form} $\omega\in \mathsf{\Gamma}(\midwedge^2 E^*)$ given by
$$
\omega(e_1,e_2)= \braket{K(e_1), e_2}_E \ , 
$$ 
for all $e_1, e_2 \in \mathsf{\Gamma}(E)$. The eigenbundles $L_\pm \subset E$ are also maximally isotropic with respect to $\omega.$

\begin{example} \label{ex:gentanbun}
Let $E=A\oplus A^*$ be the Whitney sum of a vector bundle $A$ and its dual $A^*$ over a manifold $\cQ$. 
It is naturally endowed with the fibrewise split signature metric
$$
\braket{a+a^{\textrm{\tiny$\vee$}},b+b^\dual}_{A\oplus A^*} = \langle\,a,b^\dual\rangle + \langle b,a^\dual\rangle \ , 
$$
for all $a,b\in \mathsf{\Gamma}(A)$ and $a^\dual,b^\dual\in\mathsf{\Gamma}(A^*)$, where $\langle\,\cdot\,,\,\cdot\,\rangle$ is the canonical dual pairing between sections of $A$ and sections of $A^*$. The natural para-complex structure $K$ on $E$ is given by
$$
K(a+a^\dual)= a-a^\dual \ ,
$$ 
so that $A$ and
$A^*$ are the respective $\pm\,1$-eigenbundles. Then $\braket{\,\cdot\,,\,\cdot\,}_{A\oplus A^*}$ and
$K$ are compatible in the sense of Definition~\ref{parahermvector},
and the subbundles $A$ and $A^*$ are maximally isotropic with respect to $\braket{\,\cdot\,,\,\cdot\,}_{A\oplus A^*}.$ Thus we obtain a fundamental $2$-form
$$
\omega(a+a^\dual,b+b^\dual)= \langle\,a,b^\dual\rangle - \langle b,a^\dual\rangle \ , 
$$ 
which is the additional natural non-degenerate pairing that can be defined in this case. 

This construction applies to any split metric algebroid $(E,\llbracket\,\cdot\,,\,\cdot\,\rrbracket_\dorf,\braket{\,\cdot\,,\,\cdot\,}_E,\rho)$. A special instance is the generalized tangent bundle $E=\IT \cQ$ of Example~\ref{ex:standardCourant}, for which $A=T\cQ$.
\end{example}

\begin{example}\label{ex:exactmetricparaherm}
Let $(E,\llbracket\,\cdot\,,\,\cdot\,\rrbracket_\dorf,\braket{\,\cdot\,,\,\cdot\,}_E,\rho)$ be an exact pre-Courant algebroid on $\cQ$ specified by the short exact sequence \eqref{eq:metricexact} from Example~\ref{ex:exactmetric}, 
with fibrewise metric $\braket{\,\cdot\,,\,\cdot\,}_E$ and anchor map
$\rho: E \rightarrow T\cQ.$ 
From the definition of $\rho^*$ and exactness of the 
sequence \eqref{eq:metricexact}, it follows that the subbundle
$\Im(\rho^*) \subset E,$ which is isomorphic to $T^*\cQ,$ is maximally isotropic with respect to $\braket{\,\cdot\,,\,\cdot\,}_E.$
A choice of
isotropic splitting $s:T\cQ\to E$ of \eqref{eq:metricexact} gives a Whitney sum decomposition
$$
E= \Im(s) \oplus \Im(\rho^*) \ ,
$$ 
with an associated para-complex structure defined by
$$
K_s\big(s(X)+ \rho^*(\alpha)\big)=s(X)- \rho^*(\alpha) \ , 
$$ 
for all $ X \in \mathsf{\Gamma}(T\cQ)$ and $\alpha \in
\mathsf{\Gamma}(T^*\cQ).$  The para-complex structure $K_s$ is
compatible with the metric $\braket{\,\cdot\,,\,\cdot\,}_E,$ and in this way $E$ is endowed with a
para-Hermitian structure. This para-Hermitian structure is isomorphic to the para-Hermitian structure on the
generalized tangent bundle $\mathbb{T}\cQ$ from
Example~\ref{ex:gentanbun} with $A=T\cQ$. 
\end{example}

\medskip

\subsection{Generalized Metrics and Born Geometry} \label{genmesubs} ~\\[5pt]
We shall now introduce a notion of generalized metric associated to a para-Hermitian structure on a vector bundle.

\begin{definition} 
Let $E\rightarrow \cQ$ be a vector bundle endowed with a fibrewise pseudo-Euclidean
metric $\braket{\,\cdot\,,\,\cdot\,}_E$. A \emph{generalized} (\emph{Euclidean}) \emph{metric} on $E$ is an automorphism $I \in {\sf Aut}(E)$ such that
$I^2=\unit_E$, $I\neq\pm\,\mathds{1}_E$, and
\be \nonumber
\cH (e_1,e_2):=\braket{I(e_1),e_2}_E \ ,
\ee  
for all $e_1,e_2 \in \mathsf{\Gamma}(E)$, defines a fibrewise Euclidean metric $\cH$ on $E$. 
\end{definition}

A generalized metric determines a decomposition 
$$
E=V_+\oplus V_-
$$
into the $\pm\,1$-eigenbundles of $I$, such that the subbundle $V_+\subset E$ is maximally positive-definite with respect to the metric $\braket{\,\cdot\,,\,\cdot\,}_E$ and $V_-$ is the orthogonal complement of $V_+$ with
respect to $\braket{\,\cdot\,,\,\cdot\,}_E.$ Any generalized metric induces an isomorphism $\cH^\flat \in {\sf Hom}(E, E^*)$ which satisfies the compatibility condition
$$ 
\braket{\cH^\flat(e_1), \cH^\flat(e_2)}_E^{-1}=\braket{e_1,e_2}_E
$$
with the fibrewise metric $\braket{\,\cdot\,,\,\cdot\,}_E$.

This definition takes the following concrete form, proven in~\cite{Marotta:2019eqc}, when the metric $\braket{\,\cdot\,,\,\cdot\,}_E$ is part of a para-Hermitian structure.

\begin{proposition} \label{gbparaherm}
Let $(E,K, \braket{\,\cdot\,,\,\cdot\,}_E)$ be a para-Hermitian vector bundle over a manifold $\cQ$. A generalized metric $I \in {\sf Aut}(E)$ defines a unique pair $(g_+, b_+)$,
where $g_+ \in \mathsf{\Gamma}(\midodot^2 L_+^*)$ is a fibrewise Euclidean metric on the
$+1$-eigenbundle $L_+\subset E$ and $b_+\in \mathsf{\Gamma}(\midwedge^2 L_+^*)$ is a $2$-form on $L_+$.
Conversely, any such pair $(g_+,b_+)$ uniquely defines a generalized metric on $(E,K,\braket{\,\cdot\,,\,\cdot\,}_E)$.
\end{proposition}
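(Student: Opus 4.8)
The plan is to exploit the splitting $E=L_+\oplus L_-$ into the $\pm\,1$-eigenbundles of $K$, and to recall that $\braket{\,\cdot\,,\,\cdot\,}_E$ restricts to a perfect fibrewise pairing between the maximally isotropic subbundles $L_+$ and $L_-$, yielding a canonical isomorphism $L_-\simeq L_+^*$. Under this identification a bundle morphism $\Phi\colon L_+\to L_-$ is the same datum as a fibrewise bilinear form $E_+$ on $L_+$, namely $E_+(e_+,f_+)=\braket{e_+,\Phi(f_+)}_E$, and the decomposition $L_+^*\otimes L_+^*=\midodot^2L_+^*\oplus\midwedge^2L_+^*$ splits $E_+$ into a symmetric part $g_+$ and an antisymmetric part $b_+$. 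The whole statement then reduces to the fibrewise linear-algebraic fact that maximally positive-definite subbundles $V_+\subset E$ are precisely graphs of such morphisms $\Phi$ whose symmetric part $g_+$ is positive-definite.

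First I would observe that a generalized metric $I$ is equivalent to the choice of its $+1$-eigenbundle $V_+$, since $V_-=V_+^\perp$ and $I$ acts as $\pm\,\unit_E$ on $V_\pm$; conversely any maximally positive-definite subbundle $V_+$ of rank $d$ has negative-definite orthogonal complement $V_-$ (by the split signature $(d,d)$), and hence determines such an $I$ with $\cH(e,e)=\braket{I e,e}_E>0$. Next I would show $V_+\cap L_-=0$: a nonzero vector in this intersection would be isotropic as an element of $L_-$ yet of positive norm as an element of $V_+$, a contradiction. Comparing ranks then gives $E=V_+\oplus L_-$, so $V_+$ is the graph of a unique $\Phi\colon L_+\to L_-$, that is $V_+=\{e_++\Phi(e_+):e_+\in\mathsf{\Gamma}(L_+)\}$.

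The heart of the computation is to restrict $\braket{\,\cdot\,,\,\cdot\,}_E$ to $V_+$. Using isotropy of $L_\pm$, for $v=e_++\Phi(e_+)$ and $w=f_++\Phi(f_+)$ one obtains $\braket{v,w}_E=E_+(e_+,f_+)+E_+(f_+,e_+)=2\,g_+(e_+,f_+)$, so positive-definiteness of $\cH$ is equivalent to positive-definiteness of the symmetric part $g_+$, while the antisymmetric part $b_+$ is unconstrained. This produces the assignment $I\mapsto(g_+,b_+)$. For the converse, given a fibrewise Euclidean metric $g_+$ and a $2$-form $b_+$ on $L_+$, I set $E_+=g_++b_+$, use $L_-\simeq L_+^*$ to read off $\Phi$, and take $V_+$ to be its graph; the same computation shows $\braket{\,\cdot\,,\,\cdot\,}_E\big|_{V_+}=2\,g_+$ is positive-definite, so $V_+$ is maximally positive-definite and defines a generalized metric $I$. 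The two constructions are manifestly inverse to one another, since in each direction $V_+$ is the graph of the morphism encoded by $E_+=g_++b_+$, and uniqueness of the graph decomposition yields the claimed bijection.

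The step I expect to require the most care is the bookkeeping of the identification $L_-\simeq L_+^*$ and the verification that the restricted metric is exactly $2\,g_+$ with no hidden constraint coupling $g_+$ and $b_+$; once this is pinned down, positive-definiteness of $\cH$ transparently singles out the Euclidean metrics $g_+$ while leaving $b_+$ free, and the remaining axioms $I^2=\unit_E$ and $I\neq\pm\,\unit_E$ follow at once from $\mathrm{rk}\,V_+=d$ together with $0<d<2d$.
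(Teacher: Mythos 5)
Your proof is correct and takes essentially the same route as the paper: the paper quotes this proposition from its reference and records precisely the facts you establish, namely $L_\pm\cap V_+=0$ so that $V_+$ is the graph of a morphism $L_+\to L_-$, the identification $L_-\simeq L_+^*$ via the split-signature pairing, and the splitting of the resulting bilinear form on $L_+$ into a (necessarily positive-definite) symmetric part $g_+$ and an unconstrained skew part $b_+$. The only cosmetic difference is the normalization $\braket{\,\cdot\,,\,\cdot\,}_E\big|_{V_+}=2\,g_+$, which does not affect the claimed bijection.
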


Since the eigenbundles $L_\pm$ are both maximally isotropic with respect to $\braket{\,\cdot\,,\,\cdot\,}_E$, and $V_+$ is maximally positive-definite, it follows that $L_\pm\cap V_+=0$ and $L_\pm\cap V_-=0$. 
The pair $(g_+,b_+)$ induces a fibrewise metric $g_-$ on $L_-$ by
\begin{align*}
g_-(e_-,e_-') = g_+^{-1}\big(\braket{e_-,\,\cdot\,}_E,\braket{e_-',\,\cdot\,}_E\big) 
\end{align*}
and a skew-symmetric vector bundle map $B_+\in{\sf Hom}(L_+,L_-)$ by
\begin{align*}
\braket{B_+(e_+),e_+'}_E = b_+(e_+,e_+') = -\braket{e_+,B_+(e_+')}_E \ ,
\end{align*}
for all $e_\pm,e_\pm'\in\mathsf{\Gamma}(L_\pm)$. In the splitting $E=L_+\oplus L_-$ associated with the para-complex structure $K$, the Euclidean metric $\cH$ then assumes the matrix form
\begin{align*}
\cH = \bigg( \begin{matrix}
g_+ + B_+^{\rm t} \, g_-\,B_+ & -B_+^{\rm t} \, g_- \\
-g_-\,B_+ & g_-
\end{matrix} \bigg) \ ,
\end{align*}
where $B_+^{\rm t}:L_-^*\to L_+^*$ is the transpose map.

\begin{example}\label{ex:genmetricgengeom}
Let $E=\mathbb{T}\cQ=T\cQ\oplus
T^*\cQ$ be the generalized tangent bundle over a manifold $\cQ$. A generalized metric $I \in {\sf Aut}(\IT M)$ is equivalent to
a Riemannian metric $g$ and a $2$-form $b$ on $\cQ$.  This is a special case of the notion of generalized metric in generalized geometry~\cite{Jurco2016, gualtieri:tesi}, and $\cH$ assumes the standard form
\begin{align*}
\cH = \bigg(\begin{matrix}
g + b\,g^{-1}\,b &-b\,g^{-1} \\
-g^{-1} \, b & g^{-1}
\end{matrix}\bigg)
\end{align*}
with respect to the splitting $\mathbb{T}\cQ=T\cQ\oplus
T^*\cQ$.
\end{example}

\begin{definition} \label{compagenmetr}
A \emph{compatible generalized metric} on a para-Hermitian bundle $(E,K,\braket{\,\cdot\,,\,\cdot\,}_E)$ is a generalized metric $\mathcal{H}_0$ on $E$ which is compatible with the fundamental $2$-form $\omega$:
$$ 
\omega^{-1}\big(\cH_0^\flat(e_1), \cH_0^\flat(e_2)\big)=-\omega(e_1,e_2) \ ,
$$
for all $e_1, e_2  \in \mathsf{\Gamma}(E)$.
The triple $(K,\braket{\,\cdot\,,\,\cdot\,}_E,\mathcal{H}_0)$ is a \emph{Born geometry} on
$E$ and the quadruple $(E,K,\braket{\,\cdot\,,\,\cdot\,}_E,\cH_0)$ is a \emph{Born vector bundle}.
\end{definition}

A Born geometry is a particular type of generalized metric which can be concretely characterized as follows~\cite{Marotta:2019eqc}.

\begin{proposition}\label{prop:cHg+}
A Born geometry on a para-Hermitian vector bundle $(E,K,\braket{\,\cdot\,,\,\cdot\,}_E)$ is a generalized metric $\cH_0$ specified solely by a fibrewise metric $g_+$ on the eigenbundle $L_+.$ 
\end{proposition}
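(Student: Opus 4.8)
The plan is to reduce the statement to an equivalence about the pair $(g_+,b_+)$ attached to a generalized metric by Proposition~\ref{gbparaherm}. Since a Born geometry is, by Definition~\ref{compagenmetr}, a generalized metric $\cH_0$ which is in addition compatible with the fundamental $2$-form $\omega$, and since Proposition~\ref{gbparaherm} already identifies a generalized metric with a pair $(g_+,b_+)$ consisting of a fibrewise metric $g_+$ on $L_+$ and a $2$-form $b_+$ on $L_+$, the entire claim follows once I show that the $\omega$-compatibility is equivalent to $b_+=0$. It then remains only to note that a generalized metric with $b_+=0$ is specified by $g_+$ alone, and conversely that every fibrewise metric $g_+$ yields, via $(g_+,0)$, a generalized metric which is $\omega$-compatible.

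To identify the $\omega$-compatibility, I would first package the data as endomorphisms. Writing $\flat$ and $\sharp$ for the musical maps of the split metric $\langle\,\cdot\,,\,\cdot\,\rangle_E$, set $I:=\sharp\circ\cH_0^\flat\in{\sf Aut}(E)$, so that $\cH_0(e_1,e_2)=\langle I(e_1),e_2\rangle_E$; the axiom $I^2=\unit_E$ together with the symmetry of $\cH_0$ makes $I$ self-adjoint, hence an isometry, for $\langle\,\cdot\,,\,\cdot\,\rangle_E$. The para-complex structure is the endomorphism $K$ with $\omega(e_1,e_2)=\langle K(e_1),e_2\rangle_E$, which is anti-self-adjoint and satisfies $K^2=\unit_E$. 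Using $\cH_0^\flat=\flat\circ I$ and $\omega^\flat=\flat\circ K$, together with $K^2=\unit_E$ and the fact that $I$ is an isometry, the condition $\omega^{-1}\big(\cH_0^\flat(e_1),\cH_0^\flat(e_2)\big)=-\omega(e_1,e_2)$ of Definition~\ref{compagenmetr} collapses, by non-degeneracy of $\langle\,\cdot\,,\,\cdot\,\rangle_E$, to the single algebraic relation $I\,K=-K\,I$. Equivalently, $I$ interchanges the two eigenbundles of $K$, that is $I(L_+)=L_-$ and $I(L_-)=L_+$. I expect this translation, which requires careful bookkeeping of the musical isomorphisms and of signs in the inverse form $\omega^{-1}$, to be the main obstacle; the remainder is fibrewise linear algebra.

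Finally I would read off $b_+=0$ from the block form of $I$. In the splitting $E=L_+\oplus L_-$ one has $K=\mathrm{diag}(\unit_{L_+},-\unit_{L_-})$, while composing the explicit matrix of $\cH_0$ recorded after Proposition~\ref{gbparaherm} with $\sharp$ expresses $I$ through $g_+$, the induced metric $g_-$ on $L_-$, and the skew map $B_+\in{\sf Hom}(L_+,L_-)$ associated to $b_+$; its diagonal blocks are then $-g_-\circ B_+$ and $-B_+^{\rm t}\circ g_-$. Hence the anticommutator $I\,K+K\,I$ is block-diagonal with entries proportional to $g_-\circ B_+$ and $B_+^{\rm t}\circ g_-$, and since $g_-$ is non-degenerate it vanishes if and only if $B_+=0$, that is $b_+=0$. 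This proves that a Born geometry forces $b_+=0$, so $\cH_0$ is determined by $g_+$; conversely, the pair $(g_+,0)$ produces a generalized metric whose associated $I$ is block-off-diagonal and therefore satisfies $I\,K=-K\,I$, giving a Born geometry. This establishes the claimed one-to-one correspondence between Born geometries on $(E,K,\langle\,\cdot\,,\,\cdot\,\rangle_E)$ and fibrewise metrics $g_+$ on $L_+$.
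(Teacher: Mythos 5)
The paper does not actually prove this proposition --- it is stated as a recollection from~\cite{Marotta:2019eqc} --- so there is nothing to compare line by line; judged on its own, your argument is correct and is essentially the standard one. Reducing via Proposition~\ref{gbparaherm} to showing that $\omega$-compatibility is equivalent to $b_+=0$ is the right move, and your block computation is sound: writing $I=\sharp\circ\cH_0^\flat$, the diagonal blocks of $I$ in the splitting $E=L_+\oplus L_-$ are (up to the $\eta$-identification $L_-^*\simeq L_+$) $-g_-\circ B_+$ and $-B_+^{\rm t}\circ g_-$, so $I\,K+K\,I$ is block-diagonal with these entries and vanishes iff $B_+=0$ by non-degeneracy of $g_-$. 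The one genuinely delicate point is the one you flag: Definition~\ref{compagenmetr} leaves the sign convention for $\omega^{-1}$ implicit, and depending on whether one sets $\omega^{-1}(\alpha,\beta)=\pm\,\omega\big((\omega^\flat)^{-1}\alpha,(\omega^\flat)^{-1}\beta\big)$ the compatibility condition unpacks (using $I$ self-adjoint, $K$ anti-self-adjoint, $I^2=K^2=\unit_E$) to either $I\,K\,I=-K$ or $I\,K\,I=K$. Your choice of the anticommutation reading is not just consistent with the literature but forced: the commutation reading $[I,K]_\circ=0$ would kill the off-diagonal blocks of $I$, which are the invertible maps $\eta^{-1}{}^\sharp\circ g_-^\flat$ and its transpose, so no generalized metric could ever be compatible and the proposition would be vacuous. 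It would strengthen your write-up to state this observation explicitly, since it pins down the convention without appeal to external sources; with that, both directions of the correspondence are complete.
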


In other words, the compatible Euclidean metric $\cH_0$ can be regarded
as a choice of a metric on the subbundle $L_+$ in the splitting associated with $K$, where in matrix notation it reads
\be \label{eq:diaghermmetric}
\mathcal{H}_0 = \bigg( \begin{matrix}
g_+ & 0 \\ 0 & g_-
\end{matrix} \bigg) \ .
\ee

\medskip

\subsection{$B$-Transformations}\label{sec:Btransformations} ~\\[5pt]
To classify the distinct splittings of an exact pre-Courant algebroid, as well as to relate generic generalized metrics to the compatible generalized metrics of a Born geometry, we introduce the notion of a $B$-transformation for a
para-Hermitian vector bundle $(E,K,\braket{\,\cdot\,,\,\cdot\,}_E)$. Let us fix the splitting $E=L_+ \oplus L_-$ induced by the para-complex structure $K.$ Then any section $e\in\mathsf{\Gamma}(E)$ decomposes as $e=e_++e_-$ with $e_\pm\in\mathsf{\Gamma}(L_\pm)$, and $K\in {\sf
  Aut}(E)$ can be written as $K=
\unit_{L_+}- \unit_{L_-}.$

\theoremstyle{definition}
 \begin{definition}\label{def:B+}
Let $(E, K, \braket{\,\cdot\,,\,\cdot\,}_E)$ be a para-Hermitian vector bundle on a manifold~$\cQ$. A
$B_+$-\emph{transformation} is an isometry $e^{B_+}: E \rightarrow
E$ of $\braket{\,\cdot\,,\,\cdot\,}_E$ covering the identity which is given in matrix notation by
\begin{align*} \label{btra}
e^{B_+}= 
\bigg(\begin{matrix}
\mathds{1}_{L_+} & 0 \\
B_+ & \mathds{1}_{L_-}
\end{matrix}\bigg)
: E \longrightarrow E
\end{align*}
in the chosen splitting induced by $K$, where $B_+ : L_+
\rightarrow L_-$ is a skew-symmetric map:
$$
\braket{B_+(e_1),e_2}_E=- \braket{e_1, B_+(e_2)}_E \ ,
$$ 
for all $ e_1, e_2 \in \mathsf{\Gamma}(E)$.
\end{definition}

A $B_+$-transformation induces another 
para-complex structure from the para-Hermitian vector bundle
$(E,K,\braket{\,\cdot\,,\,\cdot\,}_E)$ given by the pullback $K_{B_+}=K- 2\,B_+$, which can be cast in the form 
\be \nonumber
K_{B_+}=e^{-B_+}\circ K\circ e^{B_+}=
\begin{pmatrix}
\unit_{L_+} & 0 \\
-2\,B_+ & -\unit_{L_-}
\end{pmatrix} \ .
\ee
Then $K_{B_+}^2= \unit_E,$ since $B_+(K(e))=-K(B_+(e))$ and
$B_+(B_+(e))=0,$ for all $e\in \mathsf{\Gamma}(E),$ and $K_{B_+}$ satisfies the compatibility condition $\braket{K_{B_+}(e_1),K_{B_+}(e_2)}_E=-\braket{e_1,e_2}_E$ with $\braket{\,\cdot\,,\,\cdot\,}_E$ because of the skew-symmetry property of $B_+.$
Thus $(K_{B_+}, \braket{\,\cdot\,,\,\cdot\,}_E)$ is a para-Hermitian structure on $E$. Only the $-1$-eigenbundle of the original splitting $E=L_+\oplus L_-$ is preserved by a
$B_+$-transformation, while the $+1$-eigenbundle changes.

To understand how the fundamental $2$-form $\omega$ changes under a $B_+$-transformation, we note that the endomorphism $B_+$ defines a $2$-form $b_+\in\mathsf{\Gamma}(\midwedge^2L_+^*)$ by 
$$
b_+(e_1,e_2)= \braket{B_+(e_1),e_2}_E \ ,
$$
for all $e_1,e_2\in \mathsf{\Gamma}(E).$ The fundamental 2-form $\omega_{B_+}$ of $(K_{B_+}, \braket{\,\cdot\,,\,\cdot\,}_E)$ is obtained by computing $\omega_{B_+}(e_1,e_2)= \braket{K_{B_+}(e_1),e_2}_E,$ which gives
\begin{align*}
\omega_{B_+}= \omega - 2\,b_+ \ .
\end{align*}
Thus a $B_+$-transformation does not generally preserve the closure or non-closure of the fundamental $2$-form.

\begin{remark}\label{rem:B-transf}
A completely analogous discussion applies to a \emph{$B_-$-transformation}, defined by a skew-symmetric map
$B_-: L_- \rightarrow L_+$ similarly to Definition~\ref{def:B+}, by interchanging the roles of the eigenbundles $L_+$ and $L_-$.
\end{remark}

\begin{remark}\label{rem:B+}
If $(E,\braket{\,\cdot\,,\,\cdot\,}_E,L)$ is an even rank vector bundle
endowed with a split signature metric and a choice of maximally
isotropic subbundle, see Remark~\ref{rem:maxisosplit}, then the
maximally isotropic splittings of the short exact sequence
\eqref{eq:EL-} are mapped into each other via $B_+$-transformations which preserve $L$.
\end{remark}

\begin{example}\label{ex:exactmetricB}
Recall from Example~\ref{ex:exactmetricparaherm} that every splitting of an exact pre-Courant algebroid $(E,\llbracket\,\cdot\,,\,\cdot\,\rrbracket_\dorf,\braket{\,\cdot\,,\,\cdot\,}_E,\rho)$ is associated with a para-Hermitian structure on $E$. By Remark~\ref{rem:B+}, any two splittings of an exact pre-Courant algebroid are related by a $B_+$-transformation. Each distinct isotropic
splitting of \eqref{eq:metricexact} is associated with a different $3$-form
$H\in\Omega^3(\cQ)$. A $B_+$-transformation of an exact pre-Courant algebroid is generated by a $2$-form $b\in\Omega^2(\cQ)$, which preserves the D-bracket $\llbracket\,\cdot\,,\,\cdot\,\rrbracket_\dorf$ if $b$ is a closed $2$-form. When $b$ is not closed the corresponding D-bracket maps to the Dorfman bracket twisted by $H+\de b$. 
\end{example}

Let us finally discuss the $B_+$-transformation of a
compatible generalized metric of a Born geometry. A compatible
generalized metric $\cH_0$ of a para-Hermitian structure $(K,
\braket{\,\cdot\,,\,\cdot\,}_E)$ on $E$ transforms under a $B_+$-transformation to the compatible
generalized metric $\cH_{B_+}$ of the pullback
para-Hermitian structure $(K_{B_+}, \braket{\,\cdot\,,\,\cdot\,}_E)$ on $E$ given by
\begin{align*}
\cH_{B_+}^\flat = \big(e^{-B_+}\big)^{\rm t}\circ\cH_0^\flat\circ e^{-B_+} \ .
\end{align*}
Recalling that
$\cH_0$ takes the diagonal form \eqref{eq:diaghermmetric}, we then have~\cite{Marotta:2019eqc}

\begin{proposition}
A generalized metric $I \in {\sf Aut}(E)$ on a para-Hermitian
vector bundle $(E, K, \braket{\,\cdot\,,\,\cdot\,}_E)$ corresponds to a choice of a Born geometry
$(K,\braket{\,\cdot\,,\,\cdot\,}_E, \cH_0)$ and a $B_+$-transformation.
\end{proposition}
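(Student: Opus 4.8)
The plan is to obtain this statement as a direct consequence of the structural results on para-Hermitian generalized metrics established above, reducing everything to the bijective correspondences between the relevant data. First I would invoke Proposition~\ref{gbparaherm} to encode an arbitrary generalized metric $I \in {\sf Aut}(E)$ by its unique characteristic pair $(g_+, b_+)$, consisting of a fibrewise Euclidean metric $g_+$ on the $+1$-eigenbundle $L_+$ and a $2$-form $b_+ \in \mathsf{\Gamma}(\midwedge^2 L_+^*)$. The strategy is then to show that the metric datum $g_+$ is precisely the defining datum of a Born geometry, while the $2$-form datum $b_+$ is precisely the defining datum of a $B_+$-transformation, and that these two pieces reconstruct $I$.

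Concretely, I would define the compatible generalized metric $\cH_0$ to be the one determined by $g_+$ alone, which by Proposition~\ref{prop:cHg+} is a Born geometry and assumes the block-diagonal form \eqref{eq:diaghermmetric} in the splitting $E = L_+ \oplus L_-$ induced by $K$. Independently, I would define the skew-symmetric endomorphism $B_+ : L_+ \to L_-$ associated with the $2$-form $b_+$ through $\braket{B_+(e_+), e_+'}_E = b_+(e_+, e_+')$, which generates a $B_+$-transformation $e^{B_+}$ in the sense of Definition~\ref{def:B+}. The heart of the proof is then the verification that pulling back $\cH_0$ by this $B_+$-transformation, via $\cH_{B_+}^\flat = (e^{-B_+})^{\rm t} \circ \cH_0^\flat \circ e^{-B_+}$, reproduces the flat map $\cH^\flat$ associated with the original generalized metric $I$.

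The key computation I expect to carry out is the matrix product in the splitting $E = L_+ \oplus L_-$: writing $e^{-B_+}$ in the lower-triangular block form of Definition~\ref{def:B+} (with $B_+$ replaced by $-B_+$) and $\cH_0^\flat$ in the diagonal form with blocks $g_+$ and $g_-$, the conjugation $(e^{-B_+})^{\rm t} \circ \cH_0^\flat \circ e^{-B_+}$ should yield exactly the off-diagonal block form
\begin{align*}
\cH = \bigg( \begin{matrix} g_+ + B_+^{\rm t}\,g_-\,B_+ & -B_+^{\rm t}\,g_- \\ -g_-\,B_+ & g_- \end{matrix} \bigg)
\end{align*}
already recorded for a generic generalized metric with characteristic pair $(g_+, b_+)$. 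This identifies $\cH_{B_+}$ with $\cH$, and hence $I$ with the pullback of the Born metric. The main point requiring care here is the bookkeeping of the transpose conventions for $B_+ : L_+ \to L_-$ and $B_+^{\rm t} : L_-^* \to L_+^*$, so that the block entries land in the correct dual bundles; the computation itself is purely algebraic once these identifications are fixed.

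Finally I would establish that this decomposition is unique. Since the assignment $I \mapsto (g_+, b_+)$ is a bijection by Proposition~\ref{gbparaherm}, and since $g_+ \mapsto \cH_0$ is a bijection onto Born geometries by Proposition~\ref{prop:cHg+} while $b_+ \mapsto B_+$ is a bijection onto $B_+$-transformations, the pair $(\cH_0, e^{B_+})$ is uniquely determined by $I$, and conversely any such pair reconstructs a unique generalized metric. This yields the claimed one-to-one correspondence. I do not anticipate any substantive obstacle beyond the careful matrix bookkeeping, since all the conceptual content is already packaged in the two cited propositions and the transformation rule for $\cH_0$ under a $B_+$-transformation.
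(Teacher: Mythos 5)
Your proposal is correct and follows exactly the route the paper intends: the paper states this proposition immediately after recalling the $B_+$-transformation rule $\cH_{B_+}^\flat = (e^{-B_+})^{\rm t}\circ\cH_0^\flat\circ e^{-B_+}$ and the diagonal form \eqref{eq:diaghermmetric}, leaving the reader to match the conjugated matrix against the off-diagonal block form of a general $\cH$ recorded after Proposition~\ref{gbparaherm}, which is precisely the computation you carry out. The block-matrix product you describe does reproduce that form, so no gap remains.
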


\medskip

\subsection{Almost Para-Hermitian Manifolds} ~\\[5pt]
The special case where $E=TM$ is the tangent bundle of a manifold $M$ in Definition~\ref{parahermvector} is particularly important because it
allows one to formulate conditions for the integrability of the
eigenbundles $L_\pm$, and hence on the possibility that $M$ is a foliated
manifold. 

\begin{definition}\label{def:parahermmfd}
An \emph{almost para-Hermitian manifold} $(M,K,\eta)$ is a manifold $M$ of even dimension $2d$ with a para-Hermitian structure $(K,\eta)$ on its tangent bundle $TM$.
\end{definition}

The para-complex structure $K\in {\sf Aut}(TM)$ is equivalent to the splitting of the tangent bundle
\begin{align*}
TM = L_+\oplus L_-
\end{align*}
of the manifold $M$ into the Whitney sum of two distributions $L_\pm$ of the same constant rank~$d$, identified as the $\pm\,1$-eigenbundles of $K$. If the eigenbundles $L_\pm$ are both integrable, that is, $[\mathsf{\Gamma}(L_\pm),\mathsf{\Gamma}(L_\pm)]_{TM}\subseteq\mathsf{\Gamma}(L_\pm)$, then $(M,K,\eta)$ is a \emph{para-Hermitian manifold}; in this instance, by Frobenius' Theorem, $M$ admits two regular foliations $\cF_{\pm}$, such that $L_\pm=T\cF_\pm$. However, the integrability conditions for $L_+$ and $L_-$ are independent of each other~\cite{Freidel2017,Svoboda:2020msh}: one of them may be integrable while the other may not. This is the situation that most commonly occurs in examples, and in this case $M$ admits only one foliation.

The fundamental $2$-form $\omega$ of an almost para-Hermitian manifold $(M,K,\eta)$ defines an almost symplectic structure on $M$; the $3$-form $\de\omega$ describes the `generalized fluxes' of double field theory on $M$~\cite{Marotta:2019eqc}. If $\omega$ is symplectic, that is, $\de\omega=0$, then $(M,K,\eta)$ is an \emph{almost para-K\"ahler manifold}. In this case, since the subbundles $L_\pm$ are maximally isotropic with respect to $\omega$, they are Lagrangian subbundles of the tangent bundle $TM$; if one of them is integrable, then $M$ admits a Lagrangian foliation with respect to the symplectic structure~$\omega$.

\begin{example}\label{ex:cotangent}
Let $M$ be the total space of the cotangent bundle $\pi:T^*\cQ\to\cQ$ of a  $d$-dimensional manifold $\cQ$, endowed with the canonical symplectic $2$-form $\omega_0$. Then almost para-K\"ahler structures on $M$ correspond to isotropic splittings of the short exact sequence of vector bundles
\begin{align*}
0\longrightarrow \mathsf{Ker}(\pi_*)\xrightarrow{ \ i \ } TM \xrightarrow{\pi_*}\pi^*(T\cQ)\longrightarrow0 
\end{align*}
with respect to $\omega_0$, where $i$ is the inclusion of the vertical vector subbundle $\mathsf{Ker}(\pi_*)$ into $TM$. Such a splitting $s:\pi^*(T\cQ)\to TM$ defines an Ehresmann connection on $M$ with
\begin{align*}
TM = \Im(s) \oplus \mathsf{Ker}(\pi_*)
\end{align*}
the Whitney sum decomposition of the tangent bundle $TM$ into the $\pm\,1$-eigenbundles of an almost para-complex structure $K_s\in{\sf Aut}(TM)$ which is compatible with $\omega_0$. The leaves of the canonical foliation $\cF$ of the cotangent bundle $M=T^*\cQ$ are the fibres $\cF_q=\pi^{-1}(q)$ over $q\in \cQ$, which are diffeomorphic to $\IR^d$. There is a vector bundle isomorphism $\ker(\pi_*)\simeq T\cF$, and the quotient by the action of the foliation is $M/\cF\simeq\cQ$.

If the base $\cQ$ is a Riemannian manifold with metric $g$, then the horizontal lift of $g$, that is, the pullback $g_+=\pi^*g$, gives a fibrewise Euclidean metric on $\Im(s)$. This defines a Born geometry with compatible generalized metric $\cH_0$ on $(M,K_s,\omega_0)$ given by \eqref{eq:diaghermmetric}. Since any manifold $\cQ$ admits a Riemannian metric, one can always define a Born geometry on $M$ of this type. Similarly, given any $2$-form $b\in\Omega^2(\cQ)$, its horizontal lift $b_+=\pi^*b$ defines a $B_+$-transformation of the almost para-K\"ahler manifold $(M,K_s,\omega_0)$.
\end{example}

\medskip

\subsection{The Canonical Metric Algebroid} ~\\[5pt]
\label{sec:canonicalDbracket}
On any almost para-Hermitian manifold $(M,K,\eta)$, one can define the canonical D-bracket which makes the tangent bundle $TM$ into a metric algebroid over $M$ on which both eigenbundles $L_\pm$ of $K$ are D-structures~\cite{Freidel2017,Svoboda2018,Freidel2019,Svoboda:2020msh}. For this, we first need the following preliminary notion.

\begin{definition}\label{def:canconn}
Let $(M,K,\eta)$ be an almost para-Hermitian manifold. Let $\nabla^\LC$ be the Levi-Civita connection of $\eta$, and let $P_\pm:TM\to TM$ be the compositions of the inclusion maps $L_\pm\hookrightarrow TM$ with the projections $TM\to L_\pm$ to the $\pm\,1$-eigenbundles of $K$. The \emph{canonical connection} on $(M,K,\eta)$ is
\begin{align*}
\nabla^{\tt can} = P_+\circ\nabla^\LC\circ P_+ + P_-\circ\nabla^\LC\circ P_- \ .
\end{align*}
\end{definition}

The canonical connection is a \emph{para-Hermitian connection}: $\nabla^{\tt can}K=\nabla^{\tt can}\eta=0$; in particular, it preserves the eigenbundles $L_\pm$ of $K$. It coincides with the Levi-Civita connection, $\nabla^{\tt can}=\nabla^\LC$, if and only if $(M,K,\eta)$ is an almost para-K\"ahler manifold~\cite{Svoboda2018}. By a construction similar to Example~\ref{ex:metricalgeeta}, we then arrive at one of our central concepts.

\begin{definition}\label{thm:canDbracket}
Let $(M,K,\eta)$ be an almost para-Hermitian manifold, and let $\nabla^{\tt can}$ be its canonical connection. The \emph{canonical D-bracket} $\llbracket\,\cdot\,,\,\cdot\,\rrbracket_\dorf^K$ is defined by
\begin{align*}
\eta\big(\llbracket X,Y\rrbracket_\dorf^K,Z\big) = \eta(\nabla_X^{\tt can}Y - \nabla_Y^{\tt can}X,Z) + \eta(\nabla_Z^{\tt can}X,Y) \ ,
\end{align*}
for vector fields $X,Y,Z\in\mathsf{\Gamma}(TM)$. The metric algebroid $(TM,\llbracket\,\cdot\,,\,\cdot\,\rrbracket_\dorf^K,\eta,\unit_{TM})$ is the \emph{canonical metric algebroid} over $(M,K,\eta)$.
\end{definition}

The canonical D-bracket is compatible with the almost para-complex structure $K$, that is, both of its eigenbundles $L_\pm$ are D-structures on the metric algebroid $(TM,\llbracket\,\cdot\,,\,\cdot\,\rrbracket_\dorf^K,\eta,\unit_{TM})$:
\begin{align*}
\llbracket\mathsf{\Gamma}(L_\pm),\mathsf{\Gamma}(L_\pm)\rrbracket_\dorf^K \ \subseteq \ \mathsf{\Gamma}(L_\pm) \ .
\end{align*}
It is `canonical' because it is the projection of the Lie bracket of vector fields~\cite{Freidel2017}:
\begin{align*}
\llbracket P_\pm(X),P_\pm(Y)\rrbracket_\dorf^K = P_\pm\big([P_\pm(X),P_\pm(Y)]_{TM}\big) \ ,
\end{align*}
for all $X,Y\in\mathsf{\Gamma}(TM)$. Given $(M,K,\eta)$, the bracket $\llbracket\,\cdot\,,\,\cdot\,\rrbracket_\dorf^K$ is the unique D-bracket on $(TM,\eta,\unit_{TM})$ which is compatible with $K$ and related to the Lie bracket $[\,\cdot\,,\,\cdot\,]_{TM}$ in this way~\cite{Svoboda:2020msh}. In the canonical metric algebroid, the generalized exterior derivative is given by $\cD^{\tt can}=\eta^{-1}{}^\sharp\circ\de$.

\begin{remark}
Any D-bracket on $(TM, K, \eta)$ can be obtained by choosing a para-complex connection corresponding to a splitting of the short exact Atiyah sequence from Proposition~\ref{atiyahsymplectic}:
\be \label{eq:Atiyahtangent}
0 \longrightarrow {\mathfrak{so}}(TM) \longrightarrow \IA(TM, \eta) \longrightarrow TM \longrightarrow 0 \ ,
\ee
where $\IA(TM, \eta) \subset J^1(TM)$ is a vector subbundle of the first order jet bundle of the tangent bundle $TM$. In other words, the canonical connection on $(M,K,\eta)$ is just a change of splitting of the short exact sequence \eqref{eq:Atiyahtangent} when the splitting associated with the Levi-Civita connection $\nabla^\LC$ is chosen, leading to the canonical D-bracket. Any other change of splitting which yields a para-complex connection induces a D-bracket on $TM.$ By using the metric $\eta$ to identify $\mathfrak{so}(TM)\simeq\midwedge^2TM$, the different splittings of \eqref{eq:Atiyahtangent} correspond to bundle morphisms in ${\sf Hom}(\midwedge^2TM,TM)$, and hence the different D-brackets can be shown to correspond to $3$-forms $T\in\Omega^3(M)$, similarly to~\cite[Proposition~2.3]{Vaisman2012}.
\end{remark}

\begin{remark}\label{rem:DHtwisted}
Similarly to the D-bracket of a split exact pre-Courant algebroid (cf. Examples~\ref{ex:exactmetric} and~\ref{ex:exactmetricB}), the canonical D-bracket can be twisted by any closed $3$-form $H\in\Omega^3(M)$. In particular, a $B_+$-transformation maps the canonical D-bracket $\llbracket\,\cdot\,,\,\cdot\,\rrbracket_\dorf^K$ of $(M,K,\eta)$ to the canonical D-bracket of $(M,K_{B_+},\eta)$, which is equal to $\llbracket\,\cdot\,,\,\cdot\,\rrbracket_\dorf^K$ twisted by the $3$-form $\de b$~\cite{Svoboda2018,Svoboda:2020msh}:
\begin{align*}
\eta\big(\llbracket X,Y\rrbracket_\dorf^{K_{B_+}},Z\big) = \eta\big(\llbracket X,Y\rrbracket_\dorf^K,Z\big) - \de b(X,Y,Z) \ ,
\end{align*}
for all $X,Y,Z\in\mathsf{\Gamma}(TM)$.

Under the correspondence of Theorem~\ref{thm:1-1metric}, the canonical metric algebroid on $(M,K,\eta)$ can be identified with a symplectic $2$-algebroid $(T[1]M\oplus T^*[2]M,\{\,\cdot\,,\,\cdot\,\},\gamma)$, where the Poisson bracket $\{\,\cdot\,,\,\cdot\,\}$ coincides with the metric $\eta$ on degree~$1$ functions $\mathsf{\Gamma}(TM)$. In this language a $B_+$-transformation is a particular example of a canonical transformation of the symplectic $2$-algebroid, which twists the Hamiltonian $\gamma$ (see e.g.~\cite{Heller2016,Kokenyesi:2018xgj}).
\end{remark}

\begin{remark}\label{rem:Dfluxes}
The canonical D-bracket also gives a notion of relative weak integrability of almost para-Hermitian structures~\cite{Svoboda2018}. If $(K,\eta)$ and $(K',\eta)$ are almost para-Hermitian structures on the same even-dimensional manifold $M$, with respective eigenbundles $L_\pm$ and $L_\pm'$, then $K'$ is said to be \emph{D-integrable} with respect to $K$ if $\llbracket\mathsf{\Gamma}(L_\pm'),\mathsf{\Gamma}(L_\pm')]_\dorf^K\subseteq\mathsf{\Gamma}(L_\pm')$. The lack of D-integrability is then measured by the \emph{fluxes}
\begin{align*}
T(X,Y,Z) := \eta\big(\llbracket X,Y\rrbracket_\dorf^K - \llbracket X,Y\rrbracket_\dorf^{K'},Z\big) \ ,
\end{align*}
for $X,Y,Z\in\mathsf{\Gamma}(TM)$. The $3$-forms $T\in\Omega^3(M)$ reproduce the standard generalized fluxes of double field theory~\cite{SzMar}.
\end{remark}

\section{The Metric Algebroids of Doubled Geometry}
\label{sec:doubledgeom}
In this section we will make precise some notions of doubled geometry, and in particular its algebroid structures, which are relevant to a global description of double field theory.

\medskip

\subsection{Metric Algebroids from the Large Courant Algebroid} ~\\[5pt]
\label{sec:metricfromlarge}
Let $(M,K,\eta)$ be an almost para-Hermitian manifold of dimension $2d$. We have seen that there are two natural metric algebroids that can be defined over $M$ in this case. Firstly, there is the canonical metric algebroid $(TM,\llbracket\,\cdot\,,\,\cdot\,\rrbracket_\dorf^K,\eta,\unit_{TM})$ defined on the tangent bundle $TM$ of rank $2d$ in Definition~\ref{thm:canDbracket}. Secondly, there is the standard Courant algebroid $(\IT M,[\,\cdot\,,\,\cdot\,]_{\dorf},\langle\,\cdot\,,\,\cdot\,\rangle_{\IT M},\rho)$ defined on the generalized tangent bundle $\IT M=TM\oplus T^*M$ of rank $4d$ in Example~\ref{ex:standardCourant}; we call this latter metric algebroid the \emph{large Courant algebroid} on $M$ to distinguish it from other smaller rank Courant algebroids that will appear on $M$ momentarily. Both the canonical D-bracket on $TM$ and the Dorfman bracket on $\IT M$ can be twisted by a closed $3$-form $H\in\Omega^3(M)$, and both classes of metric algebroids are classified by $B_+$-transformations which shift the twisting $3$-form to $H+\de b$ for $2$-forms $b\in\Omega^2(M)$ (cf.~Remark~\ref{rem:DHtwisted} and Example~\ref{ex:exactmetricB}). We shall now discuss the precise relation between these two metric algebroids, following~\cite{Jonke2018,Hu:2019zro}.

For this, let us start from a more general setting. Let $(M,\eta)$ be any pseudo-Riemannian manifold. On the para-Hermitian vector bundle $\IT M\to M$ (cf.\ Example~\ref{ex:gentanbun}) we can define a generalized split signature metric by the involution
\begin{align*}
\cI_0 = \bigg( \begin{matrix}
0 & \eta^{-1}{}^\sharp \\
\eta^\flat & 0
\end{matrix} \bigg)
\end{align*}
with respect to the splitting $\IT M = TM\oplus T^*M$,
and its $B_+$-transformations $\cI_{B_+}$ given by $\cI_{B_+}=e^{B_+}\circ\cI_0\circ e^{-B_+}$. This defines another para-complex structure on $\IT M$ which splits the generalized tangent bundle into its $\pm\,1$-eigenbundles $C_\pm$ of equal rank:
\begin{align*}
\IT M = C_+\oplus C_- \qquad \mbox{with} \quad C_\pm = \mathsf{Graph}(b\pm\eta) \ ,
\end{align*}
where $\mathsf{Graph}(b\pm\eta) = \{X+b^\flat(X)\pm\eta^\flat(X)\, |\, X\in\mathsf{\Gamma}(TM)\}\subset\IT M$. We denote by
\begin{align*}
\sfp_\pm : \IT M\longrightarrow C_\pm
\end{align*}
the projections to the subbundles $C_\pm$ of $\IT M$, and by $\sfi_\pm:C_\pm\hookrightarrow \IT M$ the inclusion maps. 

By this process of doubling, splitting and projecting via the generalized tangent bundle $\IT M$, we get a pair of metric algebroids on any pseudo-Riemannian manifold.

\begin{proposition}\label{prop:DFTfromLarge}
Let $(\IT M,[\,\cdot\,,\,\cdot\,]_{\dorf},\langle\,\cdot\,,\,\cdot\,\rangle_{\IT M},\rho)$ be the $H$-twisted standard Courant algebroid over a pseudo-Riemannian manifold $(M,\eta)$, and let $C_\pm\subset \IT M$ be the eigenbundles associated to the generalized metric \smash{$\cI_{B_+}$}. Define structure maps on sections of the vector bundles $C_\pm\to M$ by
\begin{align*}
\llbracket \xi_\pm,\xi_\pm' \rrbracket_\dorf^{\pm} &:= \sfp_\pm\big([\sfi_\pm(\xi_\pm),\sfi_\pm(\xi_\pm')]_{\dorf}\big) \ , \\[4pt]
\langle\xi_\pm,\xi_\pm'\rangle_{C_\pm} &:= \pm\,\tfrac12\,\langle\sfi_\pm(\xi_\pm),\sfi_\pm(\xi_\pm')\rangle_{\IT M} \ , \\[4pt]
\rho_\pm &:= \rho\circ\sfi_\pm \ ,
\end{align*}
for $\xi_\pm,\xi_\pm'\in\mathsf{\Gamma}(C_\pm)$. Then $(C_\pm,\llbracket\,\cdot\,,\,\cdot\,\rrbracket_\dorf^{\pm},\langle\,\cdot\,,\,\cdot\,\rangle_{C_\pm},\rho_\pm)$ are metric algebroids over $M$.
\end{proposition}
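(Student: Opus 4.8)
The plan is to verify directly the axioms of Definition~\ref{malg} for each of the bundles $C_+$ and $C_-$, bootstrapping from the fact that the large Courant algebroid $(\IT M,[\,\cdot\,,\,\cdot\,]_\dorf,\langle\,\cdot\,,\,\cdot\,\rangle_{\IT M},\rho)$ is itself a metric algebroid (Example~\ref{ex:Courantmetric}). The two computations are identical up to signs, so I would carry out the case of $C_+$ in full; the case of $C_-$ then follows verbatim, with the overall minus sign in the pairing $\langle\,\cdot\,,\,\cdot\,\rangle_{C_-}$ absorbing the corresponding sign changes. The single geometric fact on which everything hinges is that the eigenbundles $C_+$ and $C_-$ of the generalized metric $\cI_{B_+}$ are orthogonal with respect to the Courant pairing $\langle\,\cdot\,,\,\cdot\,\rangle_{\IT M}$, with $C_+$ positive-definite and $C_-$ negative-definite (this holds because $\cI_{B_+}$ is an isometry of $\langle\,\cdot\,,\,\cdot\,\rangle_{\IT M}$ squaring to $\unit$, so its $\pm\,1$-eigenbundles are orthogonal, while definiteness follows from $\langle\cI_{B_+}(\,\cdot\,),\,\cdot\,\rangle_{\IT M}=\cH_{B_+}$ being positive-definite). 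Two consequences are immediate: the induced pairing $\langle\,\cdot\,,\,\cdot\,\rangle_{C_\pm}$ is a non-degenerate (in fact positive-definite) symmetric form, and for any $e\in\mathsf{\Gamma}(\IT M)$ and $\xi_+\in\mathsf{\Gamma}(C_+)$ one has the orthogonal-projection identity $\langle e,\sfi_+(\xi_+)\rangle_{\IT M}=\langle\sfi_+\sfp_+(e),\sfi_+(\xi_+)\rangle_{\IT M}$, since the $C_-$-component of $e$ pairs trivially against $C_+$.

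First I would establish that $(C_+,\llbracket\,\cdot\,,\,\cdot\,\rrbracket_\dorf^+,\rho_+)$ is an algebroid in the sense of Definition~\ref{def:algebroid}. The bracket is $\IR$-bilinear because the Dorfman bracket is and $\sfi_+,\sfp_+$ are $C^\infty(M)$-linear, while $\rho_+=\rho\circ\sfi_+$ is a bundle morphism covering the identity. The anchored derivation property \eqref{eq:anchorLeibniz} is a one-line check: using $\sfi_+(f\,\xi')=f\,\sfi_+(\xi')$, the anchored derivation property of the Dorfman bracket, the $C^\infty(M)$-linearity of $\sfp_+$, and $\sfp_+\circ\sfi_+=\unit_{C_+}$, one obtains $\llbracket\xi,f\,\xi'\rrbracket_\dorf^+=f\,\llbracket\xi,\xi'\rrbracket_\dorf^++\big(\rho_+(\xi)\cdot f\big)\,\xi'$.

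The substance of the argument lies in conditions \eqref{eq:metric1} and \eqref{eq:metric2}. For \eqref{eq:metric1} I would apply the corresponding identity of the large Courant algebroid to the sections $\sfi_+(\xi),\sfi_+(\xi_1),\sfi_+(\xi_2)$, multiply by $\tfrac12$, and then invoke the orthogonal-projection identity to rewrite $\langle[\sfi_+(\xi),\sfi_+(\xi_1)]_\dorf,\sfi_+(\xi_2)\rangle_{\IT M}=\langle\sfi_+\llbracket\xi,\xi_1\rrbracket_\dorf^+,\sfi_+(\xi_2)\rangle_{\IT M}=2\,\langle\llbracket\xi,\xi_1\rrbracket_\dorf^+,\xi_2\rangle_{C_+}$, and likewise for the second term; the factors $\tfrac12$ and $2$ cancel, producing exactly \eqref{eq:metric1} for $C_+$. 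Condition \eqref{eq:metric2} follows by the same manipulation, taking the repeated entry $\sfi_+(\xi)$ and pairing against $\sfi_+(\xi_1)$. The normalization $\pm\tfrac12$ in the definition of $\langle\,\cdot\,,\,\cdot\,\rangle_{C_\pm}$ is precisely what makes the factor of two cancel for both signs simultaneously.

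I anticipate no genuine obstacle: the only conceptual input is the orthogonality $C_+\perp C_-$ together with the definiteness of the restricted pairings, and once these are in hand the remaining verifications reduce to bookkeeping with the projection identity $\sfp_\pm\circ\sfi_\pm=\unit_{C_\pm}$ and the factor-of-two normalization. (One could instead deduce \eqref{eq:anchorLeibniz} a posteriori from \eqref{eq:metric1} via Remark~\ref{rem:Courantmetric}, but the direct check above is shorter and avoids any logical ordering concerns.)
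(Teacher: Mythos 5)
Your proof is correct, and it supplies the verification that the paper itself omits (the proposition is stated there without proof, deferring to the cited references); the strategy of pushing the metric-algebroid axioms of the large Courant algebroid through $\sfi_\pm$ and $\sfp_\pm$ using the orthogonality $C_+\perp C_-$ and the normalization $\pm\tfrac12$ is exactly the right one, and the bookkeeping with $\sfp_\pm\circ\sfi_\pm=\unit_{C_\pm}$ checks out for both signs.

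One small correction: since $\eta$ is only assumed pseudo-Riemannian (and in the doubled-geometry application has split signature), the restricted pairings on $C_\pm$ are \emph{not} positive- and negative-definite in general --- a direct computation gives $\langle\xi_\pm,\xi'_\pm\rangle_{C_\pm}=\eta\big(\rho_\pm(\xi_\pm),\rho_\pm(\xi'_\pm)\big)$, so each inherits the signature of $\eta$. This does not damage your argument, because the only property you actually use is non-degeneracy, which follows either from this formula or, as you could equally well argue, from the orthogonal decomposition $\IT M=C_+\oplus C_-$ together with non-degeneracy of $\langle\,\cdot\,,\,\cdot\,\rangle_{\IT M}$: a section of $C_+$ orthogonal to all of $C_+$ is automatically orthogonal to all of $\IT M$. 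I would simply replace the definiteness claim by this observation.
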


The construction of Example~\ref{ex:metricalgeeta} then yields

\begin{proposition}\label{prop:DFTcanonical}
Let $(M,\eta)$ be a pseudo-Riemannian manifold. Then the metric algebroids $(C_\pm,\llbracket\,\cdot\,,\,\cdot\,\rrbracket_\dorf^{\pm},\langle\,\cdot\,,\,\cdot\,\rangle_{C_\pm},\rho_\pm)$ of Proposition~\ref{prop:DFTfromLarge} are isomorphic to the metric algebroid $(TM,\llbracket\,\cdot\,,\,\cdot\,\rrbracket_\dorf^\eta,\eta,\unit_{TM})$ with D-bracket twisted by the $3$-forms $\pm\,(H+\de b)$.
\end{proposition}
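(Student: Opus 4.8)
The plan is to realise the isomorphism by the anchor map itself, and to reduce the only nontrivial axiom---compatibility of the brackets---to a single identity computed inside the large Courant algebroid $\IT M$. Since $C_\pm=\mathsf{Graph}(b\pm\eta)$, every section has the form $\sfi_\pm(\xi_\pm)=X+b^\flat(X)\pm\eta^\flat(X)$ for a unique $X\in\mathsf{\Gamma}(TM)$, so the anchor $\rho_\pm=\rho\circ\sfi_\pm$ restricts to a vector bundle isomorphism $\rho_\pm:C_\pm\to TM$ with inverse $\phi_\pm(X):=X+b^\flat(X)\pm\eta^\flat(X)$. I take $\rho_\pm$ as the candidate metric algebroid morphism onto $(TM,\llbracket\,\cdot\,,\,\cdot\,\rrbracket_\dorf^\eta,\eta,\unit_{TM})$ carrying the twisted D-bracket; anchor compatibility $\unit_{TM}\circ\rho_\pm=\rho_\pm$ then holds tautologically, so it remains only to check that $\rho_\pm$ is an isometry and intertwines the brackets.

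First I would dispose of the isometry. Evaluating $\langle\,\cdot\,,\,\cdot\,\rangle_{\IT M}$ on two graph sections, the skew part $b$ cancels while the symmetric part $\eta$ doubles, giving $\langle\sfi_\pm\phi_\pm(X),\sfi_\pm\phi_\pm(Y)\rangle_{\IT M}=\pm\,2\,\eta(X,Y)$; the normalisation factor $\pm\tfrac12$ built into $\langle\,\cdot\,,\,\cdot\,\rangle_{C_\pm}$ is tailored precisely so that $\langle\xi_\pm,\xi_\pm'\rangle_{C_\pm}=\eta(\rho_\pm\xi_\pm,\rho_\pm\xi_\pm')$. The very same computation shows that $C_+$ and $C_-$ are orthogonal with respect to $\langle\,\cdot\,,\,\cdot\,\rangle_{\IT M}$, which is the observation that trivialises the projector in the next step.

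The substance of the proof is bracket compatibility. Writing $\sfp_\pm\big([\sfi_\pm\phi_\pm X,\sfi_\pm\phi_\pm Y]_\dorf\big)=\phi_\pm(W)$, I would pair with $\sfi_\pm\phi_\pm(Z)$: since $C_+\perp C_-$, the projector $\sfp_\pm$ may be dropped against a section of $C_\pm$, so that $\pm\,2\,\eta(W,Z)=\langle[\sfi_\pm\phi_\pm X,\sfi_\pm\phi_\pm Y]_\dorf,\sfi_\pm\phi_\pm Z\rangle_{\IT M}$. As $\eta$ is non-degenerate this determines $W$, and the desired relation $\rho_\pm\llbracket\,\cdot\,,\,\cdot\,\rrbracket_\dorf^\pm=\llbracket\rho_\pm(\,\cdot\,),\rho_\pm(\,\cdot\,)\rrbracket_\dorf^{\eta,\pm(H+\de b)}$ becomes a single identity to be checked for all $X,Y,Z\in\mathsf{\Gamma}(TM)$. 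Expanding the $H$-twisted Dorfman bracket and splitting each graph covector $b^\flat(X)\pm\eta^\flat(X)$ into its symmetric, skew and $H$ contributions, I expect three clean pieces: the symmetric ($\eta$) terms reassemble into the metric D-bracket $\eta(\llbracket X,Y\rrbracket_\dorf^\eta,Z)$; the skew ($b$) terms collapse, by Cartan's formula $\pounds_X=\iota_X\de+\de\iota_X$ applied to the genuine $2$-form $b$, to $\de b(X,Y,Z)$; and the $H$-term yields $H(X,Y,Z)$ directly. Collecting these, with the overall sign dictated by the $\pm\,1$-eigenbundle, identifies $W$ as $\llbracket X,Y\rrbracket_\dorf^\eta$ twisted by $\pm\,(H+\de b)$, in accordance with Remark~\ref{rem:DHtwisted} and Example~\ref{ex:exactmetricB}; the $C_-$ case differs only in that each graph covector carries $-\eta^\flat$, which flips the sign of the symmetric block.

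The main obstacle is the symmetric block: showing that the Lie-derivative and exterior-derivative terms of the Dorfman bracket, once paired against the graph, really do reorganise into the Levi-Civita expression $\eta(\nabla^\LC_XY-\nabla^\LC_YX,Z)+\eta(\nabla^\LC_ZX,Y)$ defining $\llbracket\,\cdot\,,\,\cdot\,\rrbracket_\dorf^\eta$ in Example~\ref{ex:metricalgeeta}. I would settle this by rewriting $\eta(\nabla^\LC_ZX,Y)$ through the Koszul formula and matching it term by term against $\eta([X,Y],Z)+\iota_Z\pounds_X(\eta^\flat Y)-\iota_Z\iota_Y\de(\eta^\flat X)$, using $\nabla^\LC\eta=0$ and torsion-freeness to cancel the derivative-of-$\eta$ terms; the skew and $H$ blocks, being governed by Cartan calculus on honest differential forms, are routine by comparison.
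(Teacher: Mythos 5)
Your proposal is correct and follows essentially the same route as the paper: the anchor $\rho_\pm$ (with explicit inverse $X\mapsto X+(b^\flat\pm\eta^\flat)X$) is taken as the isomorphism, the isometry is verified directly, and the bracket compatibility is reduced to the single identity $\big\langle\llbracket\rho_\pm^{-1}X,\rho_\pm^{-1}Y\rrbracket_\dorf^\pm,\rho_\pm^{-1}Z\big\rangle_{C_\pm}=\eta(\llbracket X,Y\rrbracket_\dorf^\eta,Z)\pm\tfrac12(H+\de b)(X,Y,Z)$, which the paper states without computation and you verify via the Koszul formula. Your observation that $C_+\perp C_-$ with respect to $\langle\,\cdot\,,\,\cdot\,\rangle_{\IT M}$, so that the projector $\sfp_\pm$ can be dropped against sections of $C_\pm$, is a correct and worthwhile detail left implicit in the paper.
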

\begin{proof}
The anchor maps $\rho_\pm:C_\pm\to TM$ defined in Proposition~\ref{prop:DFTfromLarge} are bundle isomorphisms: the inverse maps $\rho_\pm^{-1}:TM\to C_\pm$ send $X\in\mathsf{\Gamma}(TM)$ to $X + (b^\flat\pm\eta^\flat)X$. One then checks explicitly  
\begin{align*}
\eta(X,Y)=\big\langle\rho_\pm^{-1}(X),\rho_\pm^{-1}(Y)\big\rangle_{C_\pm} 
\end{align*}
along with
 \begin{align*} 
\big\langle\llbracket\rho_\pm^{-1}(X),\rho_\pm^{-1}(Y)\rrbracket_\dorf^{\pm},\rho_\pm^{-1}(Z)\big\rangle_{C_\pm} = \eta\big(\llbracket X,Y\rrbracket_\dorf^\eta,Z\big) \pm\tfrac12\,(H+\de b)(X,Y,Z) \ ,
\end{align*}
for all $X,Y,Z\in\mathsf{\Gamma}(TM)$,
and that the anchor maps of the two metric algebroids are compatible, which is satisfied trivially as $\rho_\pm\circ\rho_\pm^{-1}=\unit_{TM}$.
\end{proof}

Let us finally specialize Proposition~\ref{prop:DFTcanonical}, with $b=0$ and $H=0$, to the case that the metric $\eta$ is part of an almost para-K\"ahler structure $(K,\eta)$. We then immediately arrive at

\begin{proposition}\label{cor:DFTfromLarge}
Let $(\IT M,[\,\cdot\,,\,\cdot\,]_{\dorf},\langle\,\cdot\,,\,\cdot\,\rangle_{\IT M},\rho)$ be the large Courant algebroid over an almost para-K\"ahler manifold $(M,K,\eta)$. Then the anchor map $\rho_+:C_+\to TM$ defines a metric algebroid isomorphism from the metric algebroid $(C_+,\llbracket\,\cdot\,,\,\cdot\,\rrbracket_\dorf^{+},\langle\,\cdot\,,\,\cdot\,\rangle_{C_+},\rho_+)$ with $b=0$ to the canonical metric algebroid $\big(TM,\llbracket\,\cdot\,,\,\cdot\,\rrbracket_\dorf^{K'},\eta,\unit_{TM}\big)$ corresponding to the para-complex structure $K'=\rho_+\circ(K-K^{\rm t})\circ\rho_+^{-1}$. 
\end{proposition}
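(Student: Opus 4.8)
The bulk of the statement is inherited from Proposition~\ref{prop:DFTcanonical}. Specializing that result to $b=0$ and $H=0$, the anchor $\rho_+:C_+\to TM$ is already known to be a metric algebroid isomorphism from $(C_+,\llbracket\,\cdot\,,\,\cdot\,\rrbracket_\dorf^+,\langle\,\cdot\,,\,\cdot\,\rangle_{C_+},\rho_+)$ onto the \emph{untwisted} metric algebroid $(TM,\llbracket\,\cdot\,,\,\cdot\,\rrbracket_\dorf^\eta,\eta,\unit_{TM})$, with inverse $\rho_+^{-1}(X)=X+\eta^\flat(X)$. The only new content is therefore (i) to recognize this target as a \emph{canonical} metric algebroid, and (ii) to identify the para-complex structure to which it is associated as the stated $K'$.

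For (i) I would invoke the almost para-K\"ahler hypothesis directly. As recorded just before Definition~\ref{thm:canDbracket}, one has $\nabla^{\tt can}=\nabla^\LC$ precisely when $(M,K,\eta)$ is almost para-K\"ahler; since the canonical D-bracket of Definition~\ref{thm:canDbracket} and the Levi-Civita D-bracket of Example~\ref{ex:metricalgeeta} are defined by identical formulas in the respective connections, this coincidence of connections yields $\llbracket\,\cdot\,,\,\cdot\,\rrbracket_\dorf^\eta=\llbracket\,\cdot\,,\,\cdot\,\rrbracket_\dorf^K$. Hence the target of the isomorphism is the canonical metric algebroid of $(M,K,\eta)$.

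For (ii), the plan is to transport the natural para-complex structure of $\IT M$ through $\rho_+$. Writing $\mathbb{K}:=K-K^{\rm t}$ for the operator $\mathbb{K}(X+\alpha)=K(X)-K^{\rm t}(\alpha)$ on $\IT M=TM\oplus T^*M$, one checks that $K^2=\unit_{TM}$ forces $\mathbb{K}^2=\unit_{\IT M}$, and that $\mathbb{K}$ is compatible with $\langle\,\cdot\,,\,\cdot\,\rangle_{\IT M}$ because $(K^{\rm t}\alpha)(K Y)=\alpha(K^2Y)=\alpha(Y)$. The crux is to show that $C_+=\mathsf{Graph}(\eta)$ is $\mathbb{K}$-invariant: using the para-Hermitian compatibility $\eta(KX,Y)=-\eta(X,KY)$ one gets $K^{\rm t}(\eta^\flat X)=-\eta^\flat(KX)$, so that $\mathbb{K}(X+\eta^\flat X)=KX+\eta^\flat(KX)=\rho_+^{-1}(KX)\in C_+$. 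Consequently the conjugate $K'=\rho_+\circ\mathbb{K}\circ\rho_+^{-1}$ is a well-defined endomorphism of $TM$ and in fact equals $K$, so the canonical metric algebroid associated with $K'$ is precisely that of $(M,K,\eta)$. The main (and essentially only) obstacle is this invariance check, which rests entirely on the compatibility of $K$ with $\eta$; once it is secured, the isomorphism of metric algebroids follows by combining Proposition~\ref{prop:DFTcanonical} with the identity $\nabla^{\tt can}=\nabla^\LC$.
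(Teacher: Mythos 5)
Your proposal is correct and follows the same route as the paper, which offers no separate argument beyond ``specialize Proposition~\ref{prop:DFTcanonical} with $b=0$ and $H=0$''; your additional verifications --- that $\nabla^{\tt can}=\nabla^\LC$ in the almost para-K\"ahler case identifies $\llbracket\,\cdot\,,\,\cdot\,\rrbracket_\dorf^\eta$ with the canonical D-bracket, and that $C_+$ is invariant under $K-K^{\rm t}$ with $\rho_+\circ(K-K^{\rm t})\circ\rho_+^{-1}=K$ --- are accurate and simply make explicit what the paper leaves implicit.
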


\begin{remark}\label{rem:DFTfromLarge}
The construction of Proposition~\ref{cor:DFTfromLarge} was originally given for flat para-K\"ahler manifolds in~\cite{Jonke2018} (see also~\cite{Chatzistavrakidis:2019huz}), using the local model of the cotangent bundle $M=T^*\cQ$ with the canonical symplectic structure $\omega_0$ (cf. Example~\ref{ex:cotangent}), and in this way recovering the well known local expression for the canonical D-bracket $\llbracket\,\cdot\,,\,\cdot\,\rrbracket_\dorf^{+}$ with $b=0$ and $H=0$. It was extended to any generalized para-K\"ahler manifold in~\cite{Hu:2019zro} (see also~\cite{Svoboda:2020msh}), using Proposition~\ref{prop:DFTcanonical} with $b\neq0$ and $H\neq0$, where the D-brackets $\llbracket\,\cdot\,,\,\cdot\,\rrbracket_\dorf^{\pm}$ coincide with the canonical D-brackets of the two para-Hermitian structures $K_\pm$ associated to the generalized para-K\"ahler structure; the relative flux for this pair (cf. Remark~\ref{rem:Dfluxes}) is precisely the twisting $3$-form of the underlying large Courant algebroid: $T=H+\de b$.
\end{remark}

\medskip

\subsection{Doubled Manifolds and DFT Algebroids} ~\\[5pt]
\label{subsec:DFTalgebroid}
The constructions of Section~\ref{sec:metricfromlarge} motivate a generalization to a special class of metric algebroids where the deviation from a Courant algebroid is done in a controlled way. As we discuss below, this is the essence of the `section constraint' in double field theory (DFT for short). For this, we consider a special class of manifolds on which the geometry of double field theory, or \emph{doubled geometry}, is based. We denote by ${\sf O}(d,d)$ the split orthogonal group, whose maximal compact subgroup is ${\sf O}(d)\times{\sf O}(d)$.

\begin{definition}\label{def:doubledmfd}
A \emph{doubled manifold} is a manifold $M$ of even dimension $2d$ endowed with an ${\sf O}(d,d)$-structure, or equivalently a pseudo-Riemannian metric $\eta$ of split signature $(d,d)$.
\end{definition}

\begin{example}
Any almost para-Hermitian manifold $(M,K,\eta)$ is a doubled manifold, with a reduction of the ${\sf O}(d,d)$-structure to an ${\sf O}(d){\times}{\sf O}(d)$-structure. The introduction of a compatible generalized metric $\cH_0$ further reduces this to an ${\sf O}(d)$-structure.
\end{example}

\begin{definition}\label{def:DFTalgebroid}
A \emph{DFT algebroid} over a doubled manifold $(M,\eta)$ is a metric algebroid $(C,\llbracket\,\cdot\,,\,\cdot\,\rrbracket_\dorf,\langle\,\cdot\,,\,\cdot\,\rangle_C, \rho)$ whose anchor map defines an isometric isomorphism $\rho:(C,\langle\,\cdot\,,\,\cdot\,\rangle_C)\to (TM,\eta)$ of pseudo-Euclidean vector bundles over $M$ such that $\rho\circ\rho^*=\eta^{-1}{}^\sharp$.
\end{definition}

Sections of the vector bundle $C\to M$ are called \emph{doubled vectors} or \emph{DFT vectors}. All constructions of Section~\ref{sec:metricfromlarge} (in the split signature case) clearly fit into this general definition. On the other hand, Courant algebroids and pre-Courant algebroids over doubled manifolds are \emph{not} DFT algebroids, because their anchor maps have non-trivial kernels (cf.~Remark~\ref{rem:Courantminaxioms}). The following are noteworthy particular cases of DFT algebroids that will appear later on.

\begin{example}\label{ex:DFTLarge}
The metric algebroid $(C_+,\llbracket\,\cdot\,,\,\cdot\,\rrbracket_\dorf^{+},\langle\,\cdot\,,\,\cdot\,\rangle_{C_+},\rho_+)$ over any almost para-K\"ahler manifold $(M,K,\eta)$, constructed in Proposition~\ref{cor:DFTfromLarge}, is a DFT algebroid.
\end{example}

\begin{example}\label{ex:DFTcanonical}
The canonical metric algebroid $\big(TM,\llbracket\,\cdot\,,\,\cdot\,\rrbracket_\dorf^{K},\eta,\unit_{TM}\big)$ over any almost para-Hermitian manifold $(M,K,\eta)$, constructed in Section~\ref{sec:canonicalDbracket}, is a DFT algebroid.
\end{example}

From Definition~\ref{def:DFTalgebroid} we can describe how the key Courant algebroid properties \eqref{eq:Leibnizid}, \eqref{eq:anchorbracket} and \eqref{eq:rho0cD} are explicitly violated in a DFT algebroid in terms of the underlying geometry of the doubled manifold. Recalling Remark~\ref{rem:Leibhom}, we can write the map \eqref{eq:LeibC} in terms of \eqref{eq:homrho} as
\begin{align*}
& {\sf Leib}_\dorf(c_1,c_2,c_3) \\[4pt]
& \hspace{1cm} = \rho^{-1} \big( [\rho(c_1),{\sf hom}_\rho(c_2,c_3)]_{TM} - [{\sf hom}_\rho(c_1,c_2),\rho(c_3)]_{TM} - [\rho(c_2),{\sf hom}_\rho(c_1,c_3)]_{TM} \\
& \hspace{3cm} + {\sf hom}_\rho(c_1,\llbracket c_2,c_3\rrbracket_\dorf) - {\sf hom}_\rho(\llbracket c_1,c_2\rrbracket_\dorf,c_3) - {\sf hom}_\rho(c_2,\llbracket c_1,c_3\rrbracket_\dorf) \big) \ ,
\end{align*}
for all $c_1,c_2,c_3\in\mathsf{\Gamma}(C)$, which follows from applying the anchor map to \eqref{eq:LeibC} and using the Jacobi identity for the Lie bracket $[\,\cdot\,,\,\cdot\,]_{TM}$ on $\mathsf{\Gamma}(TM)$. This shows that the failure of the anchor map from being a bracket homomorphism completely controls the violation of the Leibniz identity \eqref{eq:Leibnizid} in a DFT algebroid.

\begin{lemma}\label{prop:DFTproperties}
Let $(C,\llbracket\,\cdot\,,\,\cdot\,\rrbracket_\dorf,\langle\,\cdot\,,\,\cdot\,\rangle_C, \rho)$ be a DFT algebroid over a doubled manifold $(M,\eta)$. Then
\begin{align*}
\langle\cD f,\cD g\rangle_C &= \eta^{-1}(\de f,\de g) \ , \\[4pt]
{\sf hom}_\rho(c_1,c_2) \cdot f + {\sf hom}_\rho(c_2,c_1) \cdot f &= \eta^{-1}\big(\de f,\de \langle c_1,c_2\rangle_C\big)  \ , \\[4pt]
\big\langle{\sf Leib}_\dorf(c_1,c_2,c_3) + {\sf Leib}_\dorf(c_2,c_1,c_3) , c_4\big\rangle_C &= - \eta\big({\sf hom}_\rho\big(\rho^{-1}\circ\eta^{-1}{}^\sharp(\de\langle c_1,c_2\rangle_C),c_3\big)\,,\,\rho(c_4)\big) \\
& \quad \, - \eta\big(\big[\eta^{-1}{}^\sharp(\de\langle c_1,c_2\rangle_C),\rho(c_3)\big]_{TM}\,,\,\rho(c_4)\big)   \ ,
\end{align*}
for all $f,g\in C^\infty(M)$ and $c_1,c_2,c_3,c_4\in\mathsf{\Gamma}(C)$.
\end{lemma}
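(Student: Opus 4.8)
The plan is to reduce all three identities to the small set of structural inputs available for a DFT algebroid. The key tools are the \emph{isometry} property of the anchor, $\langle\,\cdot\,,\,\cdot\,\rangle_C=\eta(\rho(\,\cdot\,),\rho(\,\cdot\,))$, which lets one translate freely between $\langle\,\cdot\,,\,\cdot\,\rangle_C$ and $\eta$, together with the relations $\cD=\rho^*\circ\de$ and $\rho\circ\rho^*=\eta^{-1}{}^\sharp$ from Definition~\ref{def:DFTalgebroid}, the defining property $\langle\cD f,c\rangle_C=\rho(c)\cdot f$ of the generalized exterior derivative, the symmetric-part formula \eqref{eq:Courantmetric} for the D-bracket, and the definition \eqref{eq:homrho} of ${\sf hom}_\rho$. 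All three computations are then a matter of inserting these ingredients in the right order.

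For the first identity I would write $\cD g=\rho^*\de g$ and use $\langle\cD f,c\rangle_C=\rho(c)\cdot f$ with $c=\cD g$ to get $\langle\cD f,\cD g\rangle_C=\rho(\cD g)\cdot f=(\eta^{-1}{}^\sharp\de g)\cdot f=\eta^{-1}(\de f,\de g)$, where the middle step uses $\rho\circ\rho^*=\eta^{-1}{}^\sharp$. For the second identity I would symmetrize ${\sf hom}_\rho$ in its two entries: the antisymmetry of the Lie bracket annihilates the $[\rho(c_1),\rho(c_2)]_{TM}$ contributions, while \eqref{eq:Courantmetric} collapses the remaining D-bracket terms, leaving ${\sf hom}_\rho(c_1,c_2)+{\sf hom}_\rho(c_2,c_1)=\rho\big(\cD\langle c_1,c_2\rangle_C\big)=\eta^{-1}{}^\sharp\big(\de\langle c_1,c_2\rangle_C\big)$. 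Evaluating this vector field on $f$ gives $\eta^{-1}(\de f,\de\langle c_1,c_2\rangle_C)$.

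The third identity is where the real work sits, and I expect the bookkeeping to be the main obstacle. First I would symmetrize the Leibniz defect \eqref{eq:LeibC} in its first two slots: the double brackets with $c_1$ innermost and with $c_2$ innermost cancel telescopically, leaving only ${\sf Leib}_\dorf(c_1,c_2,c_3)+{\sf Leib}_\dorf(c_2,c_1,c_3)=-\llbracket\llbracket c_1,c_2\rrbracket_\dorf+\llbracket c_2,c_1\rrbracket_\dorf,c_3\rrbracket_\dorf$, which by \eqref{eq:Courantmetric} equals $-\llbracket\cD\langle c_1,c_2\rangle_C,c_3\rrbracket_\dorf$. Pairing with $c_4$ and using the isometry turns the left-hand side into $-\eta\big(\rho(\llbracket\cD h,c_3\rrbracket_\dorf),\rho(c_4)\big)$ with $h=\langle c_1,c_2\rangle_C$. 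I would then expand $\rho(\llbracket\cD h,c_3\rrbracket_\dorf)={\sf hom}_\rho(\cD h,c_3)+[\rho(\cD h),\rho(c_3)]_{TM}$ directly from \eqref{eq:homrho}, and substitute $\rho(\cD h)=\eta^{-1}{}^\sharp\de h$ and $\cD h=\rho^{-1}\eta^{-1}{}^\sharp\de h$ to land exactly on the two stated terms. The delicate points are keeping the signs straight through the telescoping cancellation and inserting the $\rho^{-1}$ and $\eta^{-1}{}^\sharp$ factors so that the argument of ${\sf hom}_\rho$ reads $\rho^{-1}\circ\eta^{-1}{}^\sharp(\de\langle c_1,c_2\rangle_C)$ rather than $\cD\langle c_1,c_2\rangle_C$; once these are matched, the identity follows with no further computation.
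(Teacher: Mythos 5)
Your proof is correct and follows essentially the same route as the paper: all three identities are reduced to the isometry property, $\rho\circ\cD=\eta^{-1}{}^\sharp\circ\de$, the symmetric-part formula \eqref{eq:Courantmetric}, and the definition \eqref{eq:homrho}, with the same telescoping cancellation for the third identity. The only (harmless) variation is in the second identity, where you symmetrize the definition of ${\sf hom}_\rho$ directly and let the Lie-bracket terms cancel, whereas the paper first rewrites ${\sf hom}_\rho(c_1,c_2)\cdot f$ as $\langle\llbracket\cD f,c_1\rrbracket_\dorf,c_2\rangle_C$ via \eqref{eq:metric1} before symmetrizing; your version is marginally more direct.
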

\begin{proof}
The first equality follows immediately from Definition~\ref{def:DFTalgebroid}, which implies that the generalized exterior derivative in a DFT algebroid is given by
\begin{align}\label{eq:cDDFT}
\cD = \rho^{-1}\circ\eta^{-1}{}^\sharp\circ\de \ ,
\end{align}
and using the isometry property
\begin{align}\label{eq:DFTisometry}
\eta\big(\rho(c_1),\rho(c_2)\big) = \langle c_1,c_2\rangle_C \ .
\end{align}

For the second equality, we note that, as in any metric algebroid, the properties \eqref{eq:metric1} and \eqref{eq:Courantmetric} imply
\begin{align*}
{\sf hom}_\rho(c_1,c_2) \cdot f = \langle\llbracket\cD f,c_1\rrbracket_\dorf,c_2\rangle_C \ .
\end{align*} 
The symmetric part of ${\sf hom}_\rho(c_1,c_2)$ may then be written as
\begin{align*}
\big({\sf hom}_\rho(c_1,c_2) + {\sf hom}_\rho(c_2,c_1)\big)\cdot f = (\rho\circ\cD f)\cdot\langle c_1,c_2\rangle_C = \eta^{-1}{}^\sharp(\de f)\cdot\langle c_1,c_2\rangle_C \ ,
\end{align*}
where in the first step we used \eqref{eq:metric1} again and the second step follows from \eqref{eq:cDDFT}.

For the third equality, we use \eqref{eq:Courantmetric} and \eqref{eq:cDDFT} to write
\begin{align*}
{\sf Leib}_\dorf(c_1,c_2,c_3) + {\sf Leib}_\dorf(c_2,c_1,c_3) = -\llbracket\cD\langle c_1,c_2\rangle_C,c_3\rrbracket_\dorf = -\llbracket\rho^{-1}\circ\eta^{-1}{}^\sharp(\de\langle c_1,c_2\rangle_C),c_3\rrbracket_\dorf \ ,
\end{align*}
and the result then follows from the definition \eqref{eq:homrho} and the isometry property \eqref{eq:DFTisometry}.
\end{proof}

\begin{remark}\label{rem:strongconstraint}
The notion of DFT algebroid was introduced in~\cite{Jonke2018}, for the special case of a flat para-K\"ahler manifold in a local formulation based on a cotangent bundle $M=T^*\cQ$, and with the skew-symmetrization of the D-bracket (see also~\cite{Chatzistavrakidis:2019huz}). In that case, one can write local expressions for the Jacobiator and the map ${\sf hom}_\rho$ entirely in terms of the tangent bundle metric $\eta$ and the Schouten-Nijenhuis bracket of multivector fields on $M$ (see also~\cite{Grewcoe:2020gka}). In the double field theory literature, restricting to functions on $M$ and sections of $C$ (equivalently vector fields on $M$) for which the right-hand sides of the identities in Lemma~\ref{prop:DFTproperties} vanish is called imposing the `(strong) section constraint'; in other words, the section constraint is requirement $\langle\cD f,\cD g\rangle_C=\eta^{-1}(\de f,\de g)=0$ for all functions $f,g\in C^\infty(M)$.  Then via a suitable reduction or quotient, a DFT algebroid becomes a Courant algebroid on $M$.
Notice, however, that the well-known reduction procedure for Courant algebroids given in~\cite{Bursztyn2007} cannot be directly extended to metric algebroids, because it relies crucially on the closure of the Leibniz identity for the algebroid bracket. Hence the precise geometric interpretation of the section constraint remains an important open problem.
We shall discuss this point further in Section~\ref{sec:applications} within the setting of a DFT algebroid over a foliated almost para-Hermitian manifold $(M,K,\eta)$, where we shall see that it is possible to make such a reduction in a suitable sense. This will clarify the sense in which a DFT algebroid lies ``in between'' two Courant algebroids, and how doubled geometry is reconciled with generalized geometry.

However, even in this case one needs to exercise caution with these vanishing statements, because in a metric algebroid the maps ${\sf Leib}_\dorf$ and ${\sf hom}_\rho$ do not define tensors (cf. Remark~\ref{rem:Leibhom}).  Whereas in a Courant algebroid the condition ${\sf Leib}_\dorf(e_1,e_2,e_3)=0$ would yield the Bianchi identities for fluxes in supergravity, as discussed in Section~\ref{subsec:Courant}, in a general metric algebroid this condition depends on the choice of a local frame. Instead, it is proposed by~\cite{Carow-Watamura:2020xij} to replace this condition in a DFT algebroid with the tensorial `pre-Bianchi identity'
\begin{align*}
\big\langle{\sf Leib}_\dorf(c_1,c_2,c_3),c_4\big\rangle_C &= \eta\big({\sf hom}_\rho(c_1,c_3),{\sf hom}_\rho(c_2,c_4)\big) - \eta\big({\sf hom}_\rho(c_1,c_2),{\sf hom}_\rho(c_3,c_4)\big) \\
& \quad \, - \eta\big({\sf hom}_\rho(c_1,c_4),{\sf hom}_\rho(c_2,c_3)\big) \ ,
\end{align*}
for all $c_1,c_2,c_3,c_4\in\mathsf{\Gamma}(C)$. This defines a special class of DFT algebroids, including the standard ones of local double field theory that we will discuss in Section~\ref{subsec:localDFT}.
\end{remark}

\medskip

\subsection{C-Brackets and Curved $L_\infty$-Algebras} ~\\[5pt]
Let $(C,\llbracket\,\cdot\,,\,\cdot\,\rrbracket_\dorf,\langle\,\cdot\,,\,\cdot\,\rangle_C, \rho)$ be a DFT algebroid over a doubled manifold $(M,\eta)$. The analogue of the generalized Lie derivative from Section~\ref{subsec:Courantbracket},
\begin{align*}
{\boldsymbol\pounds}_c^\dorf := \llbracket c,\,\cdot\,\rrbracket_\dorf \ ,
\end{align*}
for $c\in\mathsf{\Gamma}(C)$, is now only an infinitesimal symmetry of the split signature pseudo-Euclidean vector bundle $(C,\langle\,\cdot\,,\,\cdot\,\rangle_C)$ over $M$, which is the natural notion of symmetry for a DFT algebroid. By Proposition~\ref{atiyahsymplectic}, these symmetries are encoded by the Atiyah algebroid
\begin{align}\label{eq:AtiyahDFT}
0 \longrightarrow \frso(C) \xlongrightarrow{} \IA(C, \braket{\,\cdot\,,\, \cdot\,}_{C}) \longrightarrow TM \longrightarrow 0 \ ,
\end{align}
which contains both infinitesimal diffeomorphisms of $M$ and orientation-preserving changes of orthonormal frame for  $C$.

In this case, closure of the gauge algebra is obstructed by the violation of the Leibniz identity. For this, we introduce the analogue of the Courant bracket from Section~\ref{subsec:Courantbracket},
\begin{align*}
\llbracket c_1,c_2\rrbracket_\cour := \tfrac12\,\big(\llbracket c_1,c_2\rrbracket_\dorf - \llbracket c_2,c_1\rrbracket_\dorf\big) \ ,
\end{align*}
which is called a \emph{C-bracket} of sections $c_1,c_2\in\mathsf{\Gamma}(C)$. By \eqref{eq:Courantmetric}, it is related to the D-bracket in the same way that the Courant bracket is related to the Dorfman bracket of a Courant algebroid:
\begin{align*}
\llbracket c_1,c_2\rrbracket_\dorf = \llbracket c_1,c_2\rrbracket_\cour + \tfrac12\, \cD\langle c_1,c_2\rangle_C \ .
\end{align*}
One can now equivalently characterize the compatibility conditions on the D-bracket of a DFT algebroid (and more generally any metric algebroid) in terms of the C-bracket by a statement completely analogous to Proposition~\ref{prop:CourantalgrboidC} without the Jacobiator identity; the latter is related to the map ${\sf Leib}_\dorf$ introduced  in Remark~\ref{rem:Leibhom}, and so the violation of the Jacobi identity for the C-bracket is controlled not only by the generalized exterior derivative of the Nijenhuis operator ${\sf Nij}_\cour:\mathsf{\Gamma}(C)\times\mathsf{\Gamma}(C)\times\mathsf{\Gamma}(C) \to C^\infty(M)$, but also by the section constraint $\langle \cD f,\cD g\rangle_C=\eta^{-1}(\de f,\de g)=0$. It is in this alternative formulation using the C-bracket that the notion of DFT algebroid was originally introduced in~\cite{Jonke2018}.

One can then write the commutator bracket of generalized Lie derivatives as
\begin{align*}
\big[\DFTLie_{c_1}^\dorf,\DFTLie_{c_2}^\dorf\big]_\circ(c) = \DFTLie_{\llbracket c_1,c_2\rrbracket_\cour}^\dorf(c) + {\sf Jac}_\cour(c_1,c_2,c) - \cD\,{\sf Nij}_\cour(c_1,c_2,c)  \ ,
\end{align*}
for all $c_1,c_2,c\in\mathsf{\Gamma}(C)$. In other words, the gauge algebra of generalized Lie derivatives only closes on the C-bracket upon imposition of the section constraint $\eta^{-1}(\de f,\de g)=0$ (see Remark~\ref{rem:strongconstraint}). As shown by~\cite{Grewcoe:2020gka}, the natural extension of Theorem~\ref{thm:LinftyCourant} to DFT algebroids involves a curving $\ell_0\neq0$ of the underlying $L_\infty$-algebra (a map of degree~$2$ from the ground ring $\IR$), in order to accomodate the non-vanishing map $\rho\circ\cD=\eta^{-1}{}^\sharp\circ\de\neq0$. This then completely characterizes the DFT algebroid, similarly to the case of Courant algebroids.

\begin{theorem}\label{thm:LinftyDFT}
Let $(C,\llbracket\,\cdot\,,\,\cdot\,\rrbracket_\dorf,\langle\,\cdot\,,\,\cdot\,\rangle_C, \rho)$ be a DFT algebroid over a flat doubled manifold $(M,\eta)$. Then there is a curved $L_\infty$-algebra on $L=L_{-1}\oplus L_0\oplus L_2$ with
\begin{align*}
L_{-1} = C^\infty(M) \ , \quad L_0 = \mathsf{\Gamma}(C) \qquad \mbox{and} \qquad L_2 = {\sf Span}_{\IR}(\eta^{-1}) \ ,
\end{align*}
whose non-zero brackets are given by
\begin{align*}
\ell_0(1) = \eta^{-1} \quad & \ \, , \ \, \quad
\ell_1(f) = \cD f \ , \\[4pt]
\ell_2(c_1,c_2) = \llbracket c_1,c_2\rrbracket_\cour \quad & \ \, , \, \ \quad
\ell_2(c_1,f) = \tfrac12\,\langle c_1,\cD f \rangle_C \ , \\[4pt]
\ell_3(c_1,c_2,c_3) &= -{\sf Nij}_\cour(c_1,c_2,c_3) \ , \\[4pt]
\ell_3(\eta^{-1},f,g) = \tfrac12\,\langle\cD f,\cD g\rangle_C \quad & \ \, , \, \ \quad
\ell_3(\eta^{-1},c_1,f) = \tfrac12\,\llbracket c_1,\cD f\rrbracket_\cour - \tfrac14\,\cD\langle c_1,\cD f\rangle_C \ , \\[4pt]
\ell_4(\eta^{-1},c_1,c_2,c_3) &= {\sf Jac}_\cour(c_1,c_2,c_3) - \cD\,{\sf Nij}_\cour(c_1,c_2,c_3) \ , \\[4pt]
\ell_5(\eta^{-1},c_1,c_2,c_3,c_4) &= \tfrac12\, {\sf Alt}_4\big\langle{\sf Jac}_\cour(c_1,c_2,c_3) - \cD\,{\sf Nij}_\cour(c_1,c_2,c_3),c_4\big\rangle_C \ ,
\end{align*}
for all $f,g\in C^\infty(M)$ and $c_1,c_2,c_3,c_4\in\mathsf{\Gamma}(C)$, where ${\sf Alt}_4$ is the alternatization map of degree~$4$.
\end{theorem}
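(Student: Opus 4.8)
The plan is to verify directly that the proposed maps $\ell_0,\dots,\ell_5$ satisfy the full tower of homotopy Jacobi identities $\mathcal{J}_N$ (the components of $Q^2=0$ in Definition~\ref{def:Linfty}), organized according to how many copies of the curving $\ell_0(1)=\eta^{-1}\in L_2$ they contain. Every bracket listed in the statement carries at most one factor of $\eta^{-1}$, and $\ell_0$ produces exactly one such factor, so each $\mathcal{J}_N$ splits cleanly into an $\eta^{-1}$-free piece and a piece with precisely one insertion of $\eta^{-1}$. The $\eta^{-1}$-free pieces will reproduce the relations already familiar from the flat Courant case (Theorem~\ref{thm:LinftyCourant}), except that they now fail to vanish exactly where a DFT algebroid departs from a Courant algebroid; the one-$\eta^{-1}$ pieces are then engineered to cancel that failure. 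The conceptual engine throughout is the symplectic $2$-algebroid description of Theorem~\ref{thm:1-1metric} and Remark~\ref{rem:DHtwisted}: one realizes $C\simeq TM$ inside $(T[1]M\oplus T^*[2]M,\{\,\cdot\,,\,\cdot\,\},\gamma)$ and computes the $\ell_m$ as higher derived brackets with the degree~$3$ Hamiltonian $\gamma$, the curving being forced by the failure of the Maurer--Cartan equation $\{\gamma,\gamma\}\neq0$, equivalently by the non-vanishing map $\rho\circ\cD=\eta^{-1}{}^\sharp\circ\de$.

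First I would compute the $\eta^{-1}$-free parts of each identity. Here $\mathcal{J}_0$ and $\mathcal{J}_1$ vanish trivially, since $\ell_1$ annihilates $L_0$ and $L_2$ and the potential brackets $\ell_2(f,g)$ and $\ell_2(\eta^{-1},c)$ are absent. The $\eta^{-1}$-free part of $\mathcal{J}_2$ on two functions collapses---using metric compatibility \eqref{eq:metric1}, the D-bracket relation \eqref{eq:Courantmetric} and the definition of $\cD$---to the section-constraint defect $\langle\cD f,\cD g\rangle_C$, which in a Courant algebroid vanishes by \eqref{eq:rho0cD} but here does not. Likewise the $\eta^{-1}$-free part of $\mathcal{J}_3$ on three sections of $L_0$ reduces to the Jacobiator defect ${\sf Jac}_\cour-\cD\,{\sf Nij}_\cour$, which vanishes for a Courant algebroid by Proposition~\ref{prop:CourantalgrboidC} but here is nonzero because the Leibniz identity \eqref{eq:Leibnizid} fails (Remark~\ref{rem:Leibhom}). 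These two residual defects, together with their counterparts on mixed inputs, constitute the entire obstruction to flatness.

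Next I would show that the one-$\eta^{-1}$ contributions cancel these defects term by term, the cancellation being built into the definitions of the brackets. The term $\ell_3(\eta^{-1},f,g)=\tfrac12\,\langle\cD f,\cD g\rangle_C$ is exactly what absorbs the section-constraint defect in $\mathcal{J}_2$, and via the first identity of Lemma~\ref{prop:DFTproperties} it equals $\tfrac12\,\eta^{-1}(\de f,\de g)$, exhibiting $\eta^{-1}$ as the inverse-metric bivector whose action on functions produces that defect; the term $\ell_3(\eta^{-1},c_1,f)$ absorbs the mixed defect controlled by the symmetric part of ${\sf hom}_\rho$ (second identity of Lemma~\ref{prop:DFTproperties}); and $\ell_4(\eta^{-1},c_1,c_2,c_3)={\sf Jac}_\cour-\cD\,{\sf Nij}_\cour$ cancels the Jacobiator defect of $\mathcal{J}_3$ by construction. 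The top bracket $\ell_5$, defined as the ${\sf Alt}_4$-alternatization of the pairing of this same defect with a fourth section, is then forced by the one-$\eta^{-1}$ part of $\mathcal{J}_4$, whose residual is governed by the symmetric-${\sf Leib}_\dorf$ identity (third identity of Lemma~\ref{prop:DFTproperties}).

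The hard part will be the identities $\mathcal{J}_3$ and $\mathcal{J}_4$ carrying a single $\eta^{-1}$, which mix $\ell_2,\ell_3,\ell_4,\ell_5$. Because ${\sf Jac}_\cour$, ${\sf Nij}_\cour$ and ${\sf hom}_\rho$ are not $C^\infty(M)$-tensorial (Remark~\ref{rem:Leibhom}), one must track their failures of linearity carefully and show that these cancel among the several shuffle terms, with the correct Koszul signs and the precise coefficients; this is the routine-but-delicate bookkeeping where the alternatization in $\ell_5$ is essential. Here the flatness hypothesis on $(M,\eta)$ enters decisively: it guarantees that $\eta^{-1}$ is genuinely constant, so that $L_2={\sf Span}_\IR(\eta^{-1})$ is one-dimensional and closed under all brackets, that $\ell_0(1)$ is annihilated by $\ell_1$, and that no curvature contributions obstruct the identities---which is exactly why the statement is restricted to flat doubled manifolds. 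A cleaner route that sidesteps the combinatorics is to carry out the entire verification inside the symplectic $2$-algebroid, where all $\ell_m$ become higher derived brackets with $\gamma$, the homotopy Jacobi identities reduce to the graded Jacobi identity for $\{\,\cdot\,,\,\cdot\,\}$, and flatness ensures that the Maurer--Cartan obstruction $\{\gamma,\gamma\}$ contributes only through the constant curving $\eta^{-1}$, the sole source of non-flatness of the $L_\infty$-structure.
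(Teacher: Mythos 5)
The paper itself does not prove this statement: it is quoted from the reference of Grewcoe and Jonke cited just above the theorem, so there is no in-paper argument to compare against line by line. Your overall strategy --- direct verification of the homotopy Jacobi identities, organized by the number of $\eta^{-1}$ insertions, with the $\eta^{-1}$-free parts reproducing the Courant-algebroid relations up to the two defects (the section-constraint violation $\langle\cD f,\cD g\rangle_C\neq0$ and the Jacobiator defect ${\sf Jac}_\cour-\cD\,{\sf Nij}_\cour\neq0$ controlled by Lemma~\ref{prop:DFTproperties} and Remark~\ref{rem:Leibhom}) and the one-$\eta^{-1}$ brackets engineered to absorb them --- is the right one, and is essentially how the cited source proceeds.

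However, as written there is a genuine gap, and it sits exactly where the content of the theorem lies. The statement is not merely that \emph{some} curved extension exists, but that the listed brackets, with those precise coefficients ($\tfrac12$ in $\ell_2(c,f)$ and $\ell_3(\eta^{-1},f,g)$, the combination $\tfrac12\,\llbracket c,\cD f\rrbracket_\cour-\tfrac14\,\cD\langle c,\cD f\rangle_C$, the ${\sf Alt}_4$ in $\ell_5$), close the algebra \emph{and that no further non-zero brackets are required} --- no $\ell_2(\eta^{-1},\,\cdot\,)$, no $\ell_4(\eta^{-1},c_1,c_2,f)$, no brackets with two $\eta^{-1}$ inputs, no $\ell_{\geq6}$. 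You defer precisely the identities ($\mathcal{J}_3$, $\mathcal{J}_4$, $\mathcal{J}_5$ carrying one $\eta^{-1}$, and the identities evaluated on inputs containing $\eta^{-1}$) that would establish this, labelling them ``routine-but-delicate bookkeeping''; until that bookkeeping is done one has not shown that the residuals cancel rather than forcing additional correction terms. Moreover, the proposed shortcut of working entirely inside the symplectic $2$-algebroid does not go through in the form you state it: the standard higher-derived-bracket construction (as in the flat case of Theorem~\ref{thm:LinftyCourant}) produces an $L_\infty$-algebra \emph{because} $\{\gamma,\gamma\}=0$ reduces the homotopy Jacobi identities to the graded Jacobi identity; here $\{\gamma,\gamma\}\neq0$ is exactly the obstruction, $L_2={\sf Span}_\IR(\eta^{-1})$ is an extra generator grafted on by hand rather than a space of degree-$(n-1)$ functions on $\cM$, and showing that the failure of the Maurer--Cartan equation contributes \emph{only} through the constant curving is equivalent to the combinatorial verification you are trying to avoid. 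Finally, the role of flatness of $(M,\eta)$ is asserted rather than located: $\ell_1(\ell_0(1))=0$ holds simply because $\ell_1$ vanishes on $L_2$ by declaration, whereas flatness is actually needed to ensure that $\eta^{-1}$ is a genuinely constant element closed under the remaining brackets, so this point should be made precise rather than invoked as a blanket guarantee.
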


\medskip

\subsection{AKSZ Construction of Doubled Sigma-Models} ~\\[5pt]
\label{subsec:AKSZDFT}
Using Theorem~\ref{thm:1-1metric}, it is easy to characterize a DFT algebroid $(C,\llbracket\,\cdot\,,\,\cdot\,\rrbracket_\dorf,\langle\,\cdot\,,\,\cdot\,\rangle_C, \rho)$ as a symplectic $2$-algebroid $(C[1]\oplus T^*[2]M,\{\,\cdot\,,\,\cdot\,\},\gamma)$ over a doubled manifold $(M,\eta)$. For example, from Lemma~\ref{prop:DFTproperties} and the derived bracket constructions of Section~\ref{subsec:symplectic2alg}, it can be characterized by the relation
\begin{align}\label{eq:sectionderived}
\{\{\{\gamma,\gamma\},f\},g\} + \{\{\{\gamma,\gamma\},g\},f\} &= 2\,\{\{\gamma,f\},\{\gamma,g\}\} \nonumber \\[4pt]
&= 2\, \langle\boldsymbol{\gamma}_1(\de f),\boldsymbol{\gamma}_1(\de g)\rangle_C \nonumber \\[4pt]
&= 2\,\eta^{-1}(\de f,\de g) \ ,
\end{align}
for all $f,g\in C^\infty(M)$, where we used the graded Jacobi identity for the Poisson bracket along with $\boldsymbol{\gamma}_1(\de f) = \{\gamma,f\} = \cD f$. Similarly, the other two identities of Lemma~\ref{prop:DFTproperties} can be written in terms of derived brackets using Remark~\ref{rem:Leibhom}. However, as $\{\gamma,\gamma\}\neq0$, a DFT algebroid does not correspond to a dg-manifold, and so we cannot apply AKSZ theory directly to write down a topological sigma-model whose BV formalism can be used to quantize the DFT algebroid. Note that all ingredients of the BV formalism, including the antibracket, are present except for the classical master equation.

In analogy to the AKSZ sigma-models of Section~\ref{subsec:Courant}, a three-dimensional topological sigma-model was associated to any DFT algebroid in~\cite{Jonke2018} by pulling back the fields of the Courant sigma-model corresponding to the large Courant algebroid, using the construction of Section~\ref{sec:metricfromlarge}. The graded symplectic geometry viewpoint of this construction was originally presented in~\cite{Samann2018}, and applied in~\cite{Kokenyesi:2018xgj} to AKSZ theory. The idea behind the construction is simple and is based on the way in which we motivated the definition of a DFT algebroid: Start with a Courant algebroid $(E,[\,\cdot\,,\,\cdot\,]_E,\langle\,\cdot\,,\,\cdot\,\rangle_E,\rho)$ of rank $4d$ over the doubled manifold $(M,\eta)$, introduce a generalized split signature metric on $E$, and then restrict the structure maps to obtain metric algebroid structures on the corresponding eigenbundles, as in Proposition~\ref{prop:DFTfromLarge}. This can be rephrased in the language of symplectic $2$-algebroids using Theorems~\ref{thm:QP2Courant} and~\ref{thm:1-1metric}. We will illustrate this in the simplest setting of the large Courant algebroid of Proposition~\ref{cor:DFTfromLarge} over a flat doubled manifold~$(M,\eta)$. 

The symplectic Lie $2$-algebroid corresponding to the large Courant algebroid over $M$ is, according to Example~\ref{ex:standardCourant} and Theorem~\ref{thm:QP2Courant}, based on the degree~$2$ manifold $\cM=T^*[2]T[1]M$ with coordinates $(x^I,\zeta^I,\chi_I,\xi_I)$ of degrees $(0,1,1,2)$. The degree~$2$ symplectic $2$-form \eqref{eq:omegaCourant} in this case is
\begin{align*}
\omega = \de \xi_I\wedge\de x^I + \de\chi_I\wedge\de\zeta^I \ ,
\end{align*}
while the degree~$3$ Hamiltonian \eqref{eq:gammaCourant} becomes
\begin{align*}
\gamma = \sqrt2 \ \xi_I\,\zeta^I \ ,
\end{align*}
which we have rescaled for convenience. 

Using the spit signature metric $\eta$ on the degree~$0$ body $M$, we restrict the tangent space coordinates to the diagonal using the coordinates
\begin{align*}
\tau_\pm^I = \tfrac1{\sqrt2}\,\big(\zeta^I \pm \eta^{IJ}\,\chi_J\big) \ ,
\end{align*}
where $\eta^{-1} = \frac12\,\eta^{IJ}\,\frac\partial{\partial x^I}\odot\frac\partial{\partial x^J}$. The submanifold of $\cM$ defined by the zero locus $\tau_-^I=0$ yields a symplectic $2$-algebroid whose underlying degree~$2$ manifold is $\cM_+:= T[1]M\oplus T^*[2]M$ with coordinates $(x^I,\tau_+^I,\xi_I)$ of degrees $(0,1,2)$, and with the pullbacks of the symplectic structure $\omega$ and Hamiltonian $\gamma$ to $\cM_+$ given by
\begin{align*}
\omega_+ = \de\xi_I\wedge\de x^I + \tfrac12\,\eta_{IJ}\,\de\tau_+^I\wedge \de\tau_+^J \qquad \mbox{and} \qquad \gamma_+ = \xi_I\,\tau_+^I \ .
\end{align*}
This Hamiltonian is not integrable, $\{\gamma_+,\gamma_+\} = \xi_I\,\eta^{IJ}\,\xi_J\neq0$, so $(\cM_+,\gamma_+)$ is not a dg-manifold. 

Under the correspondence of Theorem~\ref{thm:1-1metric}, the symplectic $2$-algebroid $(\cM_+,\omega_+,\gamma_+)$ can be identified with the metric algebroid $(TM,\llbracket\,\cdot\,,\,\cdot\,\rrbracket_\dorf^\eta,\eta,\unit_{TM})$ of Example~\ref{ex:metricalgeeta}. By Proposition~\ref{cor:DFTfromLarge}, this is isomorphic to the DFT algebroid $(C_+,\llbracket\,\cdot\,,\,\cdot\,\rrbracket_\dorf^+,\langle\,\cdot\,,\,\cdot\,\rangle_{C_+},\rho_+)$ on $(M,\eta)$ (for $b=0$), with corresponding degree~$2$ manifold $C_+[1]\oplus T^*[2]M$. This construction can be generalized along the lines of~\cite{Kokenyesi:2018xgj} to arbitrary anchor maps $\rho:\IT M\to TM$ and to arbitrary compatible twists of the Dorfman bracket on the generalized tangent bundle $\IT M$.

We choose the symplectic potential
\begin{align*}
\vartheta_+ = \xi_I \, \de x^I + \tfrac12\,\tau_+^I \,\eta_{IJ}\, \de\tau_+^J \ ,
\end{align*}
which is the pullback to $\cM_+$ of the Liouville $1$-form \eqref{eq:LiouvilleCourant} for the case of the large Courant algebroid.
We now pull back the degree~$0$ maps $\hat X:T[1]\Sigma_3\to\cM$, on which the AKSZ action functional \eqref{eq:AKSZ3daction} for the large Courant sigma-model is defined, to the submanifold $\tau_-^I=0$ to give maps $\hat X_+:T[1]\Sigma_3\to\cM_+$. Composing these with the isomorphism $\cM_+\to C_+[1]\oplus T^*[2]M$, the pullback of \eqref{eq:AKSZ3daction} to this mapping subspace then yields the action functional
\begin{align}\label{eq:DFTaction}
S_+(X,A_+,F) = \int_{\Sigma_3} \, \Big( & \langle F,\de X\rangle + \frac12\,\langle A_+,\de A_+\rangle_{C_+} \nonumber \\ & \quad \, - \langle F,(\rho_+\circ X)A_+\rangle + \frac1{3!} \,\langle A_+,\llbracket A_+,A_+\rrbracket_\dorf^+\rangle_{C_+} \Big) \ ,
\end{align}
where $X:\Sigma_3\to M$ is a smooth map from an oriented compact $3$-manifold $\Sigma_3$, while $A_+\in\mathsf{\Gamma}(T^*\Sigma_3\otimes X^*C_+)$ and $F\in\mathsf{\Gamma}(\midwedge^2T^*\Sigma_3\otimes X^*T^*M)$, with the same conventions as in \eqref{eq:AKSZ3daction}. This defines a canonical topological sigma-model associated to the DFT algebroid $(C_+,\llbracket\,\cdot\,,\,\cdot\,\rrbracket_\dorf^+,\langle\,\cdot\,,\,\cdot\,\rangle_{C_+},\rho_+)$ on the doubled manifold $(M,\eta)$, which we call a \emph{doubled sigma-model}. Although it is similar in form to the Courant sigma-model \eqref{eq:AKSZ3daction}, it is crucially different in many respects; in particular, it does not satisfy the BV master equation, so it is not an AKSZ sigma-model, nor can it be extended to define a BV quantized sigma-model.

\begin{remark}\label{rem:DFTgaugeinv}
Clearly the action functional \eqref{eq:DFTaction} can be written down for any DFT algebroid $(C,\llbracket\,\cdot\,,\,\cdot\,\rrbracket_\dorf,\langle\,\cdot\,,\,\cdot\,\rangle_C,\rho)$ over any doubled manifold $(M,\eta)$, not just the special instance in which we have derived it, though in the general case it cannot be derived from AKSZ theory. The reducible open gauge symmetries of \eqref{eq:DFTaction}, which are encoded through the Atiyah algebroid \eqref{eq:AtiyahDFT}, have been studied in detail by~\cite{Jonke2018,Chatzistavrakidis:2019rpp} by projecting the BRST symmetry of the large Courant sigma-model. It is found that gauge invariance and closure of the gauge algebra imply the analogue of the section constraint together with the axioms for a DFT algebroid. For a flat doubled manifold $(M,\eta)$, these give the Bianchi identities for the fluxes $T\in\Omega^3(M)$ defined by
\begin{align*}
T(X,Y,Z) = \eta(\llbracket X,Y\rrbracket_\dorf,Z)
\end{align*}
for doubled vectors $X,Y,Z\in\mathsf{\Gamma}(TM)$, which agrees with the flux formulation of double field theory~\cite{Geissbuhler2013}. In particular, the doubled sigma-model gives a unified description of geometric and non-geometric fluxes, whilst precluding as classical solutions several physically relevant string backgrounds which do not satisfy the section constraint~\cite{Jonke2018}.
\end{remark}

\begin{remark}\label{rem:DFTAKSZLinfty}
Gauge invariance of \eqref{eq:DFTaction} can also be understood in the AKSZ formalism and the associated (local) curved $L_\infty$-algebra from Theorem~\ref{thm:LinftyDFT}. By \eqref{eq:sectionderived} a natural constraint on the symplectic $2$-algebroid $(C[1]\oplus T^*[2]M,\{\,\cdot\,,\,\cdot\,\},\gamma\}$ corresponding to a DFT algebroid which imposes the section constraint is given by
\begin{align*}
\{\{\{\gamma,\gamma\},f\},g\} = 0 \ ,
\end{align*}
for all degree~$0$ functions $f,g\in C^\infty(M)$. This is a slight weakening of the defining condition of a symplectic almost Lie $2$-algebroid from Definition~\ref{def:almostLie2}, and it coincides with the coordinate-free formulation of the section constraint of double field theory in graded geometry originally presented by~\cite{Deser2015} (see also~\cite{Heller2016}); an alternative derived bracket formulation of a DFT algebroid is found in~\cite{Carow-Watamura:2020xij}. 

On imposing the section constraint, all brackets involving $\eta^{-1}$ in Theorem~\ref{thm:LinftyDFT} vanish and the curving $\ell_0$ may be dropped. The remaining brackets then govern the infinitesimal gauge symmetries of the DFT algebroid with the section constraint, and are formally the same as the flat $L_\infty$-algebra of a Courant algebroid from Theorem~\ref{thm:LinftyCourant}, as formulated originally in the graded geometry framework by ~\cite{Samann2018} as the gauge algebra underlying double field theory (see also~\cite{Hohm:2017pnh}). This is in harmony with the expectation that a DFT algebroid becomes a Courant algebroid when the section constraint is imposed.  We will discuss the section constraint further, as well as explicit solutions of the section constraint, in Section~\ref{sec:applications} below from a classical geometric perspective. The fact that a DFT algebroid can be characterized by a curved $L_\infty$-algebra suggests that it may be possible to formulate it as a dg-manifold~\cite{Grewcoe:2020gka}, though not necessarily one with a compatible symplectic structure; this perspective may allow for an AKSZ-type formulation of the doubled sigma-model as an unconstrained gauge theory which admits a larger set of classical solutions, as well as an extension of BV quantization.
\end{remark}

\section{Algebroids and Double Field Theory}
\label{sec:applications}
In this final section we apply the mathematical framework of this paper to a rigorous study of some kinematical issues in double field theory, including how it reduces to supergravity (in the NS--NS sector) and how T-duality is realized as a manifest symmetry in the doubled geometry formalism.

\medskip

\subsection{Local Double Field Theory}~\\[5pt]
\label{subsec:localDFT}
The standard local treatment of double field theory in the string theory literature~\cite{Siegel1993a,Siegel1993b,HullZw2009,Hull:2009zb,hohmhz,hullzw} is recovered in the case when $(M,K,\eta)$ is a flat para-K\"ahler manifold, and the DFT algebroid is the corresponding canonical metric algebroid $(TM,\llbracket\,\cdot\,,\,\cdot\,\rrbracket_\dorf^K,\eta,\unit_{TM})$~\cite{Vaisman2012}. In this case, the doubled manifold is locally a product of two $d$-dimensional subspaces $M=\cQ\times\tilde\cQ$. We can then write local coordinates $x=(x^I)$, $I=1,\dots,2d$ on $M$ which are adapted to the foliations in the splitting $TM=L_+\oplus L_-$ into integrable distributions $L_+=T\cQ$ and $L_-=T\tilde\cQ$, i.e. $x=(x^I)=(x^i,\tilde x_i)$, $i=1,\dots,d$. With respect to this splitting, the split signature metric has the matrix form
\begin{align*}
\eta = (\eta_{IJ}) = \bigg( \begin{matrix} 0 & \unit \\ \unit & 0 \end{matrix} \bigg) \ ,
\end{align*}
and we write its inverse as $\eta^{-1}=(\eta^{IJ})$ (with the same matrix form). The Levi-Civita connection $\nabla^\LC$ is trivial, and the local expression for the canonical D-bracket on two vector fields $X=X^I\,\frac\partial{\partial x^I}$ and $Y=Y^I\,\frac\partial{\partial x^I}$ is given by
\begin{align*}
\llbracket X,Y\rrbracket_\dorf^K = \Big(X^I\,\frac{\partial Y^J}{\partial x^I} - Y^I\,\frac{\partial X^J}{\partial x^I} + Y^I\, \eta_{IL} \, \frac{\partial X^L}{\partial x^M} \, \eta^{MJ} \Big) \, \frac\partial{\partial x^J} \ .
\end{align*}
The canonical C-bracket
\begin{align*}
\llbracket X,Y\rrbracket_\cour^K = \tfrac12\,\big( \llbracket X,Y\rrbracket_\dorf^K - \llbracket Y,X\rrbracket_\dorf^K \big)
\end{align*}
is thus the standard C-bracket of double field theory~\cite{Hull:2009zb}.

In this sense, standard double field theory is the flat space limit of Born geometry: in the local adapted coordinates, the Born metric is given by
\begin{align*}
\cH_0(g) = \bigg( \begin{matrix}
g & 0 \\ 0 & g^{-1}
\end{matrix} \bigg) \ ,
\end{align*}
for a Riemannian metric $g$. Then $B_+$-transformations by $2$-forms $b$ give the standard generalized metric\begin{align}\label{eq:localgenmetric}
\cH(g,b) = \bigg( \begin{matrix}
g -b\,g^{-1}\,b & b\,g^{-1} \\ -g^{-1}\,b & g^{-1}
\end{matrix} \bigg)
\end{align}
of double field theory~\cite{hullzw}. The D-bracket determines the infinitesimal gauge transformations of $\cH$ via the generalized Lie derivative.

The local form of the section constraint reads
\begin{align*}
\eta(\cD^{\tt can} f,\cD^{\tt can} g) = \eta^{-1}(\de f,\de g) = \frac{\partial f}{\partial x^I} \, \eta^{IJ} \, \frac{\partial g}{\partial x^J} = 0 \ .
\end{align*}
Solutions of this constraint select \emph{polarizations}, which are the $d$-dimensional `physical' null submanifolds of the doubled manifold $(M,\eta)$; these are also called \emph{duality frames}. Double field theory then reduces to supergravity in different duality frames, which are related to one another by T-duality transformations. For example, in the `supergravity frame' the section constraint is solved by choosing para-holomorphic functions
\begin{align*}
\frac{\partial f}{\partial \tilde x_i} = 0 \ ,
\end{align*}
after which the C-bracket reduces to the local form of the standard Courant bracket on the generalized tangent bundle $\IT\cQ$. With this solution of the section constraint, the metric \eqref{eq:localgenmetric} becomes the standard generalized metric of generalized geometry (cf. Example~\ref{ex:genmetricgengeom}).

\medskip

\subsection{Global Aspects of Double Field Theory} ~\\[5pt]
\label{subsec:globalDFT}
Para-Hermitian geometry arises as a framework for doubled geometry when one analyses the implications of the section constraint 
\begin{align*}
\eta^{-1}(\de f,\de g) = 0
\end{align*}
from a global perspective.
This can be solved by picking a maximally $\eta$-isotropic distribution $L_-\subset TM$ of the doubled manifold $(M,\eta)$ which is integrable. Then the section constraint is solved by foliated tensors; on functions this condition reads
\begin{align*}
\pounds_Xf=\iota_X\de f=0
\end{align*}
for $f\in C^\infty(M)$ and $X\in\mathsf{\Gamma}(L_-)$. With $L_-=T\cF$, this polarization selects the physical spacetime as a quotient $\cQ=M/\cF$ by the action on the leaves of the induced foliation $\cF$; foliated tensors are then those fields which are compatible with the surjective submersion from $M$ to the leaf space $\cQ$. The fact that the physical spacetime is a quotient, rather than a subspace, of the doubled manifold $(M,\eta)$ has been appreciated many times before in the double field theory literature, see e.g.~\cite{Hull2009,Vaisman2012,Park:2013mpa,Lee:2015xga}.

To put this into the context of the present paper, by a \emph{polarization} of a doubled manifold $(M,\eta)$ we shall mean a choice of almost para-Hermitian structure $(K,\eta)$ on $M$. A central mathematical problem in understanding how the kinematics of double field theory reduces to supergravity, under imposition of the section constraint, is to understand how doubled geometry reduces to generalized geometry. In generalized geometry~\cite{gualtieri:tesi,Hitchin2011}, a generalized vector is a section $X+\alpha$ of the generalized tangent bundle $\IT\cQ=T\cQ\oplus T^*\cQ$, with $X\in\mathsf{\Gamma}(T\cQ)$ a vector field and $\alpha\in\Omega^1(\cQ)$ a $1$-form on a manifold $\cQ$. In doubled geometry, a generalized vector is simply a vector field $X\in\mathsf{\Gamma}(TM)$ on the doubled manifold $(M,\eta)$. 

For this, we assume that the eigenbundle $L_-$ of the almost
 para-Hermitian manifold is involutive, i.e. it admits integral
 manifolds given by the leaves of a regular foliation $\cF.$ Given the projection map $P_-:TM\to TM$ of Definition~\ref{def:canconn} and the split signature metric $\eta$, define the \emph{$P_-$-projected} canonical D-bracket $\llbracket\,\cdot\,,\,\cdot\,\rrbracket_-$ by the formula
\begin{align*}
\eta(\llbracket X,Y\rrbracket_-,Z) = \eta(\nabla_{P_-(X)}^{\tt can}Y - \nabla_{P_-(Y)}^{\tt can}X,Z) + \eta(\nabla_{P_-(Z)}^{\tt can}X,Y) \ ,
\end{align*}
where $\nabla^{\tt can}$ is the canonical connection of the almost para-Hermitian manifold $(M,K,\eta)$. Then  this defines a Courant algebroid structure on the tangent bundle $TM$ by~\cite[Proposition~3.13]{Svoboda2018} (see also~\cite[Theorem~5.1.3]{Svoboda:2020msh}). In other words, this `projects' the canonical metric algebroid $(TM,\llbracket\,\cdot\,,\,\cdot\,\rrbracket_\dorf^K,\eta,\unit_{TM})$ to the Courant algebroid $(TM,\llbracket\,\cdot\,,\,\cdot\,\rrbracket_-,\eta,P_-)$ on the doubled manifold $(M,\eta)$. By Definition~\ref{def:DFTalgebroid}, this can be straightforwardly generalized to any DFT algebroid over a foliated almost para-Hermitian manifold $(M,K,\eta)$~\cite[Proposition~5.20]{Jonke2018}. 
 
Having established that a DFT algebroid can be projected to a Courant algebroid on $(M,\eta)$ when the section constraint is imposed, let us now examine what becomes of this Courant algebroid on an explicit solution of the section constraints.
We can construct the generalized tangent bundle $\mathbb{T}\cS= T\cS
 \oplus T^*\cS$ on any leaf $\cS$ of the foliation $\cF$. 
  Then there is a morphism from
 $\mathbb{T}\cS$ to $TM$ covering the inclusion
 $\cS\hookrightarrow M$, which is fibrewise bijective and is induced at the level of sections by the split signature metric $\eta$ through
\begin{align}\label{eq:pullbackmorphism}
{\tt p}_-:\mathsf{\Gamma}(\mathbb{T}\cS) \longrightarrow \mathsf{\Gamma}(TM) 
\ , \quad X+\alpha \longmapsto {\tt
  p}_-(X+\alpha)=X+\eta^{-1}{}^\sharp(\alpha) \ . 
\end{align}
By~\cite[Proposition~3.13]{Svoboda2018}, this defines a metric algebroid morphism from the standard Courant algebroid $(\IT\cS,[\,\cdot\,,\,\cdot\,]_\dorf,\langle\,\cdot\,,\,\cdot\,\rangle_{\IT\cS},\rho)$ on $\cS$ (see Example~\ref{ex:standardCourant}) to the Courant algebroid $(TM,\llbracket\,\cdot\,,\,\cdot\,\rrbracket_-,\eta,P_-)$ on $M$, that is,
\begin{align*}
{\tt p}_-\circ [\,\cdot\,,\,\cdot\,]_\dorf =  \llbracket\,\cdot\,,\,\cdot\,\rrbracket_-\circ({\tt p}_-\times{\tt p}_-) \ , \quad \langle\,\cdot\,,\,\cdot\,\rangle_{\IT\cS} = \eta\circ({\tt p}_-\times{\tt p}_-) \qquad \mbox{and} \qquad \rho = P_-\circ{\tt p}_- \ .
\end{align*}
Altogether, this relates the canonical metric algebroid $(TM,\llbracket\,\cdot\,,\,\cdot\,\rrbracket_\dorf^K,\eta,\unit_{TM})$ to the standard Courant algebroid on any leaf $\cS$ of the foliation $\cF$. Again, this construction can be straightforwardly generalized to any DFT algebroid over a foliated almost para-Hermitian manifold $(M,K,\eta)$~\cite[Proposition~5.27]{Jonke2018}. In this sense, doubled geometry recovers generalized geometry.

However, the relation to the generalized geometry of the physical spacetime, i.e. the standard Courant algebroid on $\IT\cQ$, is not so transparent in this framework. 
Let $\cQ=M/\cF$ be the leaf space of the foliation $\cF$ of $M$ defined by $L_-=T\cF$, and denote by $\sfq:M\to\cQ$ the quotient map. Given the splitting $TM=L_+\oplus L_-$ induced by $K,$ the vector bundle morphism $\de \sfq: TM \rightarrow T\cQ,$ covering $\sfq,$ is fibrewise bijective if restricted to $L_+,$ i.e. $\de \sfq \rvert_{L_+}: L_+ \rightarrow T\cQ$ is a fibrewise isomorphism. Hence the $C^\infty(M)$-module $\mathsf{\Gamma}(L_+)$ is isomorphic to the $C^\infty(\cQ)$-module $\mathsf{\Gamma}(T\cQ).$ The metric $\eta$ induces a vector bundle isomorphism $L_-\to L_+^*$ defined by $X\mapsto \eta^\flat(X)$, because $L_\pm$ are maximally isotropic with respect to $\eta$. Making further statements in this direction is part of the general open problem of reducing metric algebroids to Courant algebroids in a suitable sense (see Remark~\ref{rem:strongconstraint}).

\begin{remark}\label{rem:involutivealgebroid}
This solution of the section constraint of double field theory can be interpreted in terms of global objects as follows. Any involutive distribution $L_-=T\cF$ is naturally a Lie algebroid over $M$ with the restriction of the Lie bracket of vector fields $[\,\cdot\,,\,\cdot\,]_{L_-} = [\,\cdot\,,\,\cdot\,]_{TM}\big\rvert_{\mathsf{\Gamma}(L_-)\times\mathsf{\Gamma}(L_-)}$ and the inclusion of the subbundle $\ell_-:L_-\hookrightarrow TM$ as anchor map; this defines a Lie subalgebroid of the tangent Lie algebroid $(TM,[\,\cdot\,,\,\cdot\,]_{TM},\unit_{TM})$. The Lie algebroid $(L_-,[\,\cdot\,,\,\cdot\,]_{L_-},\ell_-)$ is naturally integrated by the holonomy groupoid ${\sf Hol}(\cF)\rightrightarrows M$ of the foliation, which provides a presentation of the leaf space as the quotient $\cQ=M/\cF$~\cite{Mrcun2003}. When $\cQ$ is a manifold, this is a Lie subgroupoid of the pair groupoid $M\times M\rightrightarrows M$ which integrates the tangent Lie algebroid on $M$. In this sense, the holonomy groupoid can be viewed as a smooth replacement for the leaf space.
\end{remark}

\begin{remark}\label{rem:otherstrong}
There are several global treatments of doubled geometry available in the literature which offer complementary interpretations of the section constraint of double field theory. Here we mention a few that are related to the perspectives offered in the present paper:
\begin{itemize}

\item On any para-Hermitian manifold $(M,K,\eta)$, the eigenbundles $L_\pm$ of $K$ naturally define a pair of Lie algebroids on $M$ by Remark~\ref{rem:involutivealgebroid}. Then the section constraint can be interpreted as a compatibility condition on a pair of D-structures $(L_+,L_-)$ in the canonical metric algebroid $(TM,\llbracket\,\cdot\,,\,\cdot\,\rrbracket_\dorf^K,\eta,\unit_{TM})$, which implies that the tangent bundle $TM$ becomes a Courant algebroid on $M$~\cite{Mori2019}. In other words, the canonical metric algebroid is composed of a double of Lie algebroids, analogous to the Drinfel'd double of a pair of Lie algebras (see also~\cite{Mori:2020yih}).

\item A global formulation of doubled geometry based on higher geometry appears in~\cite{Alfonsi:2019ggg} within the framework of double field theory on the total (simplicial) space of a bundle gerbe, regarded as a $\mathsf{U}(1)$-principal $2$-bundle (see also the contribution~\cite{Alfonsi:2021uwh} to this special issue). In this setting the section constraint is interpreted as invariance under the principal $\mathsf{BU}(1)$-action, and para-Hermitian manifolds appear as an atlas for the bundle gerbe. This framework clarifies and makes precise previous patching constructions using finite gauge transformations in double field theory~\cite{Park:2013mpa,Hohm2013,Berman:2014jba,Hull:2014mxa,Howe:2016ggg}.

\item A \emph{rack} is a global group-like object whose infinitesimal counterpart is a Leibniz-Loday algebra. 
A global object integrating a metric algebroid, called a pre-rackoid, has been suggested by~\cite{Ikeda:2020lxz}. This is a weakening of the notion of a rackoid, which is a groupoid-like generalization of a rack, and which is the global structure corresponding to a Leibniz-Loday algebroid that can be used to integrate Courant algebroids. Explicit realizations of pre-rackoids are given in~\cite{Ikeda:2020lxz} for the canonical metric algebroid over any para-Hermitian manifold, which reduce to a rackoid when the section constraint of double field theory is imposed; these pre-rackoids can also be implemented in the corresponding topological doubled sigma-model of Section~\ref{subsec:AKSZDFT}. These structures are relevant to the understanding of finite gauge transformations in double field theory~\cite{Park:2013mpa,Hohm2013,Berman:2014jba,Hull:2014mxa,Howe:2016ggg}.

\item On any foliated flat almost para-Hermitian manifold $(M,K,\eta)$, solutions of the section constraint can be understood~\cite{Grewcoe:2020gka} as an $L_\infty$-morphism from the curved $L_\infty$-algebra of the canonical metric algebroid over $(M,K,\eta)$, given by Theorem~\ref{thm:LinftyDFT}, to the flat $L_\infty$-algebra of the standard Courant algebroid over the leaf space $\cQ$ of the foliation, given by Theorem~\ref{thm:LinftyCourant}. 

\end{itemize}
\end{remark}

\medskip

\subsection{Recovering the Physical Background Fields} ~\\[5pt]
A central problem in understanding the global formulation of the dynamics of double field theory is to investigate the quotient $\cQ=M/\cF$ for a foliated almost para-Hermitian manifold $(M, K, \eta)$ endowed with a generalized metric $\cH.$ We will do this by first recalling a more general result due to Kotov and Strobl~\cite{Kotov2014, Kotov2018}. 

Let $(M,\cH)$ be any Riemannian manifold. Let $( A, [\,\cdot\,,\, \cdot\,]_{ A}, \sfa)$ be a Lie algebroid over $M$ endowed with a linear connection $\nabla$, and define the representation of $ A$ on $TM$ by the flat $ A$-connection 
\be \nonumber
\prescript{\tau}{}{\nabla} \colon \mathsf{\Gamma}( A) \times \mathsf{\Gamma}(TM) \longrightarrow \mathsf{\Gamma}(TM)
\ee
given by 
\be\nonumber
\prescript{\tau}{}{\nabla}_a X \coloneqq [\sfa(a), X]_{TM}  + \sfa(\nabla_X a) \ , 
\ee
for all $a \in \mathsf{\Gamma}( A)$ and $X \in \mathsf{\Gamma}(TM).$

\begin{definition}
The triple $( A, \nabla, \cH)$ is a \emph{Killing Lie algebroid} if 
\be \label{compAlgH}
\prescript{\tau}{}{\nabla}\cH=0 \ .
\ee
\end{definition}

The Killing vector fields for $\cH$ are given by $X= \sfa(a)$ for any covariantly constant section $a \in \mathsf{\Gamma}( A).$ Killing Lie algebroids are related to quotients by

\begin{proposition}\label{prop:Killingsubmersion}
Let $( A, [\,\cdot\,,\, \cdot\,]_{ A}, \sfa)$ be a Lie algebroid over a Riemannian manifold $(M, \cH)$, endowed with a linear conection $\nabla$, whose anchor map $\sfa$ is injective and has constant rank, so that its image $\Im(\sfa)$ defines a regular foliation $\cF$ of $M.$ Then $( A, \nabla, \cH)$ is a Killing Lie algebroid if and only if there is a Riemannian submersion
\be \nonumber
\sfq \colon (M, \cH) \longrightarrow (\cQ, g)
\ee
where $\cQ=M/\cF$ is the leaf space of the foliation $\cF$ and $g$ is a Riemannian metric on $\cQ.$
\end{proposition}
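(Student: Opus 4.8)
The plan is to reduce everything to the geometry of the foliation $\cF$ defined by $\Im(\sfa)$ and to identify the Killing condition with the statement that $\cH$ is a bundle-like metric for $\cF$. First I would use that $\sfa$ is injective of constant rank to identify $A$ with the involutive subbundle $\Im(\sfa)=T\cF\subset TM$, so that $A$ becomes the foliation Lie algebroid with inclusion anchor and $\nabla$ a linear connection on $T\cF$. Writing $H=(T\cF)^\perp$ for the $\cH$-orthogonal distribution and $g_T$ for the induced transverse metric on the normal bundle $N=TM/T\cF\cong H$, the target is the classical fact that, when the leaf space $\cQ=M/\cF$ is a smooth manifold with projection $\sfq$, a Riemannian metric $g$ making $\sfq\colon(M,\cH)\to(\cQ,g)$ a Riemannian submersion exists (and is unique) if and only if $g_T$ is holonomy invariant, i.e. $\cF$ is a Riemannian foliation.

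The key structural observation is that the flat $A$-connection $\prescript{\tau}{}{\nabla}$ preserves $\Im(\sfa)$: since $\sfa$ is a bracket morphism one computes $\prescript{\tau}{}{\nabla}_a\sfa(b)=\sfa\big([a,b]_A+\nabla_{\sfa(b)}a\big)\in\mathsf{\Gamma}(T\cF)$, so $\prescript{\tau}{}{\nabla}$ descends to a connection on $N$. On $N$ the correction term $\sfa(\nabla_X a)$ lies in $T\cF$ and hence drops out, so the descended connection is the canonical Bott connection, sending the class of $X$ to the class of $[\sfa(a),X]_{TM}$; in particular it is independent of the auxiliary choice of $\nabla$. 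I would then expand the Killing equation $\prescript{\tau}{}{\nabla}\cH=0$ into the Lie-derivative form $(\pounds_{\sfa(a)}\cH)(X,Y)=\cH(\sfa(\nabla_Xa),Y)+\cH(X,\sfa(\nabla_Ya))$ and evaluate it on horizontal arguments $X,Y\in\mathsf{\Gamma}(H)$. Because $\sfa(\nabla_\bullet a)$ lands in $T\cF\perp H$, both terms on the right vanish and the equation collapses to $(\pounds_{\sfa(a)}\cH)(X,Y)=0$, which is precisely holonomy invariance of $g_T$ expressed through the Bott connection. This yields the forward implication: Killing $\Rightarrow$ $g_T$ holonomy invariant $\Rightarrow$ (given $\cQ$ a manifold) the Riemannian submersion, with $g$ defined by $g(\sfq_*Y,\sfq_*Z)=\cH(Y,Z)$ on horizontal lifts.

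For the converse I would start from a Riemannian submersion, read off that $g_T$ is holonomy invariant, and run the previous computation backwards to recover the horizontal--horizontal component of $\prescript{\tau}{}{\nabla}\cH$. The remaining components are the genuinely delicate point of the argument: the mixed component forces the compatibility $\sfa(\nabla_Ya)=\mathrm{pr}_{T\cF}[Y,\sfa(a)]_{TM}$ for $Y\in\mathsf{\Gamma}(H)$, and the vertical component is the statement that the basic connection induced by $\nabla$ preserves the leafwise restriction $\cH|_{T\cF}$. The hard part will be showing that these $\nabla$-dependent components vanish automatically, and here I would use the flatness of $\prescript{\tau}{}{\nabla}$ --- equivalently that $\cH$ is a parallel section of the induced flat $A$-connection on $\midodot^2T^*M$, so that it is pinned down by holonomy invariance along the leaves --- together with the non-degeneracy of $\cH$ on $T\cF$, to propagate the vanishing from the transverse block to the full tensor. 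I expect this reconciliation of the full tensorial Killing identity with the purely transverse Riemannian-submersion condition, through the flat representation, to be where the real work lies; the remaining smoothness issues (that $\cQ=M/\cF$ is a manifold and that $g$ is smooth) are handled by the assumed regularity of the foliation and standard Riemannian foliation theory.
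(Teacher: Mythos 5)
Your reduction of the Killing condition to the geometry of the foliation is the right one, and for the transverse block it coincides with what the paper itself records (the paper defers the full proof to Kotov--Strobl and Marotta--Szabo, and only sketches the equivalence in the discussion following the proposition): writing $TM\simeq \Im(s_\perp)\oplus T\cF$ and using that $\sfa(\nabla_\bullet\, a)$ is tangent to the leaves and hence $\cH$-orthogonal to the horizontal distribution, the horizontal--horizontal component of $\prescript{\tau}{}{\nabla}\cH=0$ is $\nabla$-independent and reduces to $\pounds_{X_\parallel}g_\perp=0$, i.e.\ the bundle-like/Riemannian-foliation condition, which is equivalent to the existence of the Riemannian submersion once $\cQ=M/\cF$ is a manifold. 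Up to this point your argument is correct and essentially the intended one.

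The gap is in your converse. The mixed and vertical components of $\prescript{\tau}{}{\nabla}\cH=0$ do \emph{not} ``vanish automatically'', and flatness of $\prescript{\tau}{}{\nabla}$ cannot propagate the vanishing from the transverse block: knowing that part of a given tensor satisfies part of a parallelism equation says nothing about its remaining components. Concretely, on $M=\IR^2$ with the Euclidean metric and $\cF$ the foliation by horizontal lines, take $A=T\cF$ with inclusion anchor and $\nabla_{\partial_x}\partial_x=f\,\partial_x$ for a non-constant $f$; the quotient map onto $\IR$ is a Riemannian submersion, yet the vertical component of the Killing equation reads $0=(\pounds_{\partial_x}\cH)(\partial_x,\partial_x)=2f\,\cH(\partial_x,\partial_x)\neq0$. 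What is true, and what makes the proposition correct as intended, is that the mixed and vertical components are \emph{equations for $\nabla$} which can always be solved when the transverse condition holds: since $\sfa$ is injective and $\cH|_{T\cF}$ is positive definite, the requirement $\cH(\sfa(\nabla_Xa),Y)=(\pounds_{\sfa(a)}\cH)(X,Y)$ determines $\nabla_Xa$ uniquely for horizontal $X$, while the vertical component fixes the $\cH|_{T\cF}$-symmetric part of $\nabla_Xa$ for vertical $X$; one checks these prescriptions are compatible with the Leibniz rule in $a$. So the ``if'' direction holds with $\nabla$ existentially quantified (which is how Kotov--Strobl formulate the result), not for an arbitrary prescribed connection. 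Your proof should either construct this adapted connection explicitly or state the equivalence in that form.
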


This result allows us to understand under which circumstances the quotient implementing the section constraint exists. For further details and proofs see \cite{Kotov2014, Kotov2018,Marotta:2019eqc}.

\begin{remark} \label{submersGroupoid}
Proposition~\ref{prop:Killingsubmersion} can be interpreted globally from a Lie groupoid perspective~\cite{delHoyo2018}. The existence of a Riemannian submersion $\sfq \colon (M, \cH) \rightarrow (\cQ, g)$ is equivalent to the statement that the submersion groupoid $M \times_\cQ M\rightrightarrows \cQ$ is endowed with a $0$-metric, i.e.~a metric which is invariant under the canonical action of $M \times_\cQ M$ on its base manifold~$\cQ.$

Conversely, for a submersion $\sfq \colon M \rightarrow \cQ$ to be Riemannian it suffices to check for the existence of a $1$-metric on $M \times_\cQ M\rightrightarrows \cQ$, i.e. a metric on the manifold of arrows which is transverse with respect to the source map and for which the inversion map is an isometry, because it induces a $0$-metric. It is further shown in \cite{delHoyo2018} that any $0$-metric on $M \times_\cQ M\rightrightarrows\cQ$ can always be extended to a $1$-metric.
\end{remark}

In order to understand the condition \eqref{compAlgH} let us discuss further the case of a regularly foliated base Riemannian manifold. Choose an orthogonal splitting $ s_\perp$ of the canonical short exact sequence
\be \label{normalsequence}
0 \longrightarrow T\cF \xlongrightarrow{} TM \xlongrightarrow{} \nu(\cF) \longrightarrow 0 
\ee
where $\nu(\cF)$ is the normal bundle of the foliation. Then $TM \simeq \Im( s_\perp)\oplus T\cF$, and with respect to this splitting the Riemannian metric takes the form
\be \label{diagonalform}
\cH= \bigg(\begin{matrix} g_\perp & 0 \\ 0 & g_\parallel \end{matrix} \bigg) \ , 
\ee
where $g_\perp$ is a fibrewise metric on $\Im( s_\perp)$ and $g_\parallel$ is a fibrewise metric on $T\cF.$ Therefore the condition \eqref{compAlgH} is equivalent to~\cite{Kotov2014, Kotov2018}
\be \label{riemfol}
\pounds_{X_\parallel} g_\perp=0 \ ,
\ee
for all $X_{\parallel} \, \in \mathsf{\Gamma}(T\cF),$ which states a further equivalence with the requirement that $\cH$ is a \emph{bundle-like metric} on $M,$ see~\cite{Marotta:2019eqc}. 
Then \eqref{riemfol} makes $(M, \cF, g_\perp)$ into a Riemannian foliation. Clearly, when a Riemannian submersion $\sfq \colon (M, \cH) \rightarrow (\cQ, g)$ exists, then $g_\perp= \sfq^* g .$

\begin{remark}
A Riemannian foliation $(M, \cF, g_\perp)$ induces a $0$-metric on the holonomy groupoid ${\mathsf{Hol}}(\cF) \rightrightarrows M$. Again, any $0$-metric on ${\mathsf{Hol}}(\cF) \rightrightarrows M$ can be extended to a $1$-metric \cite{delHoyo2018}.  Conversely, as in Remark \ref{submersGroupoid}, the existence of a $1$-metric on  ${\mathsf{Hol}}(\cF) \rightrightarrows M$ implies the existence of a Riemannian foliation on~$(M, \cF).$
\end{remark}

For an almost para-Hermitian manifold $(M, K, \eta)$ endowed with a generalized metric $\cH,$ characterized by the pair $(g_+, b_+)$ according to Proposition~\ref{gbparaherm}, we assume the eigenbundle $L_-$ of $K$ is integrable, that is, $L_-= T\cF,$ where $\cF$ is the induced foliation. We further assume that the leaf space $\cQ=M/\cF$ is a manifold. Then the splitting $ s_\perp$ of \eqref{normalsequence} corresponds to the para-Hermitian structure given by the $B_+$-transformation of $K$ induced by the $2$-form $b_+$. Thus $\cH$ takes the diagonal form \eqref{diagonalform} with $g_\perp= g_+,$ that is, $(TM,K_{B_+},\eta,\cH)$ is a Born vector bundle on $M$. In this case the Killing Lie algebroid structure on $L_-=T\cF$ is characterized by the corresponding Bott connection on $TM,$ as discussed in \cite{Marotta:2019eqc}, whereby the Riemannian metric on $M$ is used to construct the corresponding connection on~$TM.$

On the other hand, any $B_+$-transformation preserves the foliation $L_-=T\cF$ and induces a splitting of $TM$ such that the transformed generalized metric $\cH_{B_+}$ has only a different $g_\parallel$ component. In other words, $B_+$-transformations preserve the Riemannian foliation $(M, \cF, g_+).$ Thus when the quotient map $\sfq:M \rightarrow \cQ$ is a Riemannian submersion, it remains the same for all the $B_+$-transformed generalized metrics. Similarly, any diffeomorphism $\phi \in {\mathsf{Diff}}(M)$ preserving the Riemannian foliation $(M, \cF, g_+)$ such that $\phi^* \eta=\eta$ induces a new para-Hermitian structure with a transformed generalized metric $\cH_\phi$, but which preserves the quotient; in other words, $(M, \cH_\phi)$ is still mapped into $(\cQ, g)$ with $g_+= \sfq^*g.$ 

The $B_+$-transformed subbundle $e^{B_+}(L_+)$ is no longer isotropic with respect to the fundamental 2-form $\omega$, and one has
\be \nonumber
\omega\big(e^{B_+}(X_+), e^{B_+}(Y_+)\big)= 2\,b_+(X_+, Y_+) \ ,
\ee
for all $X_+,Y_+\in\mathsf{\Gamma}(L_+)$.
If the 2-form $b_+$ is transversally invariant, i.e. $\pounds_{X_-} b_+ =0$ for all $X_-\in\mathsf{\Gamma}(L_-)$, then  the leaf space admits a 2-form $b \in \Omega^2(\cQ)$ such that $b_+= \sfq^* b.$
In other words, the leaf space $\cQ$ becomes a string target space whose background fields (in the NS--NS sector) are given by the pair~$(g,b).$ 

\begin{remark}\label{rem:genmetricpullback}
Following the treatment of Section~\ref{subsec:globalDFT}, a generalized metric on an almost para-Hermitian manifold can also be related to a generalized metric on a generalized tangent bundle. One shows that the vector bundle morphism \eqref{eq:pullbackmorphism} pulls back a generalized metric on a foliated almost
para-Hermitian manifold, with the foliation associated with the almost
para-complex structure, to a generalized metric on the generalized
tangent bundle $\mathbb{T}\cS$ constructed on any leaf  $\cS$
of the foliation $\cF$. 
\end{remark}

\begin{remark}\label{rem:leafspaceorb}
If we relax the requirement that the leaf space $\cQ=M/\cF$ is a manifold, then these constructions can be used to provide natural geometric realizations of the `non-geometric backgrounds' of string theory, see e.g.~\cite{Dabholkar2002,Hull2005,Shelton2005,Dabholkar2005,Hull2007}.
Following the standard terminology~\cite{Hull2005}, if the foliation defines a singular quotient, then the physical spacetime $\cQ$ is called a \emph{T-fold}; a typical class of examples are the orbifolds that arise from foliations with compact leaves and finite leaf holonomy group~\cite{Mrcun2003}. For a T-fold, the holonomy groupoid ${\mathsf{Hol}}(\cF) \rightrightarrows M$ is no longer a Lie subgroupoid of the pair groupoid $M\times M\rightrightarrows M$. On the other hand, in the non-integrable case, where there is no foliation of $M$ at all and hence no solution of the section constraint, there is no physical spacetime and $M$ is an \emph{essentially doubled space} in the terminology of~\cite{Hull:2019iuy}; see~\cite{Marotta:2019eqc} for further discussion and details, as well as many explicit examples. 
\end{remark}

\medskip

\subsection{Generalized T-Duality} ~\\[5pt]
Double field theory originated as an attempt to extend supergravity, which is described by generalized geometry, into a theory which is manifestly symmetric under the fundamental T-duality symmetry of string theory, that exchanges distinct physical spacetimes and background fields: in doubled geometry T-duality is realized as suitable diffeomorphisms of a doubled manifold. Let us now discuss how this fits into the treatment of the present paper. For this, we introduce a notion of T-duality for almost para-Hermitian manifolds endowed with a generalized metric, starting from the natural notion of symmetries of para-Hermitian vector bundles.

\begin{proposition}\label{prop:Oddvecbun}
Let $\vartheta \in {\sf Aut}(E)$ be an automorphism of a para-Hermitian vector bundle $(E, K, \eta)$ of rank $2d$ which is an isometry of the split signature metric $\eta.$ Then the para-Hermitian structure $(K, \eta)$ is mapped by $\vartheta$ into another para-Hermitian structure $(K_\vartheta,\eta),$ where $K_\vartheta= \vartheta^{-1}\circ K\circ\vartheta$. 
\end{proposition}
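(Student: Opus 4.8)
The plan is to verify directly that the conjugated endomorphism $K_\vartheta = \vartheta^{-1}\circ K\circ\vartheta$ satisfies each defining condition of a para-Hermitian structure on $E$ relative to the \emph{unchanged} metric $\eta$, using only that $\vartheta$ is a fibrewise isomorphism covering the identity and an isometry of $\eta$. Since $\eta$ itself is not altered by $\vartheta$, its split signature $(d,d)$ is automatic, so the entire burden reduces to checking that $K_\vartheta$ is a para-complex structure compatible with $\eta$ in the sense of Definition~\ref{parahermvector}.

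First I would confirm the para-complex axiom $K_\vartheta^2 = \unit_E$ by the telescoping computation $K_\vartheta^2 = \vartheta^{-1}K\vartheta\,\vartheta^{-1}K\vartheta = \vartheta^{-1}K^2\vartheta = \unit_E$, using $K^2=\unit_E$. The non-triviality condition $K_\vartheta\neq\pm\,\unit_E$ then follows because conjugation by the invertible $\vartheta$ cannot send $K$, which differs from $\pm\,\unit_E$, to $\pm\,\unit_E$: indeed $K_\vartheta=\pm\,\unit_E$ would force $K=\vartheta(\pm\,\unit_E)\vartheta^{-1}=\pm\,\unit_E$, a contradiction.

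Next I would identify the eigenbundles of $K_\vartheta$ in order to establish the equal-rank condition. A section $w$ lies in the $\pm\,1$-eigenbundle $L_\pm^\vartheta$ of $K_\vartheta$ precisely when $K\vartheta(w)=\pm\,\vartheta(w)$, that is, when $\vartheta(w)\in L_\pm$; hence $L_\pm^\vartheta=\vartheta^{-1}(L_\pm)$. As $\vartheta$ is a vector bundle automorphism, it restricts to a fibrewise isomorphism from $L_\pm^\vartheta$ onto $L_\pm$, so each eigenbundle of $K_\vartheta$ has rank $d$, matching the original splitting $E=L_+\oplus L_-$. This step, which amounts to transporting the eigenbundle decomposition through $\vartheta$, is the only place where the geometric content genuinely enters, though it too is routine.

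Finally, the compatibility condition is verified by chaining the isometry property of $\vartheta^{-1}$ (which holds since $\vartheta$ is an isometry), the compatibility of the original pair $(K,\eta)$, and the isometry property of $\vartheta$:
\[
\eta\big(K_\vartheta(e_1),K_\vartheta(e_2)\big)=\eta\big(K\vartheta(e_1),K\vartheta(e_2)\big)=-\,\eta\big(\vartheta(e_1),\vartheta(e_2)\big)=-\,\eta(e_1,e_2),
\]
for all $e_1,e_2\in\mathsf{\Gamma}(E)$. Collecting these verifications shows that $(K_\vartheta,\eta)$ is a para-Hermitian structure on $E$. I do not anticipate any real obstacle: every step is a short algebraic manipulation, and the statement is in essence the observation that the isometry group of $(E,\eta)$ acts by conjugation on the set of $\eta$-compatible para-complex structures.
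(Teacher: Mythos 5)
Your proof is correct and complete: the paper states Proposition~\ref{prop:Oddvecbun} without proof, and your verification (conjugation preserves the involution property and non-triviality, $\vartheta$ carries the eigenbundles $L_\pm$ isomorphically onto those of $K_\vartheta$ so the equal-rank condition holds, and the anti-compatibility with $\eta$ follows by chaining the isometry property of $\vartheta^{\pm1}$ with that of $(K,\eta)$) is exactly the routine argument the authors leave implicit. No gaps.
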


\begin{remark}\label{rem:Oddgenmetric}
The automorphisms of Proposition~\ref{prop:Oddvecbun} form a subgroup of ${\sf Aut}(E)$ denoted by ${\sf O}(d,d)(E).$ Any element $\vartheta \in{\sf O}(d,d)(E)$ maps a generalized metric $\cH$ on $(E,K,\eta)$ into another generalized metric $\cH_\vartheta$ on $(E,K_\vartheta,\eta)$.
\end{remark}

In the applications to doubled geometry, we take $E=TM,$ and write ${\sf O}(d,d)(M)$ for ${\sf O}(d,d)(E)$. In this case
the transformations of Proposition~\ref{prop:Oddvecbun} have been identified as \emph{generalized T-dualities} in~\cite{Marotta:2019eqc}, which encompass many known examples, including non-abelian T-duality transformations. They naturally induce changes of polarization $(K,\eta)$ for solutions of the section constraint on a doubled manifold $(M,\eta)$. The doubled geometry viewpoint allows for an interpretation of the usual notion of T-duality by establishing a correspondence between quotients of a doubled manifold with respect to different foliations. 

For this, let $(M,\eta)$ be a foliated doubled manifold endowed with an almost para-Hermitian structure $(K, \eta)$ and a generalized metric $\cH$ such that $(M, \cF, g_+)$ is a Riemannian foliation, where $L_-=T\cF$ is the integrable $-1$-eigenbundle of $K$ and $(g_+, b_+)$ is the pair identifying $\cH$ in the splitting of the tangent bundle $TM$ given by $K$. Then a T-duality transformation is given by an $\eta$-isometric diffeomorphism $\phi$ of $M$ that maps the triple $(K, \eta, \cH)$ into $(K_\phi, \eta, \cH_\phi)$, and the foliation $\cF$ into a different foliation $\cF_\phi.$ We require that $(M, \cF_\phi, {g_+}_\phi)$ be a Riemannian foliation, where $({g_+}_\phi, {b_+}_\phi)$ is the pair identifying the generalized metric $\cH_\phi$ in the splitting given by $K_\phi$. This construction is depicted by the diagram
\begin{center}
\begin{tikzcd}
(M, \cF, \cH) \arrow[r, "\phi "] \arrow[d, "\sfq "'] & (M, \cF_\phi, \cH_\phi) \arrow[d, "\sfq_\phi "] \\
(\cQ, g_+) \arrow[r, dashed, swap, "\ccT "] & (\cQ_\phi, {g_+}_\phi)
\end{tikzcd}
\end{center}
where the dashed arrow (indicatively) defines the T-duality $\ccT$ from the leaf space $\cQ=M/\cF$ to the leaf space $\cQ_\phi=M/\cF_\phi$ via this diagram. 

Here we do not demand that the leaf spaces be endowed with smooth structures. For instance, this construction makes sense when the leaf spaces admit an orbifold structure, see Remark~\ref{rem:leafspaceorb}. Thus it may happen that a T-duality transformation takes a geometric background, with smooth leaf space $\cQ$, to a T-fold. It may also happen that the eigenbundle ${L_-}_\phi$ of $K_\phi$ is not integrable; this corresponds to a generalized T-duality which sends a geometric background to an essentially doubled space. These are the ways in which the prototypical non-geometric backgrounds of string theory arise (see e.g.~\cite{Szabo2018,Plauschinn:2018wbo} for reviews).

Notice that diffeomorphisms $\phi$ which preserve the Riemannian foliation $(M, \cF, g_+)$ give trivial T-duality transformations $\ccT$. We can also extend this constuction beyond diffeomorphisms of $M$ to more general automorphisms $\vartheta\in{\sf O}(d,d)(M)$ of the tangent bundle $TM$. In particular, $B_+$-tranformations preserve the foliation, i.e. the eigenbundle $L_-,$ and so give trivial T-dualities as well.

\begin{remark}
There is a natural equivalence relation on foliated manifolds called `Haudorff Morita equivalence' that preserves regular foliations and induces Morita equivalent holonomy groupoids, see~\cite{Garmendia2018}. Applying this notion to the case at hand, two foliations $\cF$ and $\cF'$ of $M$ are Hausdorff Morita equivalent if there exists a manifold $P$ and two surjective submersions $\pi,\pi':P\to M$ with connected fibres such that $\pi^{-1}\cF=\pi'{}^{-1}\cF'$:
\begin{center}
\begin{tikzcd}
 & P \arrow[dl,"\pi",swap] \arrow[dr,"\pi'"] & \\
 (M,\cF) &  & (M,\cF')
\end{tikzcd}
\end{center}
Then the leaf spaces $\cQ=M/\cF$ and $\cQ'=M/\cF'$ are homeomorphic, and the transverse geometry at corresponding leaves is the same.

The construction above is equivalent to saying that a T-duality transformation is given by two Hausdorff Morita equivalent Riemannian foliations where the equivalence classes are induced by restricting to $\eta$-isometric diffeomorphisms. 
It might be argued that diffeomorphisms which preserve a Riemannian foliation form a subclass of the class of Hausdorff Morita equivalent foliations. Thus a chain of T-duality transformations might be given by different Hausdorff Morita equivalent subclasses of Riemannian foliations inside a Hausdorff Morita equivalence class of foliations for the doubled manifold $(M, \eta).$ Then different Hausdorff Morita equivalence classes correspond to different T-duality chains.
\end{remark}


\bibliographystyle{ieeetr}
\bibliography{bibprova1}



\end{document}